\newcommand{\tabincell}[2]{\begin{tabular}{@{}#1@{}}#2\end{tabular}}
\newtheorem{theorem}{Theorem}
\newtheorem{remark}{Remark}
\newtheorem{Corollary}{Corollary}
\newtheorem{Definition}{Definition}
\newtheorem{lemma}{Lemma}
\begin{document}
%
\title{Coded Distributed Computing with Pre-set  Data Placement and Output Functions Assignment}
%
%
%

\author{Yuhan~Wang,~\IEEEmembership{Student Member,~IEEE,}
        and~Youlong~Wu,~\IEEEmembership{Member,~IEEE}
\thanks{This work was done when Yuhan Wang was with the School of Information Science and Technology, ShanghaiTech University. The work of Youlong Wu was in part supported by the National Nature Science Foundation of China (NSFC) under Grant 61901267. This paper was in part presented at the IEEE Global Communications Conference (Globecom), Rio de Janeiro, Brazil, Dec 2022\cite{GlobalCom22_Coded_MapReduce_Pre_set}. \emph{(Corresponding author: Youlong Wu.)}}
\thanks{Yuhan Wang is with the Department of Information Engineering, The Chinese University of Hong Kong, Hong Kong (e-mail: wy023@ie.cuhk.edu.hk).}
\thanks{Youlong Wu is with the School of Information Science and Technology, ShanghaiTech University, 201210 Shanghai, China. (e-mail: wuyl1@shanghaitech.edu.cn).}
}
\maketitle

\begin{abstract}
  Coded distributed computing can reduce the communication load for distributed computing systems by introducing redundant computation and creating multicasting opportunities.  However, the existing schemes require delicate data placement and output function assignment, which is not feasible when distributed nodes fetch data without the orchestration of a master node.  In this paper, we consider the general systems where the data placement and output function assignment are arbitrary but pre-set. We propose two coded computing schemes,  One-shot Coded Transmission (OSCT) and Few-shot Coded Transmission (FSCT), to reduce the communication load. Both schemes first group the nodes into clusters and divide the transmission of each cluster into multiple rounds, and then design coded transmission in each round to maximize the multicast gain.  The key difference between OSCT and FSCT is that the former uses a one-shot transmission where each encoded message can be decoded independently by the intended nodes, while the latter allows each node to jointly decode multiple received symbols to achieve potentially larger multicast gains. Furthermore, based on the lower bound proposed by Yu \emph{et al.}, we derive sufficient conditions for the optimality of OSCT and FSCT, respectively. This not only recovers the existing optimality results but also includes some cases where our schemes are optimal while others are not.
\end{abstract}

\begin{IEEEkeywords}
  Distributed computing, coding, heterogeneous.
\end{IEEEkeywords}

%
\IEEEpeerreviewmaketitle

\section{Introduction}
%
%
%
%
\IEEEPARstart{D}{istributed} computing such as MapReduce \cite{Mapreduce} and Spark \cite{Spark}  has been widely applied in practice, as it can exploit distributed storage and computing resources to improve reliability and reduce computation latency. In the MapReduce-type framework, the computation task is decomposed into three phases: Map phase, Shuffle phase, and Reduce phase. In the Map phase, a master server splits the input files into multiple subfiles and assigns the subfiles to distributed computing nodes to generate intermediate values (IVs). Then, the IVs are exchanged among the distributed computing nodes during the Shuffle phase. In the Reduce phase, the computing nodes use the obtained IVs to compute the output functions. For large-scale computation tasks, the Map phase may generate a massive number of IVs, which in turn causes the communication bottleneck in the Shuffle phase, i.e., the communication latency could occupy a large proportion of the total execution time. For instance,  it is observed that 70\% of the overall job execution time is spent during the Shuffle phase when running a self-Join\cite{Selfjoin} application on an Amazon EC2 cluster\cite{EC2}.

Recently, many coded computing techniques have been proposed to mitigate the communication bottleneck for distributed computing systems. In particular,  Li \emph{et al.} proposed a coded distributed computing (CDC) scheme for MapReduce-type systems to reduce the communication load  \cite{CDC}. The key idea of CDC  is to introduce redundant computation in the Map phase and create multicast opportunities in the Shuffle phase.  A theoretical tradeoff between computation load and communication load was also established in \cite{CDC} which shows that increasing the computation load in the Map phase by a factor $r$ (representing the total number of computed files by all nodes normalized by the total number of files) can reduce the communication load in the Shuffle phase by the same factor $r$.  Due to its promising performance and theoretical optimality, the CDC technique has attracted wide interest and has been applied in many applications such as Artificial Intelligence (AI), the Internet of Things, and edge computing (see a survey paper in \cite{CDCSurvey}). 
For instance, the performance of CDC was evaluated on the TeraSort\cite{Tera} on Amazon EC2 clusters\cite{EC2}, exhibiting a speedup factor ranging from approximately 1.97 to 3.39 times faster than the conventional uncoded TeraSort algorithm\cite{CDC}.
In \cite{CodedMTL} the CDC-based approach has been applied to distributed multi-task learning (MTL) systems, and experiments on the MNIST dataset show that the total execution time is reduced by about 45\% compared to the traditional uncoded method when computation load $r=3$ and the number of tasks $m=4$.  
In \cite{FLCD} it proposed a flexible and low-complexity design for the CDC scheme.

The CDC technique for various homogenous systems with equal computing or storage capabilities has been extensively investigated. For instance, the authors in \cite{AltCurve} established an alternative tradeoff between computation load and communication load under a predefined storage constraint. Also, different approaches have been proposed to further reduce the communication cost of CDC. Works in \cite{FileAll1,FileAll2,Locality,Resolve,PDA1,PDA2,PDA3} focused on the data placement of the system or the construction of the coded messages, and \cite{Compress1,SGD,Random} applied compression and randomization techniques to design the coded shuffling strategy. In particular, the authors in \cite{FileAll2} presented an appropriate resolvable design based on linear error correcting codes to avoid splitting the input file into too tiny pieces, and this resolvable design based scheme was used in \cite{Resolve} to solve the limitation of the compressed CDC scheme\cite{Compress1}. By applying the concept of the placement delivery array (PDA) to the distributed computation framework, \cite{PDA1,PDA2,PDA3} further reduced the number of subfiles and the number of output functions. By combining the compression with coding techniques, several IVs of a single computation task can be compressed into a single pre-combined value\cite{Compress1}, and the same idea can also be extended to distributed machine learning, such as Quantized SGD\cite{SGD}, which is a compression technique used to reduce communication bandwidth during the gradient updates between the computing nodes. 

Note that in practical distributing systems the computing nodes could have different storage and computational capabilities. To address the heterogeneous problem, several works have considered the heterogeneous distributed computing (HetDC) system and proposed coded computing schemes to reduce the communication load \cite{ThreeWorker,JiNonCa,Ji,Tao}. In \cite{ThreeWorker}, the authors characterized the information-theoretically minimum communication load for the 3-node system with arbitrary storage size at each node. In \cite{JiNonCa}, the authors proposed a combinatorial design that operates on non-cascaded heterogeneous networks where nodes have varying storage and computing capabilities but each Reduce function is computed exactly once. A cascaded HetDC system was considered in \cite{Ji}, where each Reduce function is allowed to be computed at multiple nodes. The communication load obtained in \cite{Ji} is optimal within a constant factor given their proposed mapping strategies. The authors in \cite{Tao} characterized the upper bound of the optimal communication load as a linear programming problem, and jointly designed the data placement, Reduce function allocation, and data shuffling strategy to reduce the communication load.  However, all existing schemes \cite{ThreeWorker,JiNonCa,Ji,Tao} require delicate data placement and output function assignment (i.e., the data stored by each node and the set of output functions computed by each node both are design parameters), which is not feasible in some practical scenarios when the nodes fetch data without the orchestration of a central server, or when each node autonomously determines its desired output functions. For instance, in \cite{Decentralized'15, Scalable} each user randomly and independently fetches a subset of all data from a central server, and in distributed multi-task learning \cite{wang2016distributed,CodedMTL} the users wish to learn their own model depending on its task goal.  Until now, there exists no work considering the  HetDC systems with pre-set data placement and output function assignment, i.e., the data placement and output function assignment are system parameters that cannot be designed by schemes. 

\subsection{Main Contributions}

In this paper, we consider the {general} HetDC system with \emph{arbitrary} but \emph{pre-set} data placement and Reduce function assignment (general HetDC system in short). We aim to find coded distributed computing schemes that are applicable to arbitrary HetDC systems and characterize the theoretical computation-communication tradeoff. The main contributions of this paper are summarized as follows:

 \begin{itemize}
 \item  To the best of our knowledge, our work is the first attempt to study the general HetDC system with arbitrary and pre-set data placement and Reduce function assignment, with no precedent research in either coded caching or code computing scope. This is a challenging problem because when the data placement at nodes is asymmetric and different Reduce functions are computed by different numbers of nodes, the IVs to be shuffled can be distributed in an extremely irregular manner, prohibiting designing the data shuffling strategy with the help of input files and Reduce functions allocations. This makes all previous schemes in \cite{ThreeWorker,JiNonCa,Ji,Tao} infeasible for the general HetDC system (see detailed discussions in Section \ref{SecComparison}).    To address this challenge, we first characterize the heterogeneities of the HetDC system from the perspective of computing nodes, input files, and IVs, respectively. We introduced the concept of deficit ratio to study the heterogeneous feature of a distributed system on the IV level, while previous definitions only focus on the file or node levels.
 
 \item  We propose two CDC approaches for the general HetDC system, namely \emph{One-shot Coded Transmission (OSCT)} and \emph{Few-shot Coded Transmission (FSCT)}. By grouping nodes into clusters with different sizes and categorizing IVs, OSCT delicately designs the size of the encoded messages and partition IVs in a \emph{non-uniform} manner by solving an optimization problem that maximizes the multicast opportunities, and sends each message block in a one-shot manner where each message block can be decoded independently by the intended nodes. To further exploit the gain of multicasting, FSCT splits IVs into smaller pieces and carefully designs the number of linear combinations to be sent by each node based on our newly proposed \emph{deficit condition} and \emph{feasible condition}, and then sends each message block in a few-shot manner where each node needs to jointly decode multiple received message blocks to obtain its desired IVs. 

 \item With the lower bound proposed in \cite{Howto}, we establish optimality results on the fundamental computation-communication tradeoff for several types of HetDC systems, and derive the sufficient conditions for the optimality of OSCT and FSCT, respectively. In particular, we associate the optimality of OSCT with an optimization problem and prove that OSCT is optimal if the optimal value of the objective function is zero. For the FSCT scheme, we associate its optimality with the concept of \emph{deficit ratio} for each computing node, which is the ratio between the number of IVs that the node desires and the number of IVs that the node locally knows. Then we prove that if the deficit condition and feasible condition are satisfied for all nodes in each cluster, the system is well-balanced and FSCT is optimal. 
This not only recovers the existing optimality results known for homogeneous systems \cite{CDC} and 3-node systems \cite{ThreeWorker},  but also includes some cases where our schemes are optimal while other existing methods are not.

 \end{itemize}

\subsection{Comparison with Coded Caching Schemes} \label{CDCandCaching}

With the pre-set file allocation and Reduce function assignment, the distributed computing problem, in essence, can be viewed as a special communication problem with (heterogeneous) message cooperation and side information.
The CDC-based approaches are closely related to the coded caching scheme that was first proposed by Maddah-Ali and Niesen \cite{CodedCaching}, as they both involve repetitive data placement and coded multicasting.  Our coded computing schemes can also be viewed as the device-to-device coded caching problem where each user has an arbitrary but pre-set data placement and requires an arbitrary set of files. To the best of our knowledge, there exists no coded caching scheme proposed for this setup.  In the following, we compare the related works in coded caching under heterogeneous setups with our schemes from the perspective of problem formulation and algorithm design. 

The coded caching problem with non-uniform file popularity was considered in \cite{CachingNonuniform,CachingMutilevelPop,CachingArbitraryPop,CachingUncodedOptimi}. In this setup,  one server has a library of $N$ files and each user independently requests file $n\in [N]$ with probability $p_n$. Authors in \cite{CachingNonuniform} proposed a \emph{file gouping} method where the files were partitioned into groups with uniform popularities to preserve the symmetry within each group. Following the idea of file grouping, a specific multi-level file popularity model was considered in \cite{CachingMutilevelPop} and an order-optimal scheme was derived. The work in \cite{CachingArbitraryPop} considered a more general system with arbitrary popularity distribution. The scheme divided the files into 3 groups using a simple threshold and attained a constant-factor performance gap under all popularity distributions. 
An optimization approach was proposed in \cite{CachingUncodedOptimi}, where each file was partitioned into $2^K$ non-overlapping subfiles and the placement strategy was optimized to minimize the average load. This scheme focused on the design of file partitions, while ours focus on the design of coded messages. Note that among all literature mentioned above, each node requires only one file at a time and their cache sizes were uniform, while in our work the data placement and the number of functions computed by each node both are arbitrary and could be non-uniform.

The coded caching problem with the heterogeneity of user demand has been studied in \cite{CachingOverlappingDemand,CachingHeterUserProfile}. In their setups,  users have distinct distributions of demands, and the set of files that different users may desire can possibly overlap. However,   they assumed that each user requests only one file at a time. There are also some works considering the case when users demand multiple files at each time\cite{CachingMultiDemand}\cite{CachingMultiRandomRequ}. Specifically, authors in \cite{CachingMultiDemand} developed a new technique for characterizing information-theoretic lower bounds in both centralized content delivery and device-to-device distributed delivery. In \cite{CachingMultiRandomRequ}, an order-optimal scheme based on the random vector (packetized) caching placement and multiple group cast index coding was proposed. However, the works above considered the homogeneous case where each user has a cache of size $M$ {files} and requests $L\geq 1$ files independently according to common demand distribution, while we allow arbitrary storage and computation capabilities for each node.

For the coded caching setup with distinct file size, \cite{CachingDistinctFileSize} categorized all files into $T$ types by their size, and each user $k$ will cache $q_\ell F_\ell$ bits of every file in $\ell$-th type. By carefully deciding the caching parameter $q_l$ based on the number of files with the same type, the proposed scheme achieved optimal load within a constant gap to the proposed lower bound. For the setup with distinct cache sizes, \cite{CachingHeterCachSizeOptimi} considered a centralized system in which cache sizes of different users were not necessarily equal. The authors proposed an achievable scheme optimized by solving a linear program, where each file was split into $2^K$ subfiles and the sizes of partitions were determined by an optimization problem. In \cite{CachingCachFileSize}, the authors considered a more general setting with arbitrary file and cache size. By first proposing a parameter-based decentralized coded caching scheme and then optimizing the caching parameter, this work provided an optimization-based solution and an iterative algorithm. All aforementioned works considered heterogeneous file sizes or storage capabilities, but the data placement is still free to design, differing from our setting that has arbitrary and present data placement. Meanwhile, the demand for each node is homogeneous, where the request for each node is a single file, while in our setting each node could have multiple requests (Reduce functions), and the demand for every node is arbitrary.

Some works have considered multiple types of non-uniform parameters, such as\cite{CachingFullHeter2User,CachingOptimi}. In \cite{CachingFullHeter2User}, the authors studied fully heterogeneous cases under distinct file sizes, unequal cache size, and user-dependent heterogeneous file popularity. They characterized the exact capacity for a two-file two-user system, while we focus on the coding strategy design in the general case with an arbitrary number of users. In \cite{CachingOptimi}, an optimization framework to design caching schemes with distinct file sizes, unequal cache size, and non-uniform file popularity was proposed. This framework is based on exquisitely designing data placement, which differs from our assumption of pre-set data and output function assignment.

The rest of this paper is organized as follows. Section \ref{Sec:Model} introduces our problem formulation, including the system model and useful definitions. Section \ref{SecToyExample} gives a toy example to elaborate the definitions and provides some intuitions on our two schemes: OSCT and FSCT. Section \ref{MainResults} presents the main results of this paper. Sections \ref{SchOneshot} and \ref{SchFewShot} describe the general OSCT and FSCT schemes, respectively, as well as more examples. Section \ref{Subsec_Numerical} provides some numerical results of OSCT and FSCT,  as well as some discussions. Section \ref{Conclusion} concludes this paper.
 
 \textbf{Notations:} Let $[n]$ denote the set $\{1,2,\cdots,n\}$ for some $n\in \mathbb{Z}^+$, where $\mathbb{Z}^+$ is the set of positive integers. $\mathbb{Q}^+$ denotes the set of positive rational numbers. For a set $\mathcal{S}$, let $|\mathcal S|$ denote its cardinality. For a vector $\boldsymbol{v}$, let $\boldsymbol{v}[i]$ denote its $i_{th}$ element. $(x)^+$ denotes the positive part of $x$, i.e., $(x)^+=x$ for $x\geq 0$ and $(x)^+=0$ for $x<0$. Given an event $\mathscr{E}$, we use $\mathds{1}(\mathscr{E})$ as an indicator, i.e., equals to 1 if the event happens, and 0 otherwise.

 \section{Problem Formulation}\label{Sec:Model}
 \subsection{System Model and Definitions}
 Consider a MapReduce-type system that aims to compute $Q$ output functions from $N$ input files with $K$ distributed nodes, for some positive integers $K,Q,N$. The $N$ input files are denoted by ${w_1,\dots,w_N}\in\mathbb{F}_{2^{F}}$, and the $Q$ output functions are denoted by $\phi_{1},\dots,\phi_{Q}$, where $\phi_{q}$: $(\mathbb{F}_{2^{F}})^{N} \rightarrow \mathbb{F}_{2^{B}}$, $q \in [Q]$, maps all input files into a ${B}$-bit output value $u_q = \phi_{q}(w_{1},\dots,w_{N}) \in \mathbb{F}_{2^{B}}$, for some ${B} \in \mathbb{Z}^+$.
 
  Node $k$, for $k\in [K]$, is required to map a subset of input files and compute a subset of output functions. Denote the indices of input files mapped by Node $k$ as $\mathcal{M}_k\subseteq [N]$ and the indices of output functions computed by $k$ as $\mathcal{W}_k\subseteq [Q]$. To ensure  that all input files are stored and all Reduce functions are computed, we have $\bigcup_{k\in [K]}\mathcal{M}_k =[N]$ and $\bigcup_{k\in [K]}\mathcal{W}_k = [Q]$. 
 
 Unlike the existing CDC-based works (e.g., \cite{CDC,Ji,Tao}) in which the data placement $\mathcal{M}=\{\mathcal{M}_k\}_{k=1}^K$ and Reduce functions assignment $\mathcal{W}=\{\mathcal{W}_k\}_{k=1}^K$ are to be designed by schemes, here we consider the general HetDC system where
  \begin{itemize}
  \item $\{\mathcal{M}_k\}_{k=1}^K$ and $\{\mathcal{W}_k\}_{k=1}^K$ are \emph{pre-set}.
  \item $\{|\mathcal{M}_k|\}_{k=1}^K$ and $\{|\mathcal{W}_k|\}_{k=1}^K$ are \emph{arbitrary} and \emph{unnecessarily equal}, i.e., it is possible to have $|\mathcal{M}_i|\neq|\mathcal{M}_j|, |\mathcal{W}_i|\neq|\mathcal{W}_j|$, for some $i,j\in[K],i\neq j$.
  \end{itemize}
  In the MapReduce framework, there are $N$ Map functions $\Vec{g}_{n} = (g_{1,n},\dots,g_{Q,n}):\,\mathbb{F}_{2^{F}} \rightarrow (\mathbb{F}_{2^{V}})^{Q},$ for each file $n \in [N]$, and each $g_{q,n}$ maps the corresponding file into $Q$ \emph{intermediate values (IVs)} $v_{q,n} = g_{q,n}(w_{n}) \in \mathbb{F}_{2^{V}}$.  Here $V$ denotes the bit-length of a single IV. Meanwhile, for each output function $\phi_q$, there is a corresponding Reduce function $h_q(v_{q,1},\dots,v_{q,N}):(\mathbb{F}_{2^{V}})^{N} \rightarrow \mathbb{F}_{2^{B}}$, which maps the corresponding IVs from all input files into the output value $u_{q}$, for $q\in[Q]$. Combining the Map and Reduce functions, we have $u_q = \phi_{q} (w_{1},\!\ldots\!,w_{N}\!) \!= \! h_{q}\!\left(g_{q,1}(w_{1}),\!\ldots\!,g_{q,{N}}(w_{N})\right).$
 
 The system consists of three phases: \emph{Map}, \emph{Shuffle} and \emph{Reduce}.
 
 \subsubsection{Map Phase} Node $k$, for $k\in [K]$, uses the Map functions to map its stored input files, and obtains local IVs $\{v_{q,n}:q\in [Q], n\in \mathcal{M}_k\}$.

 \subsubsection{Shuffle Phase} 
 Based on the local mapped IVs, Node $k$, for $k\in [K]$, generates an $\ell_k$-bit message $X_k$ using some encoding functions $e_k:\left(\mathbb{F}_{2^V}\right)^{Q|\mathcal{M}_k|}\rightarrow \mathbb{F}_{2^{\ell_k}}$ such that
 \begin{align}
  X_k = e_k(\{v_{q,n}:q\in [Q],n\in\mathcal{M}_k\}).
 \end{align} 
 
 Then Node $k$ broadcasts $X_k$ to other nodes. After the Shuffle phase, each node receives messages $X_1,X_2,\cdots,X_K$.
 
 \subsubsection{Reduce Phase} In this phase, each node first recovers all required IVs and then computes the assigned Reduce functions. More specifically, Node $k$ recover its desired IVs based on the received message $X_1,X_2,\cdots,X_K$ and its local IVs using the decoding function $d_k:\mathbb{F}_{2^{\ell_1}}\times\mathbb{F}_{2^{\ell_2}}\times\cdots\times\mathbb{F}_{2^{\ell_K}}\times\left(\mathbb{F}_{2^V}\right)^{Q|\mathcal{M}_k|}\rightarrow\left(\mathbb{F}_{2^V}\right)^{N|\mathcal{W}_k|}$: 
 \begin{align*}
  &\{v_{q,n}:n\in[N],q\in\mathcal{W}_k\}=d_k(X_1,X_2,\cdots,X_K,\{v_{q,n}:q\in [Q],n\in\mathcal{M}_k\}).
 \end{align*}
 
 Then Node $k$ computes the assigned Reduce functions and get the output values $u_{q}=h_{q}(v_{q,1},\dots,v_{q,N})$, for $q\in \mathcal{W}_k$.
 
 \begin{figure}
   \centering
   \includegraphics[width=8.6cm]{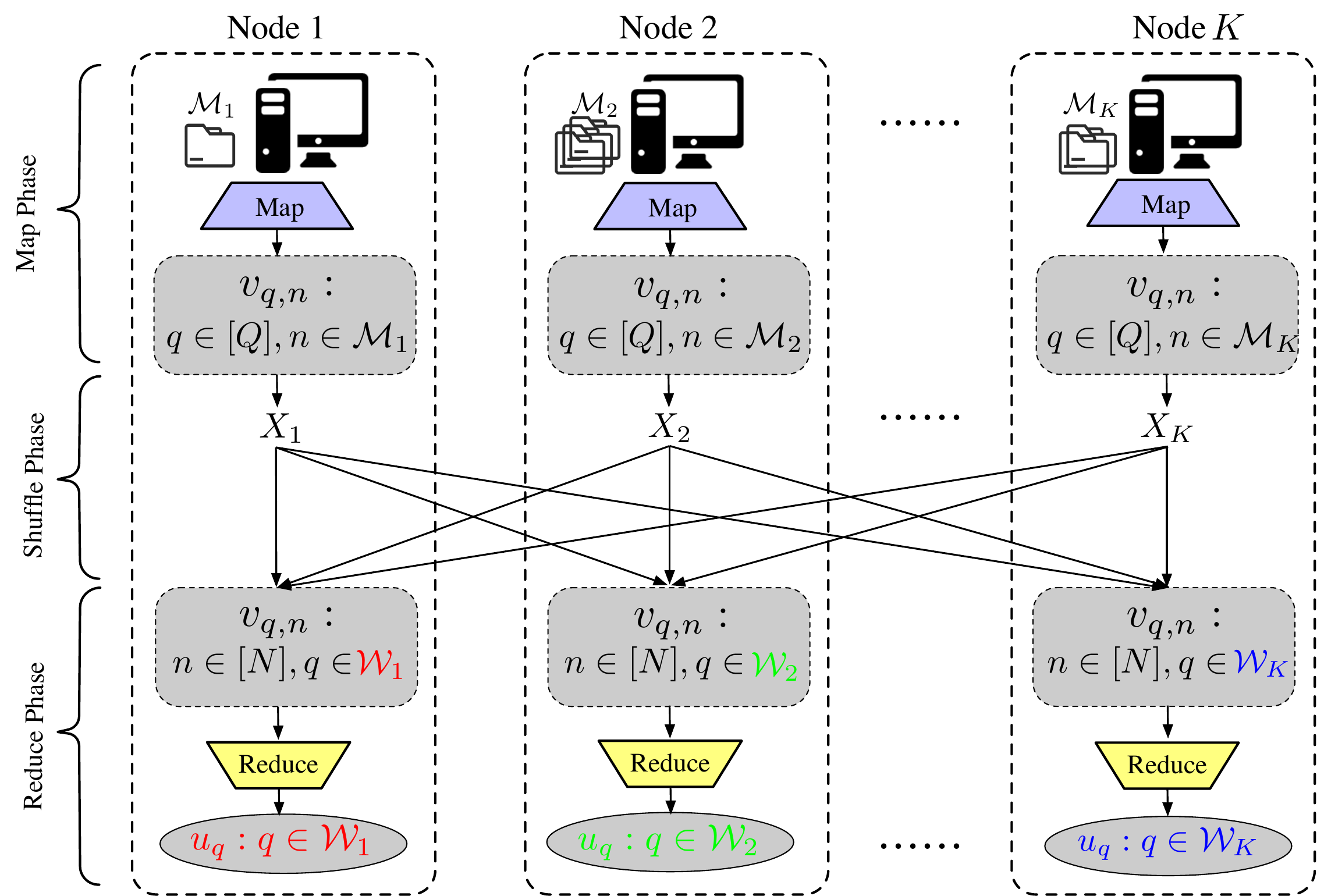}
   \caption{Illustration of MapReduce-type system, where $\mathcal{M}_k\subseteq [N]$ denotes the indices of input files mapped by Node $k$ and $\mathcal{W}_k\subseteq [Q]$ denotes the indices of output functions computed by $k$. $X_k$ is the encoded message generated by Node $k$ during the Shuffle Phase. The IVs are shown in the gray rectangles while the gray ellipses represent the output values.}\label{system_model}
 \end{figure}

 \begin{Definition}[Computation load and Communication load\cite{CDC}]
   The computation load is defined as the number of files mapped across the $K$ nodes normalized by the total number of files, i.e. $r \triangleq {\sum_{k\in[K]}|\mathcal{M}_{k}|}/{N}$.
   The communication load is defined as the total number of bits {(normalized by $NQV$)} communicated by the $K$ nodes in the Shuffle phase, i.e.,
   $L\triangleq\frac{\sum\limits_{k \in[K] }\ell_{k}}{NQV}$. \label{Comload}
 \end{Definition}
 
  We say that a communication load $L(\mathcal{M},\mathcal{W})$ is \emph{feasible} under pre-set data placement $\mathcal{M}$ and Reduce function assignment $\mathcal{W}$, if for any $\epsilon>0$, we can find a set of encoding functions and decoding functions that achieve the communication load $\tilde{L}(\mathcal{M},\mathcal{W})$ s.t. $|L(\mathcal{M},\mathcal{W})-\tilde{L}(\mathcal{M},\mathcal{W})|\leq \epsilon$ and Node $k\in[K]$ can successfully compute all the desired output functions $\{u_q:q\in\mathcal{W}_k\}$.
 
 \begin{Definition}
  The optimal communication load $L^*$ is defined as 
  \begin{align*}
   L^*(\mathcal{M},\mathcal{W}) = \inf \{L(\mathcal{M},\mathcal{W}):L(\mathcal{M},\mathcal{W})\ \textnormal{is feasible}\}.
  \end{align*}
 \end{Definition}

 Given the pre-set data placement and Reduce function assignment, we present two homogeneous setups.
 
 \begin{Definition}[Homogeneous System\cite{CDC}]\label{homo}
  A MapReduce system with pre-set data placement $\mathcal{M}$ and Reduce function assignment $\mathcal{W}$ is called homogeneous if it satisfies the following conditions: 
  (1) the data placement is symmetric where $N$ input files are evenly partitioned into $\tbinom{K}{r}$ disjoint batches $\{\mathcal{B}_{\mathcal{T}}: \mathcal{T}\subset [K],|\mathcal{T}|=r\}$ for some $r\in[K]$, where each batch of files $\mathcal{B}_{\mathcal{T}}$ is mapped by Node $k$ if $k\in\mathcal{T}$, so every $r$ nodes will jointly map ${N}/{\tbinom{K}{r}}$ files; 
  (2) $Q$ Reduce functions are evenly partitioned into $\tbinom{K}{s}$ disjoint batches $\{\mathcal{D}_{\mathcal{P}}:\mathcal{P}\subset [K],|\mathcal{P}|=s\}$ for  some $s\in[K]$, where each batch of Reduce functions $\mathcal{D}_{\mathcal{P}}$ is computed by Node $k$ if $k\in\mathcal{P}$. Thus, every subset of $s$ node will jointly reduce ${Q}/{\tbinom{K}{s}}$ functions.
 \end{Definition}

 \begin{Definition}[Semi-Homogeneous System] \label{semi-sys} A MapReduce system with data placement $\mathcal{M}$ and Reduce function assignment $\mathcal{W}$ is called semi-homogeneous system if it satisfies the following conditions: 
 (1) the data placement is the same as the homogeneous system in Definition \ref{homo}; (2) the Reduce function assignment is multi-level symmetric, i.e., the $Q$ Reduce functions can be partitioned into $K$ disjoint sets: $\mathcal{Q}_1,\cdots,\mathcal{Q}_{K}$ and each set of Reduce functions $\mathcal{Q}_s$ is evenly partitioned into $\tbinom{K}{s}$ disjoint batches $\{\mathcal{D}_{\mathcal{P}}:\mathcal{P}\subset [K],|\mathcal{P}|=s\}$ for  some $s\in[K]$, where each batch of Reduce functions $\mathcal{D}_{\mathcal{P}}$ is computed by Node $k$ if $k\in\mathcal{P}$. 
 \end{Definition}
 
 For the homogeneous system, the pre-set data placement and Reduce function assignment follow the symmetric design of cascaded distributed computing framework in \cite{CDC}, and each Reduce function will be computed by exactly $s$ nodes in a cyclic symmetric manner. In the semi-homogeneous system, we relax the constraint on Reduce function assignment such that the Reduce functions are divided into different sub-groups, and each Reduce function in the same group will be computed by exactly $s$ node in a cyclic symmetric manner. In particular, when only one sub-group exists, the semi-homogeneous system reduces to the homogeneous system.
 
 Next, we introduce definitions to characterize the heterogeneities of the system from the perspective of computing nodes, input files, and IVs, respectively.  From the perspective of computing nodes, we borrow the definitions of Mapping load and Reducing load from \cite{Tao}.
 \begin{Definition}[Mapping load and Reducing load\cite{Tao}]
  The Mapping load for Node $k\in\mathcal{S}$ is defined as $m_k\triangleq\frac{|\mathcal{M}_k|}{N}\in [0,1]$, which denotes the fraction of files that Node $k$ mapped. Similarly, the Reducing load for Node $k$ is defined as $\omega_k\triangleq\frac{|\mathcal{W}_k|}{Q}\in [0,1]$. 
 \end{Definition}
 
 From the perspective of input files, we introduce new definitions of Mapping times and Reducing times as follows:
 \begin{Definition}[Mapping times and Reducing times]
  The Mapping times of file $w_i$, denoted as $r_i$, for $i\in[N]$ is defined as the number of nodes who map file $w_i$. Then the minimum Mapping times of a system is defined as $r_{\textnormal{min}} \triangleq \min\limits_{1\leq i\leq N}r_i$. Similarly, the Reducing times of function $h_i$, denoted as $q_i$, for $i\in[Q]$ is defined as the number of nodes who compute Reduce function $h_i$. Then the minimum Reducing times of a system is defined as $q_{\textnormal{min}} \triangleq \min\limits_{1\leq i\leq Q}q_i$.
 \end{Definition}
 
 From the perspective of IVs, we first define $(z,\mathcal{S})$-mapped IVs and then define the \emph{deficit} of Node $i\in\mathcal{S}$ as the ratio between the number of locally known IVs and the number of desired IVs at Node $i$. Before giving the formal definitions, we first define $\mathcal V^{\mathcal S\backslash \mathcal S_1}_{\mathcal S_1}$ as the set of IVs needed by \emph{all} nodes in $\mathcal S\backslash \mathcal S_1$, no node outside $\mathcal S$, and known \emph{exclusively} by nodes in $\mathcal S_1$, for some $\mathcal{S}\subseteq [K]$ and $\mathcal S_1\subset \mathcal S:|\mathcal S_1|\geq r_{\min}$, i.e.,
 \begin{align}
 \mathcal {V}_{\mathcal {S}_{1}}^{\mathcal {S} \backslash \mathcal {S}_{1}}\triangleq\{v_{q,n}: q \in &\underset {k \in \mathcal {S} \backslash \mathcal {S}_{1}}{\cap } {\mathcal{ W}}_{k}, q\notin \underset {k \notin \mathcal {S}}{\cup } {\mathcal{ W}}_{k}, n  \in \underset {k \in {\mathcal{ S}}_{1}}{\cap } \mathcal {M}_{k}, n\notin \underset {k \notin \mathcal {S}_{1}}{\cup } {\mathcal{ M}}_{k}\}. \label{szmap}
 \end{align}
 
 \begin{Definition}[$(z,\mathcal{S})$-mapped IVs]
 Given a subset $\mathcal{S}\subseteq [K]$ and an integer $r_{\min}\leq z\leq |\mathcal{S}|$, we define the \emph{$(z,\mathcal{S})$-mapped IVs}, denoted by $\mathcal{V}_{z}^{\mathcal{S}}$, as the set of IVs required by $|\mathcal S|-z$ nodes and known exclusively by $z$ nodes in $\mathcal{S}$, i.e.,
 \begin{align}\label{z-mapped}
  \mathcal{V}_{z}^{\mathcal S} = \bigcup_{\mathcal S_1\subseteq \mathcal S:|\mathcal S_1|=z}\mathcal {V}_{\mathcal S_1}^{\mathcal {S} \backslash \mathcal {S}_1}.
 \end{align}
 \end{Definition}

 \begin{Definition}[Deficit Ratio]\label{DefDeficit}
 Given a system with data placement $\mathcal{M}$ and Reduce function assignment $\mathcal{W}$, we define the deficit ratio of Node $i$ on the $(z,\mathcal{S})$-mapped IVs $\mathcal{V}_{z}^{\mathcal{S}}$ as
  \begin{align}
   \textit{D}(i, \mathcal S,z)\triangleq \frac{\sum_{\substack{S_1\subseteq \mathcal S:\\|S_1|=z,i\not\in S_1}}| \mathcal {V}_{\mathcal {S}_1}^{\mathcal {S} \backslash \mathcal {S}_1}|}{\sum_{\substack{S_1\subseteq \mathcal S:\\|S_1|=z,i\in S_1}} |\mathcal {V}_{\mathcal {S}_1}^{\mathcal {S} \backslash \mathcal {S}_1}|}. \label{deficit}
  \end{align}
 For convenience, we let $\textit{D}(i, \mathcal S,z)=0$ when $$\sum\limits_{\substack{S_1\subseteq \mathcal S: |S_1|=z,i\not\in S_1}}| \mathcal {V}_{\mathcal {S}_1}^{\mathcal {S} \backslash \mathcal {S}_1}|=\sum\limits_{\substack{S_1\subseteq \mathcal S: |S_1|=z,i\in S_1}} |\mathcal {V}_{\mathcal {S}_1}^{\mathcal {S} \backslash \mathcal {S}_1}|=0.$$
 \end{Definition}
 The deficit ratio $ \textit{D}(i, \mathcal S,z)$ characterizes the heterogeneity of a system at the IV level, which represents the ratio between the number of IVs required by  Node $i$  and the number of files mapped by Node $i$ with respect to the $(z,\mathcal{S})$-mapped IVs. Since we consider the pre-set scenario where the data placement cannot be designed, the IVs could be distributed in an extremely irregular manner but previous definitions could not reveal this feature, which prompts us to introduce the above definition of deficit ratio.

\section{A Toy Example}\label{SecToyExample}
We first introduce a toy example to elaborate on the definitions introduced in Section \ref{Sec:Model} and illustrate the key idea of transmission strategies of OSCT and FSCT. We will focus on the heterogeneous case and explain why the existing CDC transmission fails to handle the heterogeneities from the perspective of IVs.

\begin{figure}
  \centering
  \includegraphics[width=9.8cm]{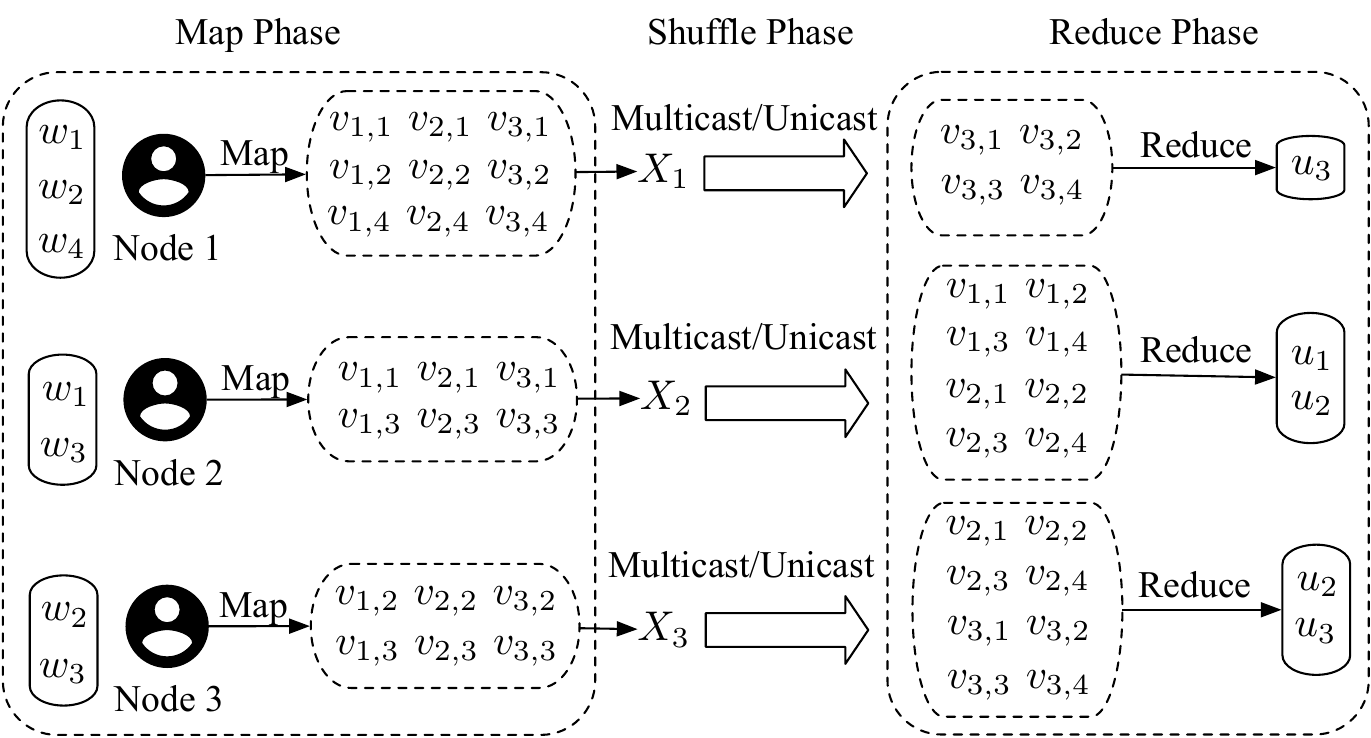}
   \caption{A toy example of a heterogeneous system with $K=3$, $N=4$, $Q=3$.}\label{toy_example2}
\end{figure}
Consider a heterogeneous system with $K=3$ nodes, $N=4$ files, and $Q=3$ Reduce functions, as shown in Fig. \ref{toy_example2}. Note that the data placement and Reduce function assignment are imbalanced, where Node 1 stores more files but computes fewer functions. In the Map phase, each node will map its stored input files to generate corresponding IVs. For example, Node 1 will obtain IVs $v_{i,1}$, $v_{i,2}$, and $v_{i,4}$ for $i=1,2,3$ from files $w_1$, $w_2$, and $w_4$. 

To better understand the system from the perspective of IVs, we first group IVs into different sets and consider the $(z,\mathcal{S})$-mapped IVs for $\mathcal{S}\subset [K]$ and $r_{\min}\leq z\leq |\mathcal S|-1$. In this toy example, we only need to consider three cases: $|\mathcal S|=2, z=1$, $|\mathcal S|=3, z=1$, and $|\mathcal S|=3, z=2$. When $z=1$, each $(z,\mathcal{S})$-mapped IV is known by only one node, indicating that no multicast gain can be obtained and we thus directly unicast the corresponding IVs, i.e., $\mathcal{V}_{\{1\}}^{\{2\}}=\{v_{1,4}\},~\mathcal{V}_{\{1\}}^{\{3\}}=\{v_{1,4}\},~\mathcal{V}_{\{1\}}^{\{2,3\}}=\{v_{2,4}\}$.

In the following, we focus on the case of $\mathcal{S}=\{1,2,3\}, z=2$ and consider the $(z,\mathcal{S})=(2,\{1,2,3\})$-IVs:
\begin{align*}
  \mathcal{V}_{\{1,2\}}^{\{3\}}=\{v_{2,1},v_{3,1}\},~ \mathcal{V}_{\{1,3\}}^{\{2\}}=\{v_{1,2},v_{2,2}\},~ \mathcal{V}_{\{2,3\}}^{\{1\}}=\{v_{3,3}\}.
\end{align*}
Note that the heterogeneity over IVs can \emph{not} be eliminated by simply grouping. Within each cluster $S$, the IVs are distributed unevenly among different nodes (e.g., $|\mathcal{V}_{\{1,3\}}^{\{2\}}|\neq|\mathcal{V}_{\{2,3\}}^{\{1\}}|$), making the existing CDC scheme infeasible\cite{CDC,Ji,Tao}. We can characterize the heterogeneity with the concept of \emph{deficit ratio}. By Definition \ref{DefDeficit}, the deficit ratio of Node 1 in the node cluster $\{1,2,3\}$ with $z=2$ is $\frac{1}{2+2}=\frac{1}{4}$, and for Node 2 and Node 3 the deficit ratio is $\frac{2}{3}$. The intuition is that, in a specific round, the node that maps fewer IVs but requires more IVs has a larger deficit ratio. The values of deficit ratios are closely related to the transmission design of our OSCT and FSCT.

\begin{figure}
  \centering
  \includegraphics[width=15cm]{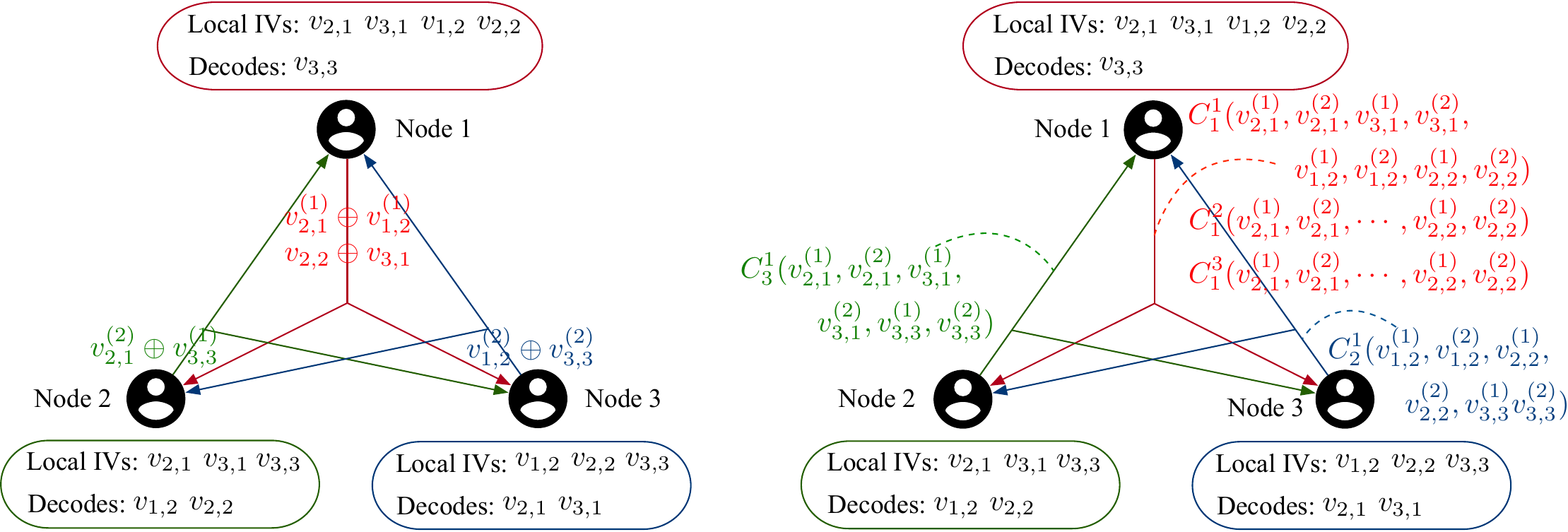}
  \caption{Transmission strategies for $(z,\mathcal{S})=(2,\{1,2,3\})$-mapped IVs of OSCT (left) and FSCT (right) in the toy example.}\label{toy_example_trans}
\end{figure}

By carefully designing the size of messages or the number of linear combinations (LCs) sent from each node, OSCT and FSCT can overcome the heterogeneity of the IV distribution. Specifically, in the OSCT, we first compute a parameter $\alpha_{j,z}^{\mathcal{S}}$ (to be defined in \eqref{alpha_opt}) for Node $j$, $j\in\mathcal{S}$ according to Theorem \ref{ThmOneshot} and Remark \ref{explicitAlpha}, obtaining that $\alpha_{1,2}^{\{1,2,3\}}=\frac{3}{2}$, $\alpha_{2,2}^{\{1,2,3\}}=\frac{1}{2}$, $\alpha_{3,2}^{\{1,2,3\}}=\frac{1}{2}$. Then, the IVs in each set $\mathcal{V}_{\mathcal{S}_1}^{\mathcal{S}\backslash\mathcal{S}_1}$ for $\mathcal{S}_1\subset\mathcal{S}$ and $|\mathcal{S}_1|=z$ are concatenated and split into $z$ segments according to the value of $\alpha_{j,z}^{\mathcal{S}}$. For example, since $\alpha_{1,2}^{\{1,2,3\}}=\frac{3}{2}$ and $\alpha_{2,2}^{\{1,2,3\}}=\frac{1}{2}$, the concatenated $(v_{2,1},v_{3,1})$ form $\mathcal {V}_{\{ 1,2\}}^{\{3\}}$ should be splitted into two segments with sizes $\frac{3}{2}V$ and $\frac{1}{2}V$. Operationally, we can evenly split $v_{2,1}$ into $v_{2,1}^{(1)}$ and $v_{2,1}^{(2)}$, where $(v_{2,1}^{(1)}, v_{3,1})$ with size of $\alpha_{1,2}^{\{1,2,3\}}V=\frac{3}{2}V$ bits will be encoded by Node $1$ and $v_{2,1}^{(2)}$ with size of $\alpha_{2,2}^{\{1,2,3\}}V=\frac{1}{2}V$ bits will be encoded by Node $2$. We list all the split IV segments in round $2$ of cluster $\{1,2,3\}$ in Table \ref{ex1_split_table}.
\begin{table}[!htbp]
 \caption{Split IV segments based on the solution of $\{\alpha_{j,z}^{\mathcal{S}}\}_{j\in\mathcal{S}}$ in round $2$ of cluster $\{1,2,3\}$.}
 \centering\label{ex1_split_table}
 \begin{tabular}{|c|c|c|c|} \hline
 \diagbox[width = 3cm]{IV set}{Size (bits)} & $\alpha_{1,2}^{\{1,2,3\}}V$ & $\alpha_{2,2}^{\{1,2,3\}}V$ & $\alpha_{3,2}^{\{1,2,3\}}V$ \\\hline
 $\mathcal {V}_{\{1,2\}}^{\{3\}}\!\!=\!\!\{v_{2,1},v_{3,1}\}$ & $v_{2,1}^{(1)}, v_{3,1}$ & $v_{1,1}^{(2)}$ & \\\hline
 $\mathcal {V}_{\{1,3\}}^{\{2\}}\!\!=\!\!\{v_{1,2},v_{2,2}\}$ & $v_{1,2}^{(1)},v_{2,2}$ & & $v_{1,2}^{(2)}$ \\\hline
 $\mathcal {V}_{\{2,3\}}^{\{1\}}\!\!=\!\!\{v_{3,3}\}$ & & $v_{3,3}^{(1)}$ & $v_{3,3}^{(2)}$ \\\hline
 \end{tabular}
\end{table}

Then, each node encodes the corresponding segments into LCs, and the receivers can decode the message block instantaneously.  For example, Node 1 multicasts $(v_{2,1}^{(1)}\oplus v_{1,2}^{(1)}, v_{2,2}\oplus v_{3,1})$ to Node 2 and Node 3. Since Node 2 knows $v_{2,1}^{(1)}, v_{3,1}$ and Node 3 knows $v_{1,2}^{(1)}, v_{2,2}$, both of them can decode the messages sent by Node 1. Fig. \ref{toy_example_trans}(left) shows the complete transmission process. The detail design of OSCT will be discussed in Section \ref{SchOneshot}. 

For the FSCT, after computing and updating a parameter $n_{j,z}^{\mathcal{S}}$ (to be defined in \eqref{nAndBeta}) for each node $j$,  we can determine the number of LCs to be sent in each message block. Here $n_{1,2}^{\{1,2,3\}}=3, n_{2,2}^{\{1,2,3\}}=1, n_{3,2}^{\{1,2,3\}}=1$ according to \eqref{nAndBeta}, and the deficit and feasible condition (to be defined in Section \ref{subsec_FSCT}) are satisfied. Then every IV in $(z,\mathcal{S})=(2,\{1,2,3\})$ will be split into $|\mathcal S|-1=2$ pieces with equal size, and each node will transmit a specific number of independent LCs of its locally mapped $(z,\mathcal{S})$-IVs. For example, Node 1 multicast three LCs of IVs $v_{2,1}^{(1)}, v_{2,1}^{(2)}, v_{3,1}^{(1)}, v_{3,1}^{(2)}, v_{1,2}^{(1)}, v_{1,2}^{(2)}, v_{2,2}^{(1)}, v_{2,2}^{(2)}$, denoted as
\begin{align*}
  C_1^1(v_{2,1}^{(1)}, v_{2,1}^{(2)}, v_{3,1}^{(1)}, v_{3,1}^{(2)}, v_{1,2}^{(1)}, v_{1,2}^{(2)}, v_{2,2}^{(1)}, v_{2,2}^{(2)}),\\
  C_1^2(v_{2,1}^{(1)}, v_{2,1}^{(2)}, v_{3,1}^{(1)}, v_{3,1}^{(2)}, v_{1,2}^{(1)}, v_{1,2}^{(2)}, v_{2,2}^{(1)}, v_{2,2}^{(2)}),\\
  C_1^3(v_{2,1}^{(1)}, v_{2,1}^{(2)}, v_{3,1}^{(1)}, v_{3,1}^{(2)}, v_{1,2}^{(1)}, v_{1,2}^{(2)}, v_{2,2}^{(1)}, v_{2,2}^{(2)}),
\end{align*}
with randomly generated coefficient over $\mathbb{F}_{2^{\frac{V}{|\mathcal{S}|-1}}}$. Similarly, Node 2 and Node 3 multicast one LC of their locally computed IV segments, respectively, denoted as
\begin{align*}
  C_2^1(v_{2,1}^{(1)}, v_{2,1}^{(2)}, v_{3,1}^{(1)}, v_{3,1}^{(2)}, v_{3,3}^{(1)}, v_{3,3}^{(2)}),\\
  C_3^1(v_{1,2}^{(1)}, v_{1,2}^{(2)}, v_{2,2}^{(1)}, v_{2,2}^{(2)}, v_{3,3}^{(1)}, v_{3,3}^{(2)}),
\end{align*}
with randomly generated coefficient over $\mathbb{F}_{2^{\frac{V}{|\mathcal{S}|-1}}}$. See Fig. \ref{toy_example_trans}(right) for the illustration. Note that the receiver could obtain the desired IVs by \emph{jointly} decoding the LCs. For example, Node 2 could decode the unknown segments $v_{1,2}^{(1)}, v_{1,2}^{(2)}, v_{2,2}^{(1)}, v_{2,2}^{(2)}$ from the received four independent LCs sent by Node 1 and Node 3. The detail design of FSCT will be discussed in Section \ref{SchFewShot}.

In this example, both OSCT and FSCT achieve the sample communication load of $\frac{11}{24}$, which is optimal according to Lemma \ref{LemmaLower} to be presented in Section \ref{MainResults}. Here, OSCT shows a lower coding complexity. Specifically, when transmitting $(z,\mathcal{S})$-mapped IVs, in FSCT, each IV is split into two pieces, and Node 1 multicasts three LCs of eight variables, while for the same round in OSCT, $v_{3,1}$ and $v_{2,2}$ need not to be split and the coding only involves XORs. However, we will see by example in Section \ref{ExFSCT} that FSCT can potentially outperform OSCT through further exploiting the multicast gain in some circumstances.

\section{Main Results} \label{MainResults}
In this section, we first present a lower bound of the optimal communication load proved in \cite{Howto}, and then we present two achievable upper bounds for the optimal communication load under pre-set data placement and Reduce function assignment. By comparing both upper and lower bounds, we show that our schemes are optimal in some non-trivial cases.
 
\begin{lemma}(\cite[Lemma 1]{Howto})\label{LemmaLower}
 Consider a distributed computing system with parameters $N$ and $Q$, given a data placement $\mathcal{M}$ and Reduce function assignment $\mathcal{W}$ that uses $K$ nodes. For any integers $t, d$, let $a_{t,d}$ denotes the number of intermediate values that are available at $t$ nodes, and required by (but not available at) $d$ nodes. The communication load of any valid shuffling design is lower bounded by:
 \begin{equation} 
 L^{*}(\mathcal{M}, \mathcal{W}) \geq\frac{1}{QN}\sum_{t=1}^{K}\sum_{d=1}^{K-t}a_{t,d}\frac{d}{t+d-1}.\label{lowerbound}
 \end{equation}
 \end{lemma}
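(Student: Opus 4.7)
The plan is to prove this bound by a cut-set/entropy argument combined with averaging over random permutations of the $K$ nodes—the canonical blueprint for converses in CDC and coded caching. Because $a_{t,d}$ classifies each IV purely by its availability pattern (known by $t$ nodes, required by $d$ disjoint nodes), the coefficient $\tfrac{d}{t+d-1}$ should emerge as the combinatorial average weight picked up by such an IV, matching the fundamental multicast lower bound.

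First, I would fix an arbitrary valid scheme producing messages $X_1,\ldots,X_K$ of lengths $\ell_1,\ldots,\ell_K$ and encode the decoding requirement as the entropy constraint that for each $k\in[K]$, the desired IVs $\{v_{q,n}:q\in\mathcal{W}_k,\,n\in[N]\}$ are a deterministic function of $(X_1,\ldots,X_K)$ together with the local IVs at node $k$. This gives $\sum_k \ell_k \geq H(X_1,\ldots,X_K)$ and a family of zero conditional entropies that any lower bound must exploit.

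Second, I would fix a permutation $\pi=(\pi_1,\ldots,\pi_K)$ of $[K]$ and apply the chain rule to write $\sum_{j=1}^{K}\ell_{\pi_j} \geq \sum_{j=1}^{K} H\!\left(X_{\pi_j}\,\middle|\,X_{\pi_1},\ldots,X_{\pi_{j-1}},\text{local data of } \pi_1,\ldots,\pi_j\right)$. Each conditional entropy can then be lower bounded by the number of fresh IV bits that $X_{\pi_j}$ must carry: namely, IVs known exclusively to $\{\pi_1,\ldots,\pi_j\}$ and still required by some node in the suffix $\{\pi_{j+1},\ldots,\pi_K\}$, because those IVs can only reach the suffix through the message $X_{\pi_j}$ conditioned on everything already revealed.

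Third, I would take the expectation of this inequality over a uniformly random permutation of $[K]$. For a fixed IV with $t$ mapping nodes and $d$ requesting nodes (disjoint), only the relative order of the $t+d$ relevant nodes inside $\pi$ matters; a direct combinatorial count over the $(t+d)!$ orderings of when the IV falls into the ``prefix-exclusive, suffix-required'' bucket should produce the weight $\tfrac{d}{t+d-1}$. Summing over all $(t,d)$-classes and normalizing by $NQV$ would then recover the claimed bound.

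The main obstacle is the third step: showing that the permutation averaging produces exactly $\tfrac{d}{t+d-1}$ rather than a nearby quantity, and that each IV is assigned to a unique $(t,d)$-class. Careful accounting for edge cases is needed—IVs whose mapping and requesting sets overlap do not contribute to any $a_{t,d}$ and must be discarded from the count, while IVs needed by no external node simply contribute zero. A cleaner alternative is to bypass raw permutation averaging and instead invoke a Han-type submodular inequality on the joint entropy of messages conditioned on various data subsets, which often delivers the same coefficient with lighter combinatorics and avoids tracking the chain-rule terms explicitly.
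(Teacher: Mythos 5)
You should first be aware that the paper contains no proof of Lemma~\ref{LemmaLower}: it is imported verbatim from \cite[Lemma 1]{Howto}, where it is established by an induction over subsets of nodes in the style of the original CDC converse \cite{CDC}, driven by the per-node decoding constraints. Judged on its own merits, your sketch has a fatal gap in Step 2. Writing $Y_k$ for the local IVs of node $k$, you condition the $j$-th chain-rule term on the data of $\pi_1,\ldots,\pi_j$, i.e., including node $\pi_j$ itself; but $X_{\pi_j}=e_{\pi_j}(\{v_{q,n}:q\in[Q],n\in\mathcal M_{\pi_j}\})$ is a deterministic function of exactly that data, so every term $H(X_{\pi_j}\mid X_{\pi_1},\ldots,X_{\pi_{j-1}},Y_{\pi_1},\ldots,Y_{\pi_j})$ is identically zero and the decomposition is vacuous. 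Moreover, the asserted per-term bound (``IVs known exclusively to the prefix and still required by the suffix must be carried by $X_{\pi_j}$'') is never derived from the decoding constraints you list in Step 1, and it is not true: such an IV need not appear in $X_{\pi_j}$ at all, since it can be conveyed by the message of any prefix node that maps it, so even after repairing the conditioning the claimed inequality does not follow.

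Even granting a repaired permutation-averaging argument, the combinatorics do not produce the constant in \eqref{lowerbound}. For an IV mapped by a set $T$ with $|T|=t$ and required by a disjoint set $D$ with $|D|=d$, the event that it is ``prefix-exclusive and suffix-required'' at the step where the last node of $T$ enters the prefix is the event that the last node of $T\cup D$ lies in $D$, which has probability $\frac{d}{t+d}$ under a uniform permutation; the natural averaging therefore yields at best $\frac{1}{QN}\sum_{t,d} a_{t,d}\frac{d}{t+d}$, strictly weaker than the target $\frac{d}{t+d-1}$ (counting the IV at every qualifying step instead would overshoot the achievable load and is clearly invalid). The simplest case already exhibits the loss: $K=2$ with one IV mapped only at node~1 and required at node~2 forces that IV to be sent in full, while the permutation average credits only $\tfrac{1}{2}$ of it. The ``$-1$'' in the denominator comes from exploiting, for each decoding node $k$ in the relevant group of $t+d$ nodes, that $X_k$ is useless to node $k$ itself, via the constraint $H(\{v_{q,n}:q\in\mathcal W_k\}\mid X_1,\ldots,X_K,\{v_{q,n}: n\in\mathcal M_k\})=0$; this is precisely what the subset-induction proof in \cite{Howto} uses and what your sketch, including the vague Han-inequality fallback, does not supply.
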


\begin{remark}
  If the lower bound \eqref{lowerbound} is tight, it indicates that a scheme will be optimal if it can always transmit the IVs available at $t$ nodes and required by (but not available at) $d$ nodes with a multicast gain (ratio between the achievable communication load of the coded and unicast schemes) of $\frac{d}{t+d-1}$. We will show in the following sections that by delicately designing the size of the encoded messages and the number of linear combinations to be sent, our OSCT and FSCT could achieve the maximum multicast gain in many scenarios.
\end{remark}

\subsection{One-shot Coded Transmission (OSCT)}
 \subsubsection{ Upper bound of OSCT}
Given a cluster $S\subseteq [K]$ and $z\in \mathbb{Z}$ with $r_{\min}\leq z\leq |\mathcal S|$, denote the $\binom{|\mathcal S|}{z}$ subsets of $S$ of size $z$ as $\mathcal S_z[1], \mathcal S_z[2], \cdots, \mathcal S_z[\binom{|\mathcal S|}{z}]$ respectively. 
The upper bound achieved by OSCT is given in the following theorem.
\begin{theorem}\label{ThmOneshot}
 Consider the distributed computing system with $K$ nodes, $N$ files and $Q$ Reduce functions under pre-set data placement $\mathcal{M}$ and Reduce function assignment $\mathcal{W}$ for $k\in [K]$. The optimal communication load is upper bounded by $L^*\leq L_{\textnormal{OSCT}}$ where $L_{\textnormal{OSCT}}$ is defined as:
 \begin{align}
 L_{\textnormal{OSCT}}&
 = {\sum\limits_{\mathcal{S}\subseteq[K]}\sum\limits_{z=r_{\min}}^{|\mathcal{S}|-1} \frac{1}{QN}\Big(\sum\limits_{j\in\mathcal{S}}\tbinom{|\mathcal S|-2}{z-1}\alpha_{j,z}^{\mathcal{S}}\! +\!\sum\limits_{i=1}^{\binom{|\mathcal{S}|}{z}}(\tau_{i,z}^{\mathcal{S}})^+\Big)} \label{load1},
 \end{align}
 where $$\tau_{i,z}^{\mathcal{S}}=|\mathcal{V}_{\mathcal {S}_z[i]}^{\mathcal {S} \backslash \mathcal {S}_z[i]}|- \sum_{j\in\mathcal S_z[i]}\alpha_{j,z}^{\mathcal{S}},$$ and $\{\alpha_{j,z}^{\mathcal{S}}\}_{j\in\mathcal{S}}$ are given by the following optimization problem:
 \begin{align}
  & \mathcal{P}_{\textnormal{OSCT}}(\mathcal{S},z):\label{alpha_opt} \\
  & \min_{\{\alpha_{j,z}^{\mathcal{S}}\}_{j\in\mathcal{S}}} \sum_{i=1}^{\binom{|\mathcal S|}{z}}\Big(\left|\mathcal{V}_{\mathcal {S}_z[i]}^{\mathcal {S} \backslash \mathcal {S}_z[i]}\right|-\sum_{j\in\mathcal S_z[i]}\alpha_{j,z}^{\mathcal{S}}\Big)^2\label{conditionOptThm2} \tag{7a}\\
  &\quad {s.t.}~\alpha_{j,z}^{\mathcal{S}}\in \mathbb{Q}^+, \forall j\in\mathcal S \tag{7b}\\
  &\ \ \ \ \ \ \sum_{j\in \mathcal \mathcal S_z[i]}\alpha_{j,z}^{\mathcal{S}}=0,~\textnormal{if}~ |\mathcal{V}_{\mathcal {S}_z[i]}^{\mathcal {S} \backslash \mathcal {S}_z[i]}|=0, \forall i \in \left[\tbinom{|\mathcal{S}|}{z}\right] \tag{7c}.
  \label{min}
 \end{align}\label{t1}
\end{theorem}

We prove Theorem \ref{ThmOneshot} in Section \ref{SchOneshot} by proposing the achievable scheme OSCT. Meanwhile, one may refer to the toy example in Section \ref{SecToyExample} for an intuitive illustration.

\begin{remark}\label{explicitAlpha}
 Here $\alpha^{\mathcal S}_{j,z}$ denotes the size of an encoded message (normalized by $V$ bits) sent by Node $j\in \mathcal{S}$ in OSCT when sending the $(z,\mathcal{S})$-mapped IVs $\mathcal{V}_z^{\mathcal{S}}$. In Appendix \ref{derivationAlpha}, we obtain that if the deficit ratio of each node $j\in\mathcal{S}$ is less than the threshold value $\frac{|\mathcal S|-z}{z-1}$ and $|\mathcal{V}_{\mathcal {S}_z[i]}^{\mathcal {S} \backslash \mathcal {S}_z[i]}|\neq 0, $ for all $i\in\left[\binom{|\mathcal{S}|}{z}\right]$, the solution of the optimization problem $\mathcal{P}_{\textnormal{OSCT}}(\mathcal{S},z)$ is 
 \begin{align}
  \alpha_{j,z}^{\mathcal{S}}=\!\!\!\!\!\!\sum\limits_{\substack{\mathcal{S}_1\subseteq \mathcal{S}:\\|\mathcal{S}_1|=z,j\in \mathcal{S}_1}}\!\!\!\!\!\frac{|\mathcal {V}_{\mathcal{S}_1}^{\mathcal {S} \backslash \mathcal{S}_1}|}{{\tbinom{|\mathcal S|-1}{z-1}}}
 -\!\!\!\!\!\!\sum\limits_{\substack{\mathcal S_1\subseteq \mathcal{S}:\\|\mathcal S_1|=z,j\not\in \mathcal S_1}}\!\!\!\!\!\! \frac{(z-1)| \mathcal {V}_{\mathcal S_1}^{\mathcal {S} \backslash \mathcal S_1}|}{(|\mathcal S|-z) \binom{|\mathcal S|-1}{z-1}},~ \forall j\in\mathcal{S}. \label{explicitAlphaEq}
 \end{align}
\end{remark}

\subsubsection{Optimality of OSCT}

The following theorem states that the upper bound in the Theorem \ref{ThmOneshot} is tight in some cases.


\begin{theorem}
 If the optimal value of the objective function in $\mathcal{P}_{\textnormal{OSCT}}(\mathcal{S},z)$ satisfies 
 \begin{IEEEeqnarray}{rCl}
 \sum_{i=1}^{\binom{|\mathcal S|}{z}}\Big(\sum_{j\in\mathcal S_z[i]}\alpha_{j,z}^{\mathcal{S}}-|\mathcal{V}_{\mathcal {S}_z[i]}^{\mathcal {S} \backslash \mathcal {S}_z[i]}&{}|{}&\Big)^2=0,~\forall \mathcal{S}\subseteq [K], r_{\min}\leq z\leq |\mathcal{S}|,\notag
\end{IEEEeqnarray}
 then the communication load in Theorem \ref{t1} is optimal. \label{t2}
\end{theorem}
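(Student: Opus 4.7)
The plan is to show that under the zero-objective hypothesis the Theorem \ref{ThmOneshot} upper bound $L_{\textnormal{OSCT}}$ coincides \emph{exactly} with the lower bound of Lemma \ref{LemmaLower}, which immediately gives optimality. The argument reduces to three algebraic steps: killing the slack terms in (\ref{load1}), converting a sum of $\alpha_{j,z}^{\mathcal{S}}$ into $|\mathcal{V}_z^{\mathcal{S}}|$ via double counting, and re-indexing the outer sum over $(\mathcal{S},z)$ into the $(t,d)$-form used in Lemma \ref{LemmaLower}.

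For the first step, a zero optimal value forces every squared summand in (\ref{conditionOptThm2}) to vanish, so
\begin{IEEEeqnarray}{rCl}
 \sum_{j\in \mathcal{S}_z[i]}\alpha_{j,z}^{\mathcal{S}} &=& \bigl|\mathcal{V}_{\mathcal{S}_z[i]}^{\mathcal{S}\backslash \mathcal{S}_z[i]}\bigr|, \quad \forall\, \mathcal{S},z,i, \notag
\end{IEEEeqnarray}
which sets $\tau_{i,z}^{\mathcal{S}}=0$ and erases the $(\tau_{i,z}^{\mathcal{S}})^+$ contributions from $L_{\textnormal{OSCT}}$. For the second step, I would sum this equality over $i\in[\tbinom{|\mathcal{S}|}{z}]$: on the left each $j\in\mathcal{S}$ appears in exactly $\tbinom{|\mathcal{S}|-1}{z-1}$ size-$z$ subsets of $\mathcal{S}$, while on the right the disjoint union in (\ref{z-mapped}) gives $|\mathcal{V}_z^{\mathcal{S}}|$. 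Hence $\sum_{j\in\mathcal{S}}\alpha_{j,z}^{\mathcal{S}}=|\mathcal{V}_z^{\mathcal{S}}|/\tbinom{|\mathcal{S}|-1}{z-1}$, and applying the identity $\tbinom{|\mathcal{S}|-2}{z-1}/\tbinom{|\mathcal{S}|-1}{z-1}=(|\mathcal{S}|-z)/(|\mathcal{S}|-1)$ simplifies the upper bound to
\begin{IEEEeqnarray}{rCl}
 L_{\textnormal{OSCT}} &=& \frac{1}{QN}\sum_{\mathcal{S}\subseteq [K]}\sum_{z=r_{\min}}^{|\mathcal{S}|-1}\frac{|\mathcal{S}|-z}{|\mathcal{S}|-1}\,|\mathcal{V}_z^{\mathcal{S}}|. \notag
\end{IEEEeqnarray}

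The last step is to reindex with $t=z$ and $d=|\mathcal{S}|-z$. By the exact-availability, exact-demand specification in (\ref{szmap}), any IV known at precisely the node-set $\mathcal{A}$ and required at precisely the node-set $\mathcal{R}$ (with $|\mathcal{A}|=t$, $|\mathcal{R}|=d$, $\mathcal{A}\cap\mathcal{R}=\emptyset$) lies in $\mathcal{V}_{\mathcal{A}}^{\mathcal{R}}$ for the \emph{unique} choice $\mathcal{S}=\mathcal{A}\cup\mathcal{R}$, so $\sum_{\mathcal{S}:|\mathcal{S}|=t+d}|\mathcal{V}_t^{\mathcal{S}}|=a_{t,d}$. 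Substituting yields $L_{\textnormal{OSCT}}=\frac{1}{QN}\sum_{t,d}a_{t,d}\cdot d/(t+d-1)$, which is exactly Lemma \ref{LemmaLower}'s bound; combined with Theorem \ref{ThmOneshot} this proves optimality. The main subtlety is the re-indexing step: I must ensure that each IV is counted with multiplicity exactly one on both sides, which hinges on the ``exactly'' quantifiers built into (\ref{szmap}); once that bijection is secured, the rest is routine algebra.
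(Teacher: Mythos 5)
Your proposal is correct and follows essentially the same route as the paper's proof: the zero objective forces $\tau_{i,z}^{\mathcal{S}}=0$, summing the resulting equalities over the size-$z$ subsets with the double-counting factor $\binom{|\mathcal S|-1}{z-1}$ collapses $L_{\textnormal{OSCT}}$ to $\frac{1}{QN}\sum_{\mathcal S,z}\frac{|\mathcal S|-z}{|\mathcal S|-1}|\mathcal V_z^{\mathcal S}|$, and the re-indexing $t=z$, $d=|\mathcal S|-z$ matches this to the Lemma~\ref{LemmaLower} bound. Your explicit justification of the bijection $\sum_{\mathcal S:|\mathcal S|=t+d}|\mathcal V_t^{\mathcal S}|=a_{t,d}$ via the exclusivity conditions in \eqref{szmap} is in fact spelled out more carefully than in the paper, which simply cites the definition of $a_{t,d}$ at that step.
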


\begin{IEEEproof}
 See Appendix \ref{ConverseOSCTOPt}.
\end{IEEEproof}
Intuitively, when the optimal value of $\mathcal{P}_{\text{OSCT}}(\mathcal{S},z)$ achieves zero for some $\mathcal{S}\in [K]$ and $r_{\min}\leq z\leq |\mathcal{S}|$, all IVs in $\mathcal{V}_{\mathcal {S}_z[i]}^{\mathcal {S} \backslash \mathcal {S}_z[i]},\forall i\in[\binom{|\mathcal{S}|}{z}]$ can be sent with a multicast gain $\frac{|\mathcal{S}|-z}{|\mathcal{S}|-1}$, coinciding with the lower bound in Lemma \ref{LemmaLower}.
To better illustrate the optimality of our OSCT, we present the following corollary based on Theorem \ref{t2}. It shows that our scheme achieves the minimum communication load  in several cases that include the existing optimality results in    \cite{CDC} and \cite{ThreeWorker}.

 \begin{Corollary}\label{OSCTCorollary}
 The communication load in Theorem \ref{t1} is optimal for the homogeneous system and semi-homogeneous system defined in Definition \ref{homo} and \ref{semi-sys} respectively. For the 3-node system considered in \cite{ThreeWorker}, which allows any data placement $\mathcal{M}$ and symmetric Reduce function assignment $\mathcal{W}$ with $|\mathcal{W}_i|=1,\mathcal{W}_i\cap \mathcal{W}_j=\emptyset$, for all $i\neq j$ and $i,j\in[3]$, the communication load in Theorem 1 is optimal if the pre-set $\mathcal{M}$ and $\mathcal{W}$ are the same as \cite{ThreeWorker}.
 \end{Corollary}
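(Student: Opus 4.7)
The plan is to apply Theorem~\ref{t2}: for each of the three regimes I need to exhibit, for every $\mathcal{S}\subseteq[K]$ and every $z\in\{r_{\min},\ldots,|\mathcal{S}|-1\}$, a feasible choice of $\{\alpha_{j,z}^{\mathcal{S}}\}_{j\in\mathcal{S}}$ attaining objective value zero in $\mathcal{P}_{\textnormal{OSCT}}(\mathcal{S},z)$. Equivalently, I want the system of equalities $\sum_{j\in\mathcal{S}_z[i]}\alpha_{j,z}^{\mathcal{S}}=|\mathcal{V}_{\mathcal{S}_z[i]}^{\mathcal{S}\backslash\mathcal{S}_z[i]}|$, for $i\in[\binom{|\mathcal{S}|}{z}]$, to admit a non-negative rational solution compatible with the vanishing condition (7c). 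Once such $\alpha$'s are produced for every $(\mathcal{S},z)$, Theorem~\ref{t2} immediately delivers $L^{*}=L_{\textnormal{OSCT}}$.

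For the homogeneous setting of Definition~\ref{homo}, the cyclic symmetry of the pre-set $\mathcal{M}$ and $\mathcal{W}$, together with the fact that every file is mapped by exactly $r$ nodes and every Reduce function is computed by exactly $s$ nodes, implies that $|\mathcal{V}_{\mathcal{S}_1}^{\mathcal{S}\backslash\mathcal{S}_1}|$ depends only on $(|\mathcal{S}|,|\mathcal{S}_1|)$ and not on the identity of $\mathcal{S}_1$. Assigning $\alpha_{j,z}^{\mathcal{S}}=|\mathcal{V}_{\mathcal{S}_z[1]}^{\mathcal{S}\backslash\mathcal{S}_z[1]}|/z$ to every $j\in\mathcal{S}$ then simultaneously satisfies all $\binom{|\mathcal{S}|}{z}$ equalities while respecting (7b)--(7c). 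The semi-homogeneous setting of Definition~\ref{semi-sys} is handled analogously by decomposing the IV-set cardinality into contributions from the sub-groups $\mathcal{Q}_1,\ldots,\mathcal{Q}_K$: within each level cyclic symmetry makes the contribution independent of the specific $\mathcal{S}_1$, summation across levels preserves this invariance, and the same equal-$\alpha$ construction again zeros the objective.

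For the 3-node system of \cite{ThreeWorker}, the disjointness of $\mathcal{W}_1,\mathcal{W}_2,\mathcal{W}_3$ together with $|\mathcal{W}_i|=1$ forces $\mathcal{V}_{\mathcal{S}_1}^{\mathcal{S}\backslash\mathcal{S}_1}=\emptyset$ whenever $|\mathcal{S}\backslash\mathcal{S}_1|\geq 2$, so only pairs $(\mathcal{S},z)$ with $|\mathcal{S}\backslash\mathcal{S}_1|=1$ are nontrivial. For a two-node cluster $\mathcal{S}=\{i,j\}$ with $z=1$, the two equalities decouple into $\alpha_{i,1}^{\{i,j\}}=|\mathcal{V}_{\{i\}}^{\{j\}}|$ and $\alpha_{j,1}^{\{i,j\}}=|\mathcal{V}_{\{j\}}^{\{i\}}|$, which is immediately non-negative. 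The delicate case is $\mathcal{S}=[3]$ with $z=2$: writing $a_{jk}:=|\mathcal{V}_{\{j,k\}}^{\{i\}}|$, the three equalities form a non-singular $3\times 3$ system whose unique solution is $\alpha_{i,2}^{[3]}=(a_{ij}+a_{ik}-a_{jk})/2$. The main obstacle is verifying non-negativity of this solution, which I plan to address by invoking the structural properties of the $\mathcal{M}$ prescribed in \cite{ThreeWorker}; their placement induces triangle-type inequalities $a_{ij}+a_{ik}\geq a_{jk}$ among the IV-set cardinalities. Granted these inequalities, all $\alpha$'s are non-negative, every objective vanishes, Theorem~\ref{t2} applies, and the resulting $L_{\textnormal{OSCT}}$ matches the capacity expression of \cite[Thm.~1]{ThreeWorker}.
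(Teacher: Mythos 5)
Your treatment of the homogeneous and semi-homogeneous systems is essentially the paper's argument: symmetry of the pre-set placement makes $|\mathcal{V}_{\mathcal{S}_z[i]}^{\mathcal{S}\backslash\mathcal{S}_z[i]}|$ identical over $i$ in round $z=r$, the equal split $\alpha_{j,z}^{\mathcal{S}}=|\mathcal{V}_{\mathcal{S}_z[1]}^{\mathcal{S}\backslash\mathcal{S}_z[1]}|/z$ zeroes the objective, and Theorem \ref{t2} (equivalently, a direct match with the Lemma \ref{LemmaLower} bound, which is how Appendix \ref{OSCTSemiProve} phrases the semi-homogeneous case) gives optimality. That part is fine.

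The 3-node part has a genuine gap. The corollary (and \cite{ThreeWorker}) allows \emph{arbitrary} data placement, so the triangle-type inequalities $a_{ij}+a_{ik}\ge a_{jk}$ you hope to "invoke from the structural properties of the $\mathcal{M}$ prescribed in \cite{ThreeWorker}" simply need not hold; e.g.\ $S_{12}=S_{13}=0$, $S_{23}>0$ is a legitimate placement. When $S_{12}+S_{13}<S_{23}$, the unique solution of the three equalities has $\alpha_{1}<0$, so no feasible $\alpha$ makes the objective of $\mathcal{P}_{\textnormal{OSCT}}$ vanish and Theorem \ref{t2} is inapplicable; worse, the route through Lemma \ref{LemmaLower} cannot be repaired, because in that regime the optimal load is $2(S_1+S_2+S_3)+S_{23}$, which is strictly larger than the Lemma \ref{LemmaLower} bound $2(S_1+S_2+S_3)+\tfrac{1}{2}(S_{12}+S_{13}+S_{23})$. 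The paper instead solves $\mathcal{P}_{\textnormal{OSCT}}$ explicitly in both cases (Appendix \ref{OSCT3wProve}): in the unbalanced case the KKT solution sets $\alpha_1=0$ and leaves positive residuals $(\tau_i)^+$ that are unicasted, giving load $2(S_1+S_2+S_3)+S_{23}$, and optimality is then concluded by matching the expression $g(S_{12},S_{13},S_{23})$ whose converse was proved in \cite{ThreeWorker} — not by the zero-objective criterion of Theorem \ref{t2}. Your proposal is missing both this case analysis and the appeal to the \cite{ThreeWorker} converse, and as written the 3-node claim does not go through.
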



 \begin{IEEEproof}
 For the homogeneous system with computation load $r$, $|\mathcal{V}_{\mathcal {S}_z[i]}^{\mathcal {S} \backslash \mathcal {S}_z[i]}|$ is the same for all $i\in[\binom{|\mathcal{S}|}{z}],z=r$. Since the deficit ratio of Node $j$ in round $z$ of cluster $\mathcal{S}$ is $\frac{\binom{|\mathcal{S}|-1}{z}}{\binom{|\mathcal{S}|-1}{z-1}}=\frac{|\mathcal{S}|-1}{z}<\frac{|\mathcal{S}|-z}{z-1}$, according to Remark \ref{explicitAlpha},  we have
 \begin{align}
  \alpha_{j,z}^{\mathcal{S}}&=\!\!\!\!\sum\limits_{\substack{\mathcal{S}_1\subseteq \mathcal{S}:\\|\mathcal{S}_1|=z,j\in \mathcal{S}_1}}\!\!\!\frac{|\mathcal {V}_{\mathcal{S}_1}^{\mathcal {S} \backslash \mathcal{S}_1}|}{{\tbinom{|\mathcal S|-1}{z-1}}}
 -\!\!\!\!\!\sum\limits_{\substack{\mathcal S_1\subseteq \mathcal{S}:\\|\mathcal S_1|=z,j\not\in \mathcal S_1}}\!\!\!\! \frac{(z-1)| \mathcal {V}_{\mathcal S_1}^{\mathcal {S} \backslash \mathcal S_1}|}{(|\mathcal S|-z) \binom{|\mathcal S|-1}{z-1}} =\frac{1}{z}|\mathcal{V}_{\mathcal {S}_z[i]}^{\mathcal {S} \backslash \mathcal {S}_z[i]}|, ~\forall j\in\mathcal{S}.
 \end{align}
Thus, the objective function $\sum_{j\in\mathcal S_z[i]}\alpha_{j,z}^{\mathcal{S}}-|\mathcal{V}_{\mathcal {S}_z[i]}^{\mathcal {S} \backslash \mathcal {S}_z[i]}|=0$, indicating the communication load given by OSCT is optimal according to Theorem \ref{t2}.
For the 3-node and semi-homogeneous systems, we prove the optimality in Appendix \ref{OSCT3wProve} and \ref{OSCTSemiProve}, respectively. 
  
 \end{IEEEproof}

 \subsection{Few-shot Coded Transmission (FSCT)} \label{subsec_FSCT}
 The OSCT achieving the upper bound in Theorem \ref{ThmOneshot} is a One-shot method in the sense that each \emph{single} message block can be independently decoded by its intended nodes. To further explore the gain of multicasting and reduce the communication load, we design a {Few-shot Coded Transmission} (FSCT) scheme where each node can jointly decode \emph{multiple} message blocks to obtain its desired IVs.

Here we first introduce a parameter $n_{i,z}^{\mathcal S}$ for $i \in\mathcal{S}$
\begin{align}
 n_{i,z}^{\mathcal S}=\!\!\!\!\!\!\sum\limits_{\substack{S_1\subseteq \mathcal S:\\|S_1|=z,i\in S_1}}(|\mathcal S|\!\!-\!\!z)|\mathcal {V}_{\mathcal {S}_1}^{\mathcal {S} \backslash \mathcal {S}_1}|\!-\!\!\!\!\sum\limits_{\substack{S_1\subseteq \mathcal S:\\|S_1|=z,i\not\in S_1}}(z-1)|\mathcal {V}_{\mathcal {S}_1}^{\mathcal {S} \backslash \mathcal {S}_1}|,\label{nAndBeta}
\end{align}
and the following two conditions:

1) \emph{Deficit Condition:} 
 For Node $i\in\mathcal{S}$ and $(z,\mathcal{S})$-mapped IVs, define the deficit condition, denoted by $\mathscr{E}_1(i,\mathcal{S},z)$, as
\begin{IEEEeqnarray}{rCl}\label{defiCond}
\textit{D}(i,\mathcal S,z)\leq\frac{|\mathcal S|-z}{z-1},
\end{IEEEeqnarray}
where $\textit{D}(i,\mathcal S,z)$ is the deficit ratio defined in \eqref{deficit}.
From \eqref{deficit} and \eqref{nAndBeta}, we obtain that the deficit condition \eqref{defiCond} is equivalent to
\begin{align}
 n_{i,z}^{\mathcal S}\geq 0.\label{deficitN}
\end{align}

2) \emph{Feasible Condition:} For Node $i\in\mathcal{S}$ and $(z,\mathcal{S})$-mapped IVs, define the feasible condition, denoted by $\mathscr{E}_2(i,\mathcal{S},z)$, as follows. The following linear system about \{$\beta_{j,\mathcal
S_1}\in \mathbb{Q}^+:j\in \mathcal S_1,i\not\in \mathcal{S}_1, \mathcal S_1\subseteq \mathcal{S},|\mathcal S_1|=z $\} has a non-negative solution:
\begin{IEEEeqnarray}{rCl}
 \left\{
 \begin{aligned}
  &\sum_{\substack{j\in \mathcal S_1}}\beta_{j,\mathcal S_1} \!\geq\! (|\mathcal{S}|\!-\!1) |\mathcal {V}_{\mathcal {S}_1}^{\mathcal {S} \backslash \mathcal {S}_1}|,\ \forall \mathcal S_1\subseteq \mathcal{S}\!:\!|\mathcal S_1|=z,i\not\in\mathcal{S}_1\\
  &\sum_{\mathcal S_1\in\mathcal{S}:j\in \mathcal S_1,i\not\in \mathcal S_1}\beta_{j,\mathcal S_1} \leq (n_{j,z}^{\mathcal{S}})^+,~~\forall j\in \mathcal{S},j\neq i.
 \end{aligned}
 \right. \label{feasible}
\end{IEEEeqnarray}

 With the deficit and feasible conditions defined above, we now present the upper bound of optimal communication load achieved by FSCT:

 \begin{theorem}\label{ThmFewshot}
 Consider the distributed computing system with $K$ nodes, $N$ files and $Q$ Reduce functions under pre-set data placement $\mathcal{M}$ and Reduce function assignment $\mathcal{W}$ for $k\in [K]$. The optimal communication load is upper bounded by $L^*\leq L_{\textnormal{fsct}}$ where $L_{\textnormal{fsct}}$ is defined as:
 \begin{align}
 &L_{\textnormal{fsct}}=\frac{1}{QN}\sum_{t=1}^{K}\sum_{d=1}^{K-s}a_{t,d}\frac{d}{t+d-1}\notag\\
  &+\frac{1}{QN}\sum_{\mathcal S\subseteq[K]}\sum_{z=r_{\min}}^{|\mathcal S|-1}\frac{1}{|\mathcal{S}|-1}\bigg(\!\!\!\!\!\!\!\!\!\!\sum_{\substack{k\in\mathcal{S}:\\\mathds{1}(\mathscr{E}_1(k,z,\mathcal{S}))=0}}\!\!\!\!\!\!\!\!\!\!\! |n_{k,z}^{\mathcal{S}}|-\!\!\!\!\sum_{\substack{k\in\mathcal{S}:\\\mathds{1}(\mathscr{E}_1(k,z,\mathcal{S}))=1\\\mathds{1}(\mathscr{E}_2(k,z,\mathcal{S}))=0}}\!\!\!\!\!\!\!\!\!\!\! n_{k,z}^{\mathcal{S}}\bigg)\notag\\
  &+\frac{1}{QN}\sum_{\mathcal S\subseteq[K]}\sum_{z=r_{\min}}^{|\mathcal S|-1}\frac{1}{z}\sum_{\substack{i\in\mathcal{S}:\\\mathds{1}(\mathscr{E}_2(i,z,\mathcal{S}))=0}}\max_{j\in \mathcal{S}}\sum_{\substack{\mathcal{S}_1\subseteq \mathcal{S}:|S_1|=z\\j\in \mathcal S_1,i\not\in \mathcal S_1}}|\mathcal {V}_{\mathcal {S}_1}^{\mathcal {S} \backslash \mathcal {S}_1}|,
 \end{align}
 where
$\mathscr{E}_1(i,z,\mathcal{S})$ and $\mathscr{E}_2(i,z,\mathcal{S})$ are the deficit condition and feasible condition defined in \eqref{defiCond} and \eqref{feasible}, respectively.
 \label{th3}
 \end{theorem}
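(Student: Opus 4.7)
The plan is to construct an explicit few-shot coded transmission scheme whose total load matches $L_{\textnormal{fsct}}$, processing the system cluster-by-cluster and, within each cluster $\mathcal{S}\subseteq[K]$, level-by-level for $z\in\{r_{\min},\ldots,|\mathcal{S}|-1\}$. At each level, the IVs to multicast are $\mathcal{V}_z^{\mathcal{S}}=\bigcup_{\mathcal{S}_1\subseteq\mathcal{S},|\mathcal{S}_1|=z}\mathcal{V}_{\mathcal{S}_1}^{\mathcal{S}\setminus\mathcal{S}_1}$; unlike OSCT, each transmitter $j\in\mathcal{S}_1$ emits $\beta_{j,\mathcal{S}_1}/V$ units of coded symbols, and each recipient $i\in\mathcal{S}\setminus\mathcal{S}_1$ jointly decodes across every subset it participates in rather than from any single one-shot block.

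First, I would fix a recipient $i\in\mathcal{S}$ and design the $\beta$'s targeting it. The feasible condition $\mathscr{E}_2(i,z,\mathcal{S})$ in \eqref{feasible} directly guarantees a non-negative assignment in which each set $\mathcal{V}_{\mathcal{S}_1}^{\mathcal{S}\setminus\mathcal{S}_1}$ with $i\notin\mathcal{S}_1$ receives at least $(|\mathcal{S}|-1)|\mathcal{V}_{\mathcal{S}_1}^{\mathcal{S}\setminus\mathcal{S}_1}|$ total coded bits from its $z$ transmitters. Node $i$ participates in exactly $\binom{|\mathcal{S}|-1}{z}$ such subsystems and, using its locally mapped IVs to cancel components it already knows, solves a joint linear system that is full-rank over a large enough field. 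The second constraint in \eqref{feasible} — together with the deficit condition $\mathscr{E}_1(i,z,\mathcal{S})$, which via \eqref{deficitN} is equivalent to $n_{i,z}^{\mathcal{S}}\geq 0$ — ensures each transmitter $j\neq i$ does not exceed its $(n_{j,z}^{\mathcal{S}})^+$ budget, so the $\beta$'s designed for different recipients can be aligned and each coded symbol serves all $|\mathcal{S}|-1$ intended recipients at once, producing the $\tfrac{1}{|\mathcal{S}|-1}$ normalization. Aggregating the $\beta$-sums over all $\mathcal{S},z$ then recovers the leading multicast term $\tfrac{1}{QN}\sum_{t,d}a_{t,d}\tfrac{d}{t+d-1}$, which matches the Lemma~\ref{LemmaLower} structure.

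Next, I would handle the two failure modes that generate the remaining terms. When $\mathscr{E}_1(k,z,\mathcal{S})$ fails, Node $k$'s locally known IVs are insufficient to absorb the side terms of the coded messages; the transmitter must emit an additional $|n_{k,z}^{\mathcal{S}}|$ uncoded symbols for that node, and after the $\tfrac{1}{|\mathcal{S}|-1}$ multicast sharing this yields the $|n_{k,z}^{\mathcal{S}}|$ contribution in the second term. When $\mathscr{E}_1$ holds but $\mathscr{E}_2(k,z,\mathcal{S})$ fails (the LP in \eqref{feasible} is infeasible for Node $k$), one cannot use efficient joint decoding for Node $k$ and must fall back on a per-subset transmission whose cost is dictated by the worst transmitter $\max_{j\in\mathcal{S}}\sum_{\mathcal{S}_1\ni j,\not\ni k}|\mathcal{V}_{\mathcal{S}_1}^{\mathcal{S}\setminus\mathcal{S}_1}|$, now shared over only $z$ recipients — this gives the $\tfrac{1}{z}$ third term and simultaneously forces us to subtract the $n_{k,z}^{\mathcal{S}}$ saving in the second term (since that saving was never realized).

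The hard part will be rigorously proving decodability under $\mathscr{E}_2$: one must show that the abstract $\beta$'s produced by the feasibility LP can be realized by actual coded symbols whose decoding matrices are simultaneously invertible at every recipient. A Schwartz--Zippel argument for random coefficients over $\mathbb{F}_{2^V}$, combined with a union bound over all recipients and all $(\mathcal{S},z)$ pairs, should close this gap by showing the decoding-failure probability vanishes as $V$ grows. A secondary difficulty is the careful bookkeeping: aligning $\beta_{j,\mathcal{S}_1}$ across the different recipients $i$ that each transmitter serves, respecting the per-transmitter budget $(n_{j,z}^{\mathcal{S}})^+$ without double-counting, and verifying that the claimed $\tfrac{1}{|\mathcal{S}|-1}$ multicast sharing truly materializes under the chosen assignment — these are the most delicate pieces of the accounting.
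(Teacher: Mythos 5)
Your high-level architecture (clusters and rounds, splitting each IV into $|\mathcal{S}|-1$ segments, random linear coding with joint decoding, Schwartz--Zippel over a large field) matches the paper, but the proposal leaves the crux unresolved and misattributes where two of the three terms come from. The central issue is your treatment of the $\beta$'s: you propose to organize the \emph{actual transmissions} per recipient $i$ and per subset $\mathcal{S}_1$ with sizes $\beta_{j,\mathcal{S}_1}$, and then to ``align'' the allocations designed for different recipients so that each symbol serves all $|\mathcal{S}|-1$ of them --- and you yourself flag this alignment as the delicate open piece. In the paper no alignment is needed: Node $k$ simply multicasts $\bar{n}_{k,z}^{\mathcal{S}}$ generic random linear combinations of \emph{all} segments it knows in that round, where $\bar{n}_{k,z}^{\mathcal{S}}=(n_{k,z}^{\mathcal{S}})^+$ in the good case; the per-recipient $\beta$ solution of \eqref{feasible} is used only inside the decodability proof, to exhibit a ``non-zero path'' and hence a square submatrix of the (structured, partially zero) decoding matrix with random entries on its diagonal, to which Schwartz--Zippel is applied. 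The counting identity that makes this work, $\sum_{j\in\mathcal{S}\setminus\{i\}} n_{j,z}^{\mathcal{S}}=(|\mathcal{S}|-1)\sum_{\mathcal{S}_1\subseteq\mathcal{S}:|\mathcal{S}_1|=z,\,i\notin\mathcal{S}_1}|\mathcal{V}_{\mathcal{S}_1}^{\mathcal{S}\setminus\mathcal{S}_1}|$ (so each receiver gets exactly as many equations as unknowns when all deficits are nonnegative), is absent from your argument; without it, ``each subset receives at least $(|\mathcal{S}|-1)|\mathcal{V}_{\mathcal{S}_1}^{\mathcal{S}\setminus\mathcal{S}_1}|$ coded bits'' does not by itself give a simultaneous realization for all receivers, nor the claimed $\tfrac{1}{|\mathcal{S}|-1}$ sharing.

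The mechanisms you give for the correction terms would also not produce the stated formula. For a node $k$ with $\mathds{1}(\mathscr{E}_1(k,z,\mathcal{S}))=0$ nothing extra is transmitted at all: that node simply sends $(n_{k,z}^{\mathcal{S}})^+=0$ combinations, and the $|n_{k,z}^{\mathcal{S}}|$ term is a pure bookkeeping correction obtained by writing the achieved load $\tfrac{1}{|\mathcal{S}|-1}\sum_{k:\,\mathscr{E}_1=\mathscr{E}_2=1} n_{k,z}^{\mathcal{S}}$ in terms of $\sum_{k\in\mathcal{S}} n_{k,z}^{\mathcal{S}}=(|\mathcal{S}|-z)\sum_i|\mathcal{V}_{\mathcal{S}_z[i]}^{\mathcal{S}\setminus\mathcal{S}_z[i]}|$, which yields the $a_{t,d}$ term; your ``extra $|n_{k,z}^{\mathcal{S}}|$ uncoded symbols'' would moreover be unicast and could not be amortized by $\tfrac{1}{|\mathcal{S}|-1}$. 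Similarly, when $\mathscr{E}_2$ fails the paper does not fall back to per-subset one-shot delivery: it enlarges the number of jointly decoded random combinations to $\bar{n}_{k,z}^{\mathcal{S}}=\tfrac{|\mathcal{S}|-1}{z}\max_{j\neq k}\sum_{\mathcal{S}_1\ni j,\,k\notin\mathcal{S}_1}|\mathcal{V}_{\mathcal{S}_1}^{\mathcal{S}\setminus\mathcal{S}_1}|$, chosen precisely so that the canonical assignment $\beta_{j,\mathcal{S}_1}=\tfrac{|\mathcal{S}|-1}{z}|\mathcal{V}_{\mathcal{S}_1}^{\mathcal{S}\setminus\mathcal{S}_1}|$ satisfies \eqref{feasible}, and this is what produces the $\tfrac{1}{z}\max_j(\cdot)$ third term; a per-subset fallback has no reason to cost exactly that amount. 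To close the proof you would need (i) the exact equation/unknown count above, (ii) the structured decoding-matrix argument (or an equivalent) showing the $\beta$-certificate implies full rank with high probability, and (iii) the load decomposition that converts the clipping at $(n_{k,z}^{\mathcal{S}})^+$ and the inflation under $\mathscr{E}_2$-failure into the second and third terms.
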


 We present the proposed scheme FSCT in Section \ref{SchFewShot} as proof of Theorem \ref{ThmFewshot}. The reader can also refer to the toy example in Section \ref{SecToyExample} and the example in Section \ref{ExFSCT} for a more intuitive illustration of the design of FSCT and the meaning of corresponding parameters.

 \begin{remark}
Here $n_{i,z}^{\mathcal{S}}$ represents the \emph{designed} number of linear combinations (LCs) transmitted by Node $i$  with regard to the $(z,\mathcal{S})$-mapped IVs. The first term in $L_{\textnormal{fsct}}$ corresponds to the lower bound provided in \eqref{lowerbound}. When some nodes that map fewer IVs require more IVs, the severe heterogeneity characterized by the violation of deficit condition leads to the second term in $L_{\textnormal{fsct}}$. Besides, the designed values of $n_{i,z}^{\mathcal{S}}$ for some $i\in\mathcal{S}$ may be too small to satisfy the feasible condition. To make all the encoded messages sent in FSCT decodable, we need extra communication cost that leads to the third term in $L_{\textnormal{fsct}}$.
 \end{remark}

\begin{theorem}
 If the deficit condition and the feasible condition are both satisfied for each node with $(z,\mathcal{S})$-mapped IVs, i.e., 
\begin{IEEEeqnarray}{rCl}\label{conditionOptThm4}
 \mathds{1}(\mathscr{E}_1(i,z,\mathcal{S}))=\mathds{1}(\mathscr{E}_2(i,z,\mathcal{S}))=1, ~\forall i,\mathcal{S}, z,
\end{IEEEeqnarray}
 then the communication load in Theorem \ref{th3} is optimal.\label{ThmOptFSCT}
\end{theorem}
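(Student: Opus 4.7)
The plan is to prove optimality by a matching-bounds argument: under hypothesis \eqref{conditionOptThm4}, the achievable expression $L_{\textnormal{fsct}}$ in Theorem \ref{th3} collapses term-by-term onto the converse lower bound in Lemma \ref{LemmaLower}, so the two sandwich the optimal communication load to the same value. No new inequality needs to be developed; the work is purely bookkeeping on the indicator-gated summands of $L_{\textnormal{fsct}}$.

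First I would isolate the three pieces of $L_{\textnormal{fsct}}$. The first piece, $\frac{1}{QN}\sum_{t=1}^{K}\sum_{d=1}^{K-t}a_{t,d}\frac{d}{t+d-1}$, is already identical to the right-hand side of Lemma \ref{LemmaLower}. The second piece contains two inner sums, indexed respectively by $\{k\in\mathcal{S}:\mathds{1}(\mathscr{E}_1(k,z,\mathcal{S}))=0\}$ and $\{k\in\mathcal{S}:\mathds{1}(\mathscr{E}_1(k,z,\mathcal{S}))=1,\,\mathds{1}(\mathscr{E}_2(k,z,\mathcal{S}))=0\}$. Under \eqref{conditionOptThm4} both indicators equal $1$ for every $(i,\mathcal{S},z)$, so both index sets are empty for every $\mathcal{S}$ and $z$, and the second piece vanishes identically. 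The third piece is indexed by $\{i\in\mathcal{S}:\mathds{1}(\mathscr{E}_2(i,z,\mathcal{S}))=0\}$, which is empty as well, so it too vanishes.

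What remains is exactly $L_{\textnormal{fsct}}=\frac{1}{QN}\sum_{t=1}^{K}\sum_{d=1}^{K-t}a_{t,d}\frac{d}{t+d-1}$. Combining the achievability inequality $L^{*}(\mathcal{M},\mathcal{W})\leq L_{\textnormal{fsct}}$ provided by Theorem \ref{th3} with the converse $L^{*}(\mathcal{M},\mathcal{W})\geq\frac{1}{QN}\sum_{t,d}a_{t,d}\frac{d}{t+d-1}$ from Lemma \ref{LemmaLower} yields $L^{*}(\mathcal{M},\mathcal{W})=L_{\textnormal{fsct}}$, establishing optimality.

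The main obstacle is not in this step but upstream, in Theorem \ref{th3} itself: one must trust that when the deficit and feasible conditions hold, the FSCT scheme really does deliver every required IV with amortized cost exactly $d/(t+d-1)$, i.e., incurs no overhead beyond the first summand. That is precisely what the non-negative rational solution of the linear system \eqref{feasible} guarantees, allowing the encoded message sizes $n_{i,z}^{\mathcal{S}}$ (and the auxiliary $\beta_{j,\mathcal{S}_1}$) to be realized as actual symbol counts. Once Theorem \ref{th3} is accepted, the present theorem is a one-line consequence of the indicator-driven vanishing described above.
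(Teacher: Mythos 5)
Your proposal is correct and follows essentially the same route as the paper's proof: under the hypothesis the two indicator-gated terms of $L_{\textnormal{fsct}}$ vanish because their index sets are empty, and the remaining term coincides with the lower bound of Lemma \ref{LemmaLower}, so achievability and converse match. Nothing further is needed beyond trusting Theorem \ref{th3}, exactly as the paper does.
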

\begin{IEEEproof}
 See Appendix \ref{ConverseFSCTOpt}.
\end{IEEEproof}

 The following corollary states the optimality of $L_{\textnormal{fsct}}$ in several cases.

\begin{Corollary}\label{FSCTcoro}
 The communication load in Theorem \ref{ThmFewshot} is optimal for the homogeneous system and semi-homogeneous system defined in Definition \ref{homo} and \ref{semi-sys} respectively. For the 3-node system considered in \cite{ThreeWorker}, which allows any data placement $\mathcal{M}$ and symmetric Reduce function assignment $\mathcal{W}$ with $|\mathcal{W}_i|=1,\mathcal{W}_i\cap \mathcal{W}_j=\emptyset$ for all $i\neq j$ and $i,j\in[3]$, the communication load in Theorem \ref{ThmFewshot} is optimal if the pre-set $\mathcal{M}$ and $\mathcal{W}$ are the same as in \cite{ThreeWorker}.
\end{Corollary}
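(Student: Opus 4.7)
By Theorem \ref{ThmOptFSCT}, it suffices to verify that both the deficit condition $\mathscr{E}_1(i,z,\mathcal{S})$ and the feasible condition $\mathscr{E}_2(i,z,\mathcal{S})$ hold for every triple $(i,z,\mathcal{S})$ in each of the three target systems. My plan is to reduce each system to a highly symmetric calculation, verify $\mathscr{E}_1$ via direct computation of the deficit ratio, and exhibit a closed-form symmetric choice of $\{\beta_{j,\mathcal{S}_1}\}$ that satisfies $\mathscr{E}_2$.

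For the homogeneous system with computation load $r$ and replication $s$, the key observation is that the only triples with nonzero $|\mathcal{V}_{\mathcal{S}_1}^{\mathcal{S}\backslash\mathcal{S}_1}|$ are those with $|\mathcal{S}_1|=z=r$ and $|\mathcal{S}|=r+s$, since each file is mapped by exactly $r$ nodes and each Reduce function is computed by exactly $s$ nodes; all other triples satisfy both conditions trivially with the zero solution. For the nontrivial triples, cyclic symmetry gives $|\mathcal{V}_{\mathcal{S}_1}^{\mathcal{S}\backslash\mathcal{S}_1}|=c:=NQ/(\binom{K}{r}\binom{K}{s})$ for all $\mathcal{S}_1$, so the deficit ratio equals $\binom{r+s-1}{r}/\binom{r+s-1}{r-1}=s/r\leq s/(r-1)=(|\mathcal{S}|-z)/(z-1)$, verifying $\mathscr{E}_1$. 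I then compute $n_{j,z}^{\mathcal{S}}=(s/r)\binom{r+s-1}{r-1}c$ from \eqref{nAndBeta} and exhibit the uniform choice $\beta_{j,\mathcal{S}_1}=(r+s-1)c/r$, under which both inequalities of \eqref{feasible} hold with equality (using $\binom{r+s-1}{r-1}/\binom{r+s-2}{r-1}=(r+s-1)/s$).

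For the semi-homogeneous system, I decompose the IVs by Reduce-function group. For any fixed $\mathcal{S}$ with $z=r$, only the single group $\mathcal{Q}_{|\mathcal{S}|-r}$ contributes to $\mathcal{V}_{\mathcal{S}_1}^{\mathcal{S}\backslash\mathcal{S}_1}$, because $|\mathcal{S}\backslash\mathcal{S}_1|=|\mathcal{S}|-r$ must match the replication level of the group containing the target Reduce function. Within that group the cardinalities are again constant over $\mathcal{S}_1$ by symmetry, and the verification reduces \emph{verbatim} to the homogeneous analysis with parameters $(r,|\mathcal{S}|-r,|\mathcal{Q}_{|\mathcal{S}|-r}|)$. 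For the 3-node system of \cite{ThreeWorker}, I fix $\mathcal{M},\mathcal{W}$ to the placement of \cite{ThreeWorker} and enumerate the finitely many clusters $\mathcal{S}\subseteq[3]$ with $z\in\{r_{\min},\ldots,|\mathcal{S}|-1\}$. The constraint $|\mathcal{W}_i|=1$ with pairwise disjoint $\mathcal{W}_i$ forces $|\mathcal{S}\backslash\mathcal{S}_1|=1$ whenever $|\mathcal{V}_{\mathcal{S}_1}^{\mathcal{S}\backslash\mathcal{S}_1}|\neq 0$, so the verification reduces to a small case check on file counts of the specific placement, for which the symmetric structure of \cite{ThreeWorker}'s design yields both conditions.

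The main obstacle is showing feasibility of \eqref{feasible} in both inequality directions simultaneously: the lower bounds scale with $|\mathcal{V}|$ and the upper bounds with $n_{j,z}^{\mathcal{S}}$, and $\mathscr{E}_1$ alone is necessary but not obviously sufficient. The symmetric structure of the homogeneous and semi-homogeneous systems makes a common uniform $\beta$ align the two bounds, while the 3-node case additionally requires matching \cite{ThreeWorker}'s specific parameter choices; once this is done, the remaining steps are standard binomial bookkeeping.
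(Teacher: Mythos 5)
Your plan rests entirely on Theorem \ref{ThmOptFSCT}: verify $\mathscr{E}_1$ and $\mathscr{E}_2$ for every triple. For the 3-node part of the corollary this cannot work, because the deficit condition genuinely fails for some admissible placements: in round $z=2$ of $\mathcal{S}=\{1,2,3\}$ one has $n_{1,2}^{\{1,2,3\}}=S_{12}+S_{13}-S_{23}$ (with $S_{ij}$ the number of files stored exactly by $\{i,j\}$), which is negative whenever $S_{23}>S_{12}+S_{13}$ --- a data placement allowed in \cite{ThreeWorker}. In that regime Theorem \ref{ThmOptFSCT} is silent, and your claim that ``the symmetric structure of \cite{ThreeWorker}'s design yields both conditions'' is false. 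The paper handles this case differently: it evaluates $L_{\textnormal{fsct}}$ from Theorem \ref{ThmFewshot} directly, keeping the penalty term $|n_{1,2}^{\mathcal{S}}|$ for the node violating $\mathscr{E}_1$, verifies the feasible condition explicitly for the remaining nodes, obtains the load $2(S_1+S_2+S_3)+S_{23}$ (versus $2(S_1+S_2+S_3)+\tfrac{1}{2}(S_{12}+S_{13}+S_{23})$ in the balanced case), and shows it coincides with the optimal load $g(S_{12},S_{13},S_{23})$ of \cite[Lemma~1]{ThreeWorker}. Without this direct load computation and comparison, the 3-node statement is not established; this is a missing idea, not a bookkeeping detail.

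For the homogeneous and semi-homogeneous systems your route is essentially the paper's (the uniform choice $\beta_{j,\mathcal{S}_1}=\frac{|\mathcal{S}|-1}{z}\big|\mathcal{V}_{\mathcal{S}_1}^{\mathcal{S}\backslash\mathcal{S}_1}\big|$ meets both inequalities of \eqref{feasible} with equality), but two of your structural claims are incorrect even though the conclusion survives. In the homogeneous system the nonzero sets are not confined to $|\mathcal{S}|=r+s$: for any cluster with $\max\{r+1,s\}\le|\mathcal{S}|\le\min\{K,r+s\}$ and $z=r$, a function whose reducing set $\mathcal{P}$ (of size $s$) satisfies $\mathcal{S}\backslash\mathcal{S}_1\subseteq\mathcal{P}\subseteq\mathcal{S}$ contributes, since $\mathcal{P}$ may intersect $\mathcal{S}_1$; this gives $\big|\mathcal{V}_{\mathcal{S}_1}^{\mathcal{S}\backslash\mathcal{S}_1}\big|=\frac{NQ}{\binom{K}{r}\binom{K}{s}}\binom{r}{|\mathcal{S}|-s}$, and these intermediate clusters enter both the hypothesis of Theorem \ref{ThmOptFSCT} and the load formula, so they are not ``trivial with the zero solution'' and must be checked (the same symmetric computation does cover them, exactly as in the paper's Appendix \ref{Ca}). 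Likewise, in the semi-homogeneous system every group $\mathcal{Q}_s$ with $|\mathcal{S}|-r\le s\le|\mathcal{S}|$ contributes to $\mathcal{V}_{\mathcal{S}_1}^{\mathcal{S}\backslash\mathcal{S}_1}$, not only $\mathcal{Q}_{|\mathcal{S}|-r}$; what saves the argument is merely that the resulting cardinality $\sum_{s=|\mathcal{S}|-r}^{|\mathcal{S}|}\frac{\binom{r}{|\mathcal{S}|-s}}{\binom{K}{r}\binom{K}{s}}Q_sN$ is constant over $\mathcal{S}_1$, not your stated verbatim reduction with parameters $(r,|\mathcal{S}|-r,|\mathcal{Q}_{|\mathcal{S}|-r}|)$. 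Please repair both the 3-node argument and these two characterizations.
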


 \begin{IEEEproof}
 See Appendix \ref{AppFSCTCoro}.
 \end{IEEEproof}

 The corollary above shows that our scheme is optimal in several cases including the optimal results in \cite{CDC} and \cite{ThreeWorker}.

\begin{remark}\label{CompareTwoschemes}
The direct comparison between OSCT and FSCT is difficult since in general there is no explicit solution to the optimization problem in OSCT, and the load of FSCT depends on the number of nodes satisfying the deficit and feasible conditions. 
By examples and simulation results in the following sections, we find that FSCT potentially improves OSCT, as the former allows the node to jointly decode multiple message blocks sent by different nodes. However, OSCT holds its merit of lower complexity and can also achieve optimal communication in some scenarios. Specifically, the first example of Section \ref{SecToyExample} shows that the OSCT can achieve the same optimal communication load with FSCT by splitting IVs into fewer pieces.
\end{remark}

\subsection{Comparison with Previous Works}\label{SecComparison}
We compare our OSCT and FSCT with the existing works on the HetDC systems \cite{ThreeWorker}, \cite{Ji}, and \cite{Tao},  from the perspective of applicability, optimality, and complexity, respectively.  The main differences are summarized in Table \ref{compare}.

\begin{table*}[!htbp]
  \caption{Comparison with previous works in \cite{Tao}, \cite{Ji} and \cite{ThreeWorker}}
  \centering\label{compare}
  \begin{tabular}{|l|c|c|c|c|c|} \hline
   & Scheme in \cite{Tao} & Scheme in \cite{Ji} & Scheme in \cite{ThreeWorker} & OSCT & FSCT\\
   \hline
   \tabincell{c}{Pre-set $\{\mathcal{M}_k\}\&\{\mathcal{W}_k\}$} & \tabincell{c}{ non-applicable} & non-applicable & non-applicable & applicable & applicable \\
   \hline
   3-node system & optimal & unknown & optimal & optimal & optimal\\
   \hline
   \tabincell{c}{Homo. system} & optimal & optimal & optimal & optimal & optimal\\
   \hline
   \tabincell{c}{Semi-homo. system} & non-applicable & non-applicable & non-applicable & optimal & optimal\\
   \hline
   Other optimal cases & - & - & - & Eq. \eqref{conditionOptThm2} & Eq. \eqref{conditionOptThm4} \\
   \hline
  \end{tabular}
  \end{table*}

\subsubsection{Applicability} For the general MapReduce system with pre-set data placement and Reduce function assignment, the scheme in \cite{Ji} is not applicable since the data placement may not follow their combinatorial design, in which each file and each Reduce function are required to be mapped or computed exactly $r$ times. The scheme proposed in \cite{Tao} is also not applicable since the Reduce function assignment in \cite{Tao} requires that $\mathcal{W}_i\cap\mathcal{W}_j=\emptyset$, for all $i\neq j$, and their scheme jointly designs data placement and transmission strategy to reduce the communication load. The scheme presented in \cite{ThreeWorker} is only valid for the 3-node system and requires the symmetric Reduce function assignment $\mathcal{W}$ with $|\mathcal{W}_i|=1,\mathcal{W}_i\cap \mathcal{W}_j=\emptyset$ for all $i\neq j$ and $i,j\in[3]$, and thus is infeasible for the general HetDC system. Both OSCT and FSCT are designed according to the pre-set data placement and reduce assignment (see Section \ref{SchOneshot} and \ref{SchFewShot}), thus are applicable for any prefixed HetDC system. Our work is the first attempt to propose feasible schemes for general HetDC systems with arbitrary and pre-set data placement and Reduce function assignment.

 \subsubsection{Optimality} It can be shown that all the schemes in comparison are optimal for homogeneous systems as in this case they are all reduced to the original CDC scheme in \cite{CDC}. For the 3-node system considered in \cite{ThreeWorker}, the optimality has been verified by the schemes except for the scheme in \cite{Ji}.  For the semi-homogeneous systems, Corollary \ref{OSCTCorollary} and \ref{FSCTcoro} show that OSCT and FSCT are both optimal in this case, while other schemes are non-applicable. More specifically, in \cite{Tao} and \cite{ThreeWorker}, they only considered the systems where each Reduce function is computed exactly once, and in \cite{Ji} the design of Reduce function assignment requires that each function is computed $r$ times, and thus can not be applied to the semi-homogeneous system when there exists some batch of Reduce functions $\mathcal{Q}_s \neq \emptyset$ with $s\neq r$.

 \section{One-shot Coded Transmission (Proof of Theorem \ref{ThmOneshot})} \label{SchOneshot} 
 By grouping nodes into clusters with different sizes and carefully categorizing IVs, we can characterize the heterogeneity from the perspective of IVs. The key idea of OSCT is to exploit the multicast opportunity for each set of $(z,\mathcal{S})$-mapped IVs by precisely designing the size of messages sent by every node with the help of parameters $\alpha_{j,z}^{\mathcal{S}}$ obtained by solving an optimization problem. 
More specifically, we first group the nodes into different clusters $\mathcal{S}\subseteq [K]$ of size $\max\{r_{\min}+1,q_{\min}\}\leq |\mathcal{S}|\leq \min\{K,r_{\min}+q_{\min}\}$ and divide the sending process in each cluster $\mathcal{S}$ into multiple rounds, indexed as $z\in\{r_{\min},r_{\min}+1,\cdots,|\mathcal{S}|-1\}$. By solving an optimization problem to maximize the IVs to be multicasted, the $(z,\mathcal{S})$-mapped IVs will be sent in round $z$ by messages of different sizes determined by $\alpha_{j,z}^{\mathcal{S}}$.

 \subsection{Definitions of Sending Clusters and Rounds}
 Note that after the Map phase, Node $k\in [K]$ has obtained its local IVs $\{v_{q,n}:q\in [Q], n\in \mathcal{M}_k\}$, and it requires IVs in $\{v_{q,n}:q\in \mathcal{W}_k, n\notin \mathcal{M}_k\}$. 
 
 Given the pre-set $\mathcal{M}$ and $\mathcal{W}$, first compute the minimum Mapping times $r_{\textnormal{min}} = \min\limits_{1\leq i\leq N}r_i$ and the minimum Reducing times $q_{\textnormal{min}} = \min\limits_{1\leq i\leq Q}q_i$. We call each subset $\mathcal{S}\subseteq [K]$ of size $\max\{r_{\min}+1,q_{\min}\}\leq |\mathcal{S}|\leq \min\{K,r_{\min}+q_{\min}\}$ a \emph{sending cluster}. For each sending cluster $\mathcal{S}$, any file $w_n$ exclusively stored by nodes in $\mathcal{S}$, i.e., $n\in\cup_{k\in\mathcal{S}}\mathcal{M}_k, n\notin \cup_{k\in[K]\backslash\mathcal{S}}\mathcal{M}_k,$ is possibly stored by $z\in\{r_\text{min},r_\text{min}+1, \ldots, |\mathcal{S}|\}$ nodes in $\mathcal{S}$. Thus, we divide the sending process into multiple \emph{rounds}, indexed as $z\in\{r_{\min},r_\text{min}+1,\cdots,|\mathcal{S}|-1\}$. The round $z$ is skipped if there exists no file in $(z,\mathcal{S})$-mapped IVs.

 In round $z$, there are $\binom{|\mathcal{S}|}{z}$ subsets of $\mathcal{S}$ of size $z$, namely, $ \mathcal S_z[1], \mathcal S_z[2], \cdots, \mathcal S_z[\binom{|\mathcal{S}|}{z}]$. According to the definition in \eqref{szmap}, we can write the set of IVs needed by all nodes in $\mathcal S\backslash \mathcal S_z[i]$, no nodes outside $\mathcal S$, and known \emph{exclusively} by nodes in $\mathcal \mathcal S_z[i]$ as 
 $\mathcal {V}_{\mathcal {S}_z[i]}^{\mathcal {S} \backslash \mathcal {S}_z[i]}$ for $i\in \big[\binom{|\mathcal{S}|}{z}\big]$, i.e., 
 \begin{align}
 \mathcal {V}_{\mathcal{S}_z[i]}^{\mathcal {S} \backslash \mathcal{S}_z[i]}&=\{v_{q,n}: q \in \underset {k \in \mathcal {S} \backslash {S}_z[i]}{\cap } {\mathcal{ W}}_{k}, q\notin \underset {k \notin \mathcal {S}}{\cup } {\mathcal{ W}}_{k},  n \in \underset {k \in {S}_z[i]}{\cap } \mathcal {M}_{k}, n\notin \underset {k \notin {S}_z[i]}{\cup } {\mathcal{ M}}_{k}\}. \label{zIVs}
 \end{align}

  Then, the $(z,\mathcal{S})$-mapped IVs in can be written as $\mathcal{V}_{z}^{\mathcal S} = \bigcup_{i=1}^{\binom{|\mathcal{S}|}{z}}\mathcal {V}_{\mathcal S_z[i]}^{\mathcal {S} \backslash \mathcal {S}_z[i]}$. As we will see, all IVs in $\mathcal{V}_{\mathcal {S}_z[i]}^{\mathcal {S} \backslash \mathcal {S}_z[i]}$ will be jointly sent out by the $z$ nodes in $\mathcal{S}_z[i]$ and all IVs in $\mathcal{V}_{z}^{\mathcal S}$ will be sent in round $z$ of cluster $\mathcal{S}$. Since the sizes of $\mathcal {V}_{\mathcal {S}_z[i]}^{\mathcal {S} \backslash \mathcal {S}_z[i]}$ for $i\in[\binom{|\mathcal S|}{z}]$ are not necessarily equal, we can not simply partition each IVs set into $z$ parts and let each node send one part. Instead, we design an optimization-based transmission strategy as follows to determine the size of each message.
 
 \subsection{Transmission Strategy of OSCT}
 In each sending cluster $\mathcal{S}$ and round $z$, introduce a parameter vector $\boldsymbol{\alpha}_z^S=(\alpha^S_{j,z}: j\in\mathcal{S})$ in the descending order of $j$, where $\{\alpha^S_{j,z}\}$ is the solution to the problem in $\mathcal{P}_{\textnormal{OSCT}}(\mathcal{S},z)$ present in Theorem \ref{ThmOneshot}. For convenience, we rewrite $\mathcal{P}_{\textnormal{OSCT}}(\mathcal{S},z)$ here:
 \begin{align}
 & \min_{\alpha_{j,z}^{\mathcal{S}}} \sum_{i=1}^{\binom{|\mathcal S|}{z}}\Big(\left|\mathcal{V}_{\mathcal {S}_z[i]}^{\mathcal {S} \backslash \mathcal {S}_z[i]}\right|-\sum_{j\in\mathcal S_z[i]}\alpha_{j,z}^{\mathcal{S}}\Big)^2 \tag{7a}\\
 &\quad {s.t.}~\alpha_{j,z}^{\mathcal{S}}\in \mathbb{Q}^+, \forall j\in\mathcal S\tag{7b}\\
 &\ \ \ \ \ \ \sum_{j\in \mathcal \mathcal S_z[i]}\alpha_{j,z}^{\mathcal{S}}=0,\ \text{if}\ |\mathcal{V}_{\mathcal {S}_z[i]}^{\mathcal {S} \backslash \mathcal {S}_z[i]}|=0, \forall i \in \left[\tbinom{|\mathcal{S}|}{z}\right]\tag{7c}.
 \end{align} 
 
 Let $$\tau_{i,z}^{\mathcal S} \triangleq \left|\mathcal{V}_{\mathcal {S}_z[i]}^{\mathcal {S} \backslash \mathcal {S}_z[i]}\right|-\sum_{j\in \mathcal S_z[i]}\alpha_{j,z}^{\mathcal S},~\forall i \in [\tbinom{|\mathcal{S}|}{z}]. $$ 
  
 We first concatenate all IVs in $\mathcal{V}_{\mathcal {S}_z[i]}^{\mathcal {S} \backslash \mathcal {S}_z[i]}$ to construct a symbol $U_{\mathcal {S}_z[i]}^{\mathcal {S} \backslash \mathcal {S}_z[i]}$, and then split $U_{\mathcal {S}_z[i]}^{\mathcal {S} \backslash \mathcal {S}_z[i]}$ as follows.
  
 If ${\tau}_{i,z}^{\mathcal S}\leq 0$, concatenate extra $|{\tau}_{i,z}^{\mathcal S}|\cdot V$ bits of 0 to the end of $U_{\mathcal {S}_z[i]}^{\mathcal {S} \backslash \mathcal {S}_z[i]}$, and then split $U_{\mathcal {S}_z[i]}^{\mathcal {S} \backslash \mathcal {S}_z[i]}$ into $z$ segments, i.e.,
 \begin{IEEEeqnarray}{rCl}\label{EqSegments}
  U_{\mathcal {S}_z[i]}^{\mathcal {S} \backslash \mathcal {S}_z[i]}\! =\! \left({U}_{\mathcal {S}_z[i],\sigma_1}^{\mathcal {S} \backslash \mathcal {S}_z[i]},{U}_{\mathcal {S}_z[i],\sigma_2}^{\mathcal {S} \backslash \mathcal {S}_z[i]},\cdots,{U}_{\mathcal {S}_z[i],\sigma_z}^{\mathcal {S} \backslash \mathcal {S}_z[i]}\right),\quad 
 \end{IEEEeqnarray}
 where ${U}_{\mathcal {S}_z[i],\sigma_j}^{\mathcal {S} \backslash \mathcal {S}_z[i]}\in \mathbb{F}_{2^{ (\alpha^{\mathcal{S}}_{\sigma_j,z}) V}}$ and $\sigma_j\in \mathcal{S}_z[i]$, for $j\in[z]$. Note that, unlike the original CDC scheme where each segment has equal size, here each segment ${U}_{\mathcal {S}_z[i],\sigma_j}^{\mathcal {S}}$ is of size $(\alpha^{\mathcal{S}}_{\sigma_j,z}) V$ bits, determined by solution to the optimization problem $\mathcal{P}_{\textnormal{OSCT}}(\mathcal{S},z)$. 
 If ${\tau}_{i,z}^{\mathcal S}> 0$, we directly split $U_{\mathcal {S}_z[i]}^{\mathcal {S} \backslash \mathcal {S}_z[i]}$ (without any 0 concatenated at the end) into $z+1$ segments. The first $z$ segments are the same as \eqref{EqSegments} (i.e., with size of $(\alpha^{\mathcal{S}}_{\sigma_j,z})V$ bits for each $\sigma_j\in \mathcal{S}_z[i]$, $j\in [z]$), and the $(z+1)$-th segment is ${U}_{{\tau}_{i,z}^{\mathcal S}> 0}\in \mathbb{F}_{2^{({\tau}_{i,z}^{\mathcal S})V}}$, i.e.,
 \begin{IEEEeqnarray}{rCl}\label{EqSegments2}
  U_{\mathcal {S}_z[i]}^{\mathcal {S} \backslash \mathcal {S}_z[i]}\! =\! \left({U}_{\mathcal {S}_z[i],\sigma_1}^{\mathcal {S} \backslash \mathcal {S}_z[i]},{U}_{\mathcal {S}_z[i],\sigma_2}^{\mathcal {S} \backslash \mathcal {S}_z[i]},\cdots,{U}_{\mathcal {S}_z[i],\sigma_z}^{\mathcal {S} \backslash \mathcal {S}_z[i]},{U}_{{\tau}_{i,z}^{\mathcal S}> 0}\right).\quad 
 \end{IEEEeqnarray}
 
 Here $\alpha_{\sigma_j,z}^{\mathcal{S}}V$ represents the size of segment ${U}_{\mathcal {S}_z[i],\sigma_j}^{\mathcal {S} \backslash \mathcal {S}_z[i]}$, and thus must be non-negative, leading to the constraint {(7b)} in $\mathcal{P}_{\textnormal{OSCT}}(\mathcal{S},z)$. Besides, if the $|\mathcal{V}_{\mathcal {S}_z[i]}^{\mathcal {S} \backslash \mathcal {S}_z[i]}|=0$, then the size of each splitted segments $|U_{\mathcal {S}_z[i], \sigma_j}^{\mathcal {S} \backslash \mathcal {S}_z[i]}|=0$, so the corresponding $\alpha_{\sigma_j,z}^{\mathcal{S}}$ for $\sigma_j\in\mathcal{S}_z[i]$ is 0, leading to the condition (7c) in $\mathcal{P}_{\textnormal{OSCT}}(\mathcal{S},z)$.
 In other words, this constraint {guarantees} that $z$ nodes in $\mathcal{S}_z[i]$ will jointly map at least one IV, otherwise the corresponding $\alpha_{\sigma_j,z}^{\mathcal{S}}$ for $\sigma_j\in\mathcal{S}_z[i]$ will be set to 0 in this round.

 In the following, we present how to deliver all segments in \eqref{EqSegments} or \eqref{EqSegments2}. 
 \subsubsection{Encoding} 
 If ${\tau}_{i,z}^{\mathcal S}> 0$, the $(z+1)$-th segment ${U}_{{\tau}_{i,z}^{\mathcal S}> 0}\in \mathbb{F}_{2^{(\tau_{i,z}^{\mathcal S})V}}$ in \eqref{EqSegments2} is directly unicasted by an arbitrary node in $\mathcal{S}_{z}[i]$ to the nodes in $\mathcal{S}\backslash\mathcal{S}_z[i]$. 
 
 For segments in \eqref{EqSegments} or the first $z$ segments in \eqref{EqSegments2}, i.e., $U_{\mathcal {S}_z[i], \sigma_j}^{\mathcal {S} \backslash \mathcal {S}_z[i]}$ for $i\in\big[\binom{|\mathcal{S}|}{z}\big]$, $j\in[z]$, Node $k\in \mathcal{S}$ encodes the segment vector $\mathbf{u}_{k,z}^{\mathcal{S}}=\big({U}_{\mathcal S_z[i],k}^{\mathcal {S} \backslash \mathcal S_z[i]}:i\in\big[\tbinom{|\mathcal{S}|}{z}\big], k\in \mathcal S_z[i]\big)$ including $\binom{|\mathcal S|-1}{z-1}$ segments  each of size $(\alpha_{k,z}^{\mathcal{S}}) V$ bits, into $\binom{|\mathcal{S}|-2}{z-1}$ numbers of LCs. Specifically, Node $k$ multicasts a message block $\big(X_{k,z}^{\mathcal{ S}}[{1}],X_{k,z}^{\mathcal{ S}}[{2}],\cdots,X_{k,z}^{\mathcal{ S}}[{\binom{|\mathcal S|-2}{z-1}}]\big)$ as follows:
 \begin{align} 
 \begin{bmatrix} 
 X_{k,z}^{\mathcal{ S}}[1] \\ 
 X_{k,z}^{\mathcal{ S}}[2] \\ 
 \vdots \\ 
 X_{k,z}^{\mathcal{ S}}[\binom{|\mathcal S|-2}{z-1}] 
 \end{bmatrix} 
 \!=\! 
 {\mathbf V}_z\cdot
 \mathbf{u}_{k,z}^{\mathcal{S}}, \label{block}
 \end{align} 
 where 
 \begin{align*}
 {\mathbf V}_z=\begin{bmatrix}\! 
  1 & 1 & \cdots & 1 \\ 
  v_1 & v_2 & \cdots &v_{\binom{|\mathcal S|-1}{z-1}} \\ 
  \vdots & \vdots & \ddots & \vdots \\ 
  v_1^{\binom{|\mathcal S|-2}{z-1}-1} & v_2^{\binom{|\mathcal S|-2}{z-1}-1} & \cdots & v_{\binom{|\mathcal S|-1}{z-1}}^{\binom{|\mathcal S|-2}{z-1}-1} \!
  \end{bmatrix}
 \end{align*}
 is a Vandermonde matrix\footnote{The Vandermonde matrix $\textbf{V}_z$ in sending clusters $\mathcal{S}$ and $\mathcal{S'}$ can be the same if $|\mathcal S|=|\mathcal S'|$.} for some coefficients $v_j\in \mathbb{F}_{2^{(\alpha^{\mathcal{S}}_{k,z})V}} $, $j\in[\binom{|\mathcal S|-1}{z-1}]$.
 
 Apply the above encoding process to send all IVs in $\bigcup_{z=r_{\min}}^{|\mathcal{S}|-1}\mathcal{V}_z^{\mathcal{S}}$ in sending cluster $\mathcal{S}\subseteq [K]$, where $\mathcal{V}_z^{\mathcal{S}}$ is the $(z,\mathcal{S})$-mapped IVs defined in \eqref{z-mapped}.
 
 \subsubsection{Decoding}
 
 In each sending cluster $\mathcal{S}$ and round $z$, after receiving the message block from Node $k$: $\big(X_{k,z}^{\mathcal{ S}}[{1}],X_{k,z}^{\mathcal{ S}}[{2}],\cdots,$ $X_{k,z}^{\mathcal{ S}}[{\binom{|\mathcal S|-2}{z-1}}]\big)$, Node $j\in\mathcal{S}$ with $j\neq k$ decodes its desired segments $\big\{{U}_{\mathcal{S}_{z}[i],k}^{\mathcal {S} \backslash \mathcal{S}_{z}[i]}: i\in\big[\binom{|\mathcal S|}{z}\big], k\in \mathcal{S}_{z}[i], j\notin \mathcal{S}_{z}[i]\big\}$ based on the local segments $\big\{{U}_{\mathcal{S}_{z}[i],k}^{\mathcal {S} \backslash \mathcal{S}_{z}[i]}: i\in\big[\binom{|\mathcal S|}{z}\big], k, j\in \mathcal{S}_{z}[i]\big\}$. Following the similar steps in \cite{CDC}, Node $j$ first removes the local segments from the received LCs and then decodes the desired segments by multiplying an invertible Vandermonde matrix. 
 
 \subsubsection{Correctness of the Scheme} Now we show that each Node $j\in\mathcal{S}\backslash \{k\}$ can decode its desired segments after receiving the block message sent from Node $k$: $\big(X_{k,z}^{\mathcal{ S}}[{1}],X_{k,z}^{\mathcal{ S}}[{2}],\cdots,X_{k,z}^{\mathcal{ S}}[{\binom{|\mathcal S|-2}{z-1}}]\big)$. Among all segments generating $\big(X_{k,z}^{\mathcal{ S}}[{1}],X_{k,z}^{\mathcal{ S}}[{2}],\cdots,$ $X_{k,z}^{\mathcal{ S}}[{\binom{|\mathcal S|-2}{z-1}}]\big)$, i.e., $\big\{{U}_{\mathcal{S}_{z}[i],k}^{\mathcal {S} \backslash \mathcal{S}_{z}[i]}: i\in\big[\binom{|\mathcal S|}{z}\big], k\in\mathcal{S}_z[i]\big\}$, there are $\binom{|\mathcal S|-2}{z-2}$ segments $\big\{{U}_{\mathcal{S}_{z}[i],k}^{\mathcal {S} \backslash \mathcal{S}_{z}[i]}: i\in\big[\binom{|\mathcal S|}{z}\big], k,j\in \mathcal{S}_{z}[i]\big\}$ already known by both Node $k$ and $j$, and $\binom{|\mathcal S|-1}{z-1}-\binom{|\mathcal S|-2}{z-2}=\binom{|\mathcal S|-2}{z-1}$ segments $\big\{{U}_{\mathcal{S}_{z}[i],k}^{\mathcal {S} \backslash \mathcal{S}_{z}[i]}: i\in\big[\binom{|\mathcal S|}{z}\big], k\in \mathcal{S}_{z}[i],j\notin \mathcal{S}_{z}[i]\big\}$ unknown by Node $j$. Constraint (7c) guarantees that all segments multicasted by Node $k$ and known by Node $j$ are not empty, i.e., $|{U}_{\mathcal{S}_{z}[i],k}^{\mathcal {S} \backslash \mathcal{S}_{z}[i]}| \neq 0$ for $i\in\big[\binom{|\mathcal S|}{z}\big], k,j\in \mathcal{S}_{z}[i]$. Thus, from $\binom{|\mathcal S|-2}{z-1}$ linearly independent combinations, Node $j$ can recover all $\binom{|\mathcal S|-2}{z-1}$ desired segments.
 
 \subsubsection{Overall Communication Load} In round $z$ of cluster $\mathcal{S}$, Node $k$ multicasts $\binom{|\mathcal S|-2}{z-1}$ numbers of LCs, each of size $\alpha_{k,z}^{\mathcal{S}}V$ bits. Additionally, the segment ${U}_{{\tau}_{i,z}^{\mathcal S}> 0}$, for $i\in\binom{|\mathcal{S}|}{z}$, of size ${\tau}_{i,z}^{\mathcal S}V$ bits is unicasted in this round if ${{\tau}_{i,z}^{\mathcal S}> 0}$. So the communication cost in round $z$ of cluster $\mathcal{S}$ is:
 \begin{align}
 L_{z,\mathcal{S}}=\sum_{k\in\mathcal{S}}\tbinom{|\mathcal S|-2}{z-1}\alpha_{k,z}^{\mathcal{S}}V+\sum_{i=1}^{\binom{|\mathcal{S}|}{z}}(\tau_{i,z}^{\mathcal{S}})^+V.\label{roundcost}
 \end{align}
 
 By considering all cluster $\mathcal{S}\subseteq [K]$ and all rounds $r_{\min}\leq z\leq |\mathcal{S}|-1$, and according to Definition \ref{Comload}, we obtain the overall communication load (normalized by $QNV$) of OSCT 
 \begin{align}
 L_{\textnormal{OSCT}}&=\sum_{\mathcal{S}\subset[K]}\sum_{z=r_{\min}}^{|\mathcal{S}|-1}\frac{ L_{z,\mathcal{S}}}{QNV}= \frac{1}{QN}{\sum\limits_{\mathcal{S}\subseteq[K]}\sum\limits_{z=r_{\min}}^{|\mathcal{S}|-1} \Big(\sum\limits_{k\in\mathcal{S}}\tbinom{|\mathcal S|-2}{z-1}\alpha_{k,z}^{\mathcal{S}}\! +\!\sum\limits_{i=1}^{\binom{|\mathcal{S}|}{z}}(\tau_{i,z}^{\mathcal{S}})^+\Big)}.
 \end{align}
 
 We call the approach above \emph{One-shot} Coded Transmission, since Node $k\in \mathcal{S}$ can decode its desired segments once Node $k$ receives a single message block $\big(X_{j,z}^{\mathcal{ S}}[{1}],X_{j,z}^{\mathcal{ S}}[{2}],\cdots,X_{j,z}^{\mathcal{ S}}[{\binom{|\mathcal S|-2}{z-1}}]\big)$. 
 
 \subsection{Complexity of OSCT}\label{ComplexityOSCT}
 For the OSCT strategy, consider the encoding and decoding processes of each node in round $z$ of cluster $\mathcal{S}$. As the multiplication of a $n\times n$ Vandermonde matrix and a vector can be done in $O\left(n\log^2 n\right)$ time\cite{VMV}, the time complexity of encoding process is $O\big(\binom{|\mathcal{S}|-1}{z-1}\log^2\binom{|\mathcal S|-1}{z-1}\big)$ for each node in round $z$. There are $|\mathcal{S}|$ nodes in the cluster, and we sum the time complexity in every round and every cluster, then we have the total time complexity of the encoding process as 
 \begin{align}
 O\bigg(\sum_{\mathcal{S}\subseteq [K]}\sum_{z=r_{\min}}^{|\mathcal S|-1}|\mathcal{S}|\tbinom{|\mathcal{S}|-1}{z-1}\log^2\tbinom{|\mathcal S|-1}{z-1}\bigg).
 \end{align}
 Similarly, as each node receives $|\mathcal{S}|-1$ message blocks from other nodes in round $z$ of cluster $\mathcal S$ and solving a linear system with $n$ equations requires $O(n^3)$ time\cite{IntroAlg}, the time complexity of decoding process is $\big((|\mathcal{S}|-1)\binom{|\mathcal{S}|-2}{z-1}^3\big)$ for each node in round $z$. Totally, the complexity of the decoding process is \begin{align}
 O\bigg(\sum_{\mathcal{S}\subseteq [K]}\sum_{z=r_{\min}}^{|\mathcal S|-1}|\mathcal{S}|(|\mathcal{S}|-1)\tbinom{|\mathcal{S}|-2}{z-1}^3\bigg).
 \end{align}

 \section{Few-shot Coded Transmission (Proof of Theorem \ref{ThmFewshot})}\label{SchFewShot}
The key idea of FSCT is to further exploit the multicast gain by delicately designing the number of linear combinations (LCs) to be transmitted for each set of $(z,\mathcal{S})$-mapped IVs, based on the introduced parameters $n_{j,z}^{\mathcal{S}}$, deficit condition, and feasible condition. Note that each node needs to jointly decode multiple message blocks, leading to a potentially lower communication load at the cost of higher coding complexity.

  Similar to OSCT, we group the nodes into different clusters $\mathcal{S}\subseteq [K]$ of size $\max\{r_{\min}+1,q_{\min}\}\leq |\mathcal{S}|\leq \min\{K,r_{\min}+q_{\min}\}$ and divide the sending process in each cluster $\mathcal{S}$ into multiple rounds, indexed as $z\in\{r_{\min},r_{\min}+1,\cdots,|\mathcal{S}|-1\}$. The intermediate values in $\mathcal{V}_{\mathcal {S}_z[i]}^{\mathcal {S}\backslash \mathcal {S}_z[i]},$ for all $ i\in\big[\binom{|\mathcal{S}|}{z}\big]$ defined in \eqref{zIVs} will be sent in round $z$.
  
  \subsection{Transmission Strategy of FSCT}
 
   \subsubsection{Parameter update based on feasible condition}\label{update}
   Recall the parameters $n_{k,z}^S$ in \eqref{nAndBeta} 
   \begin{IEEEeqnarray}{rCl}
    n_{k,z}^{\mathcal S}=\!\!\!\!\!\!\!\!\!\sum\limits_{\substack{S_1\subseteq \mathcal S:\\|S_1|=z,k\in S_1}}\!\!\!\!\!\!(|\mathcal S|\!\!-\!\!z)|\mathcal {V}_{\mathcal {S}_1}^{\mathcal {S} \backslash \mathcal {S}_1}|\!-\!\!\!\!\sum\limits_{\substack{S_1\subseteq \mathcal S:\\|S_1|=z,k\not\in S_1}}\!\!\!\!\!\!(z-1)|\mathcal {V}_{\mathcal {S}_1}^{\mathcal {S} \backslash \mathcal {S}_1}|. \label{nks}
   \end{IEEEeqnarray}
 
   Given all the $n_{k,z}^{\mathcal{S}}$ and $\mathcal{V}_{\mathcal {S}_z[i]}^{\mathcal {S}\backslash \mathcal {S}_z[i]}$ in round $z$, if the feasible condition in \eqref{feasible} is not satisfied for some nodes in $ \mathcal{S}$, then set 
   \begin{align}
    \bar{n}_{k,z}^{\mathcal{S}}=\begin{cases} (n_{k,z}^S)^+,\ \ \ \ \ \ \ \ \ \text{if}\ \mathds{1}(\mathscr{E}_2(i,z,\mathcal{S}))=1,\forall i\in \mathcal{S}, \\
    \!\!\frac{|S|-1}{z}\max\limits_{j\in \mathcal{S}, j\neq k}\left\{\!\!\sum\limits_{\substack{\mathcal{S}_1\subseteq \mathcal{S}:|S_1|=z\\j\in S_1,k\not\in S_1}}\!\!|\mathcal {V}_{\mathcal {S}_1}^{\mathcal {S} \backslash \mathcal {S}_1}|\!\right\},\ \text{otherwise,}
    \end{cases}
   \end{align}
  for each $k\in\mathcal{S}$ in round $z$.
   
  Note that the updated values of $\{\bar{n}_{k,z}^{\mathcal{S}}\}$ always satisfy the feasible condition because $\{\beta_{i,\mathcal S_1}=\frac{|\mathcal{S}|-1}{z}|\mathcal {V}_{\mathcal {S}_1}^{\mathcal {S} \backslash \mathcal {S}_1}|:\mathcal{S}_1\subseteq\mathcal{S}, |\mathcal{S}_1|=z, i\in\mathcal{S}_1\}$ is a non-negative solution set of \eqref{feasible}. As we will see later, $\bar{n}_{k,z}^S$ is the number of LCs sent by Node $k$ with respect to the $(z,\mathcal{S})$-mapped IVs in FSCT scheme. \label{AlwaysSatisfy}
  
  \subsubsection{Encoding}
  
  For each $i\in[\binom{|\mathcal{S}|}{z}]$, we concatenate all IVs in $\mathcal{V}_{\mathcal {S}_z[i]}^{\mathcal {S}\backslash \mathcal {S}_z[i]}$ to construct a symbol $U_{\mathcal {S}_z[i]}^{\mathcal {S}\backslash \mathcal {S}_z[i]}$, and evenly split $U_{\mathcal {S}_z[i]}^{\mathcal {S}\backslash \mathcal {S}_z[i]}$ into $(|\mathcal S|-1)\big|\mathcal{V}_{\mathcal {S}_z[i]}^{\mathcal {S}\backslash \mathcal {S}_z[i]}\big|$ segments each of size $\frac{V}{|\mathcal{S}|-1}$, i.e., 
   \begin{align}
   U_{\mathcal {S}_z[i]}^{\mathcal {S}\backslash \mathcal {S}_z[i]}=\left({U}_{\mathcal {S}_z[i]}^{\mathcal {S}\backslash \mathcal {S}_z[i]}[1],\cdots,{U}_{\mathcal {S}_z[i]}^{\mathcal {S}\backslash \mathcal {S}_z[i]}\left[(\!|\mathcal S|\!-\!1)\!\left|\!\mathcal{V}_{\mathcal {S}_z[i]}^{\mathcal {S}\backslash \mathcal {S}_z[i]}\!\right|\right]\right),
   \end{align}
   where  ${U}_{\mathcal {S}_z[i]}^{\mathcal {S} \backslash \mathcal {S}_z[i]}[j]\in \mathbb{F}_{2^{\frac{V}{|\mathcal{S}|-1}}}$, for all $j\in\big[(\!|\mathcal S|\!-\!1)\!\big|\!\mathcal{V}_{\mathcal {S}_z[i]}^{\mathcal {S}\backslash \mathcal {S}_z[i]}\!\big|\big]$.
  
   Node $k\in \mathcal{S}$ has mapped the IVs in $\mathcal{V}_{\mathcal{S}_{z}[i]}^{\mathcal {S}\backslash \mathcal{S}_{z}[i]}$ if $k\in\mathcal{S}_{z}[i]$, $i\in \big[\binom{|\mathcal{S}|}{z}\big]$, indicating that it knows all IVs segments in 
  $\mathcal K_{k,z}^S\triangleq \Big\{{U}_{\mathcal{S}_{z}[i]}^{\mathcal {S}\backslash \mathcal{S}_{z}[i]}[\gamma]\!:\!i\!\in \! \big[\binom{|\mathcal{S}|}{z}\big], \!k\in\mathcal{S}_{z}[i],\gamma\!\in\! \big[(|\mathcal S|\!-\!1)\big|\mathcal{V}_{\mathcal{S}_{z}[i]}^{\mathcal {S}\backslash \mathcal{S}_{z}[i]}\big|\big] \Big\}.$
  Index the segments in $\mathcal K_{k,z}^S$ as $\big\{U_{k,z}^{\mathcal S}[1],\cdots,U_{k,z}^{\mathcal S}[|\mathcal K_{k,z}^S|]\big\}$, where 
  \begin{align}
  |\mathcal K_{k,z}^S|=(|\mathcal S|-1)\sum_{\!i\in \left[\binom{|\mathcal{S}|}{z}\right], k\in\mathcal{S}_{z}[i]}\bigg|\mathcal{V}_{\mathcal{S}_{z}[i]}^{\mathcal {S}\backslash \mathcal{S}_{z}[i]}\bigg|.\label{nk}
  \end{align}
  
  Then Node $k$ encodes all the segments in $\mathcal K_{k,z}^{\mathcal{S}}$ into a message block $\left(X_{k,z}^{\mathcal{ S}}[{1}],X_{k,z}^{\mathcal{ S}}[{2}],\cdots,X_{k,z}^{\mathcal{ S}}[{\bar{n}_{k,z}^{\mathcal{S}}}]\right)$ as follows. 
  \begin{align} 
  \begin{bmatrix} 
  X_{k,z}^{\mathcal{ S}}[1] \\ 
  X_{k,z}^{\mathcal{ S}}[2] \\ 
  \vdots \\ 
  X_{k,z}^{\mathcal{ S}}[\bar{n}_{k,z}^{\mathcal{S}}] 
  \end{bmatrix} 
  \!=\! 
  {{\mathbf R}_{z}} \cdot\!
   \begin{bmatrix}\! 
   U_{k,z}^{\mathcal S}[1] \\ 
   U_{k,z}^{\mathcal S}[2] \\ 
   \vdots\\ 
   U_{k,z}^{\mathcal S}[|\mathcal K_{k,z}^S|] \!
   \end{bmatrix}\!,\!\!\! \label{FSCTblock}
  \end{align} 
  where ${{\mathbf R}_z}$ is a $\bar{n}_{k,z}^{\mathcal{S}}\times |\mathcal K_{k,z}^{\mathcal{S}}|$ matrix with randomly generated elements. In round $z$, each node $k\in\mathcal{S}$ sends the message block in \eqref{FSCTblock}.
  
  \subsubsection{Decoding}
  In round $z$, after receiving the message blocks from other nodes in $\mathcal{S}$, each receiver $j\in\mathcal{S}$ decodes its desired segments $\Big\{{U}_{\mathcal{S}_{z}[i]}^{\mathcal {S} \backslash \mathcal{S}_{z}[i]}[\gamma]: i\in\big[\binom{|\mathcal S|}{z}\big], j\notin \mathcal{S}_{z}[i], \gamma\!\in\! \big[(|\mathcal S|\!-\!1)\big|\mathcal{V}_{\mathcal{S}_{z}[i]}^{\mathcal {S}\backslash \mathcal{S}_{z}[i]}\!\big|\big] \Big\}$ from all the received message blocks in this round $\{X_{k,z}^{\mathcal{ S}}[{i}]:k\in\mathcal{S}\backslash \{j\}, i\in[\bar{n}_{k,z}^{\mathcal{S}}]\}$. Node $j$ first removes the local segments $\Big\{{U}_{\mathcal{S}_{z}[i]}^{\mathcal {S} \backslash \mathcal{S}_{z}[i]}[\gamma]: i\in\big[\binom{|\mathcal S|}{z}\big], j\in \mathcal{S}_{z}[i],\gamma\!\in\! \big[(|\mathcal S|\!-\!1)\big|\mathcal{V}_{\mathcal{S}_{z}[i]}^{\mathcal {S}\backslash \mathcal{S}_{z}[i]}\!\big|\big] \Big\}$
  from the message to form a new linear system with $(|\mathcal S|-1) \sum\limits_{i\in\big[\binom{|\mathcal S|}{z}\big], j\notin \mathcal{S}_{z}[i]}\big|{\mathcal{V}}_{\mathcal{S}_{z}[i]}^{\mathcal {S} \backslash \mathcal{S}_{z}[i]}\big|$ unknown variables and $\sum_{k\in\mathcal{S}\backslash j}\bar{n}_{k,z}^{\mathcal{S}}$ numbers of LCs, and then obtains its desired IVs by solving this linear system. In Appendix \ref{CorrectFSCT}, we prove that Node $j$ can successfully solve this linear system and obtain the desired IVs.
  
  \subsubsection{Overall Communication Load} 
  First consider the nodes which satisfy both feasible condition and deficit condition on the $(z,\mathcal{S})$-mapped IVs, i.e., $\{k\in\mathcal{S}:\mathds{1}(\mathscr{E}_1(k,\mathcal{S},z))=\mathds{1}(\mathscr{E}_2(k,\mathcal{S},z))=1\}$. For  each of such node, say  Node $k$, it multicasts $\bar{n}_{k,z}^{\mathcal{S}}=(n_{k,z}^{\mathcal{S}})^+=n_{k,z}^{\mathcal{S}}$ numbers of LCs of each size $\frac{V}{|\mathcal{S}|-1}$-bits, the communication load incurred by those nodes is 
  \begin{align}
   \ell_1 &=\frac{V}{|\mathcal{S}|-1}\sum_{\substack{k\in\mathcal{S}: \mathds{1}(\mathscr{E}_1(k,z,\mathcal{S}))=1\\\quad \mathds{1}(\mathscr{E}_2(k,z,\mathcal{S}))=1}} n_{k,z}^{\mathcal{S}}.
  \end{align}
  
  Now consider the nodes that satisfy feasible condition but not the deficit condition, i.e., $\{k\in\mathcal{S}:\mathds{1}(\mathscr{E}_1(k,\mathcal{S},z))=0, \mathds{1}(\mathscr{E}_2(k,\mathcal{S},z))=1\}$. The communication load incurred by those nodes is zero. This is because the   deficit ratios of these nodes are greater than $\frac{|\mathcal S|-z}{z-1}$, which implies $n_{k,z}^S<0$ and $\bar{n}_{k,z}^{\mathcal{S}}=(n_{k,z}^S)^+ =0$ for all $k$ that satisfy feasible condition but not the deficit condition.
  
  Finally, consider the nodes that do not satisfy the feasible condition, i.e., $\{i\in\mathcal{S}:\mathds{1}(\mathscr{E}_2(i,\mathcal{S},z))=0\}$. Since each node multicasts the updated $\bar{n}_{k,z}^{\mathcal{S}}$ LCs of size $\frac{V}{|\mathcal{S}|-1}$-bits, the communication load incurred by these nodes can be computed as:
  \begin{align}
  \ell_2 &=\frac{V}{|\mathcal{S}|-1}\sum_{\substack{k\in\mathcal{S}:\\\mathds{1}(\mathscr{E}_2(k,z,\mathcal{S}))=0}} \bar{n}_{k,z}^{\mathcal{S}}=\frac{1}{z}\sum_{\substack{k\in\mathcal{S}:\\\mathds{1}(\mathscr{E}_2(k,z,\mathcal{S}))=0}} \max_{j\in \mathcal{S}}\sum_{\substack{\mathcal{S}_1\subseteq \mathcal{S}:|S_1|=z\\j\in S_1,k\not\in S_1}}|\mathcal {V}_{\mathcal {S}_1}^{\mathcal {S} \backslash \mathcal {S}_1}|\cdot V.\label{lbits}
  \end{align}
  
  Thus, the overall communication bits in round $z$ of cluster $\mathcal{S}$ can be computed as:
  \begin{align}
  L(&z,\mathcal{S})=\ell_1+\ell_2\notag\\
  &=\frac{V}{|\mathcal{S}|-1}\sum_{\substack{k\in\mathcal{S}:\\\mathds{1}(\mathscr{E}_1(k,z,\mathcal{S}))=1\\ \mathds{1}(\mathscr{E}_2(k,z,\mathcal{S}))=1}} n_{k,z}^{\mathcal{S}} + \ell_2\notag\\
  &=\frac{1}{|\mathcal{S}|-1}\bigg(\sum_{k\in\mathcal S} n_{k,z}^{\mathcal{S}}\!-\!\!\!\!\!\!\!\!\!\!\sum_{\substack{k\in\mathcal{S}:\\\mathds{1}(\mathscr{E}_1(k,z,\mathcal{S}))=0}}\!\!\!\!\!\!\!\!\!\!\! n_{k,z}^{\mathcal{S}}-\!\!\!\!\!\sum_{\substack{k\in\mathcal{S}:\\\mathds{1}(\mathscr{E}_1(k,z,\mathcal{S}))=1\\\mathds{1}(\mathscr{E}_2(k,z,\mathcal{S}))=0}}\!\!\!\!\!\!\!\!\!\!\! n_{k,z}^{\mathcal{S}}\bigg)\cdot V + \ell_2\notag\\
  &\overset{(a)}{=}\frac{|\mathcal{S}|-z}{|\mathcal{S}|-1}\sum_{i=1}^{\binom{|\mathcal{S}|}{z}}|\mathcal{V}_{\mathcal{S}_z[i]}^{\mathcal{S}\backslash\mathcal{S}_z[i]}|\cdot V + \frac{1}{|\mathcal{S}|-1}\bigg(\!\!\!\!\!\!\!\!\!\!\sum_{\substack{k\in\mathcal{S}:\\\mathds{1}(\mathscr{E}_1(k,z,\mathcal{S}))=0}}\!\!\!\!\!\!\!\!\!\!\! |n_{k,z}^{\mathcal{S}}|-\!\!\!\!\sum_{\substack{k\in\mathcal{S}:\\\mathds{1}(\mathscr{E}_1(k,z,\mathcal{S}))=1\\\mathds{1}(\mathscr{E}_2(k,z,\mathcal{S}))=0}}\!\!\!\!\!\!\!\!\!\!\! n_{k,z}^{\mathcal{S}}\bigg)\cdot V + \ell_2, \label{fLzS}
  \end{align}
  where (a) holds by the following equalities:
  \begin{align}
   \sum_{k\in\mathcal{S}}n_{k,z}^{\mathcal{S}} &\overset{(a)}{=} \sum_{k\in\mathcal{S}}\Biggl((|\mathcal S|\!\!-\!\!z)\!\!\!\!\sum_{\!i\in \left[\binom{|\mathcal{S}|}{z}\right], k\in\mathcal{S}_{z}[i]}\bigg|\mathcal{V}_{\mathcal{S}_{z}[i]}^{\mathcal {S}\backslash \mathcal{S}_{z}[i]}\bigg|-(z-1)\!\!\!\!\sum_{\!i\in \left[\binom{|\mathcal{S}|}{z}\right], k\notin\mathcal{S}_{z}[i]}\bigg|\mathcal{V}_{\mathcal{S}_{z}[i]}^{\mathcal {S}\backslash \mathcal{S}_{z}[i]}\bigg|\Biggl)\notag
   \label{b}\\
   &\!\!\!\!\!\!\!\!\!\!\!\overset{(b)}{=} (|\mathcal{S}|-z)\bigg(z\sum_{i=1}^{\binom{|\mathcal{S}|}{z}}|\mathcal{V}_{\mathcal{S}_z[i]}^{\mathcal{S}\backslash\mathcal{S}_z[i]}| - (z-1) \sum_{i=1}^{\binom{|\mathcal{S}|}{z}}|\mathcal{V}_{\mathcal{S}_z[i]}^{\mathcal{S}\backslash\mathcal{S}_z[i]}|\bigg)\notag\\
   &\!\!\!\!\!\!\!\!\!\!\!= (|\mathcal{S}|-z)\sum_{i=1}^{\binom{|\mathcal{S}|}{z}}|\mathcal{V}_{\mathcal{S}_z[i]}^{\mathcal{S}\backslash\mathcal{S}_z[i]}|,
  \end{align}
   where (a) follows from the definition of $n_{k,z}^{\mathcal{S}}$ in \eqref{nks},  and (b) holds because
   \begin{IEEEeqnarray}{rCl}
   \sum_{k\in\mathcal{S}}\sum_{\!i\in \left[\binom{|\mathcal{S}|}{z}\right], k\in\mathcal{S}_{z}[i]}\bigg|\mathcal{V}_{\mathcal{S}_{z}[i]}^{\mathcal {S}\backslash \mathcal{S}_{z}[i]}\bigg| &=& z\sum_{i=1}^{\binom{|\mathcal{S}|}{z}}|\mathcal{V}_{\mathcal{S}_z[i]}^{\mathcal{S}\backslash\mathcal{S}_z[i]}|, \label{sumsum}\\
   \sum_{k\in\mathcal{S}}\sum_{\!i\in \left[\binom{|\mathcal{S}|}{z}\right], k\notin\mathcal{S}_{z}[i]}\bigg|\mathcal{V}_{\mathcal{S}_{z}[i]}^{\mathcal {S}\backslash \mathcal{S}_{z}[i]}\bigg|&=& (|\mathcal{S}|-z)\sum_{i=1}^{\binom{|\mathcal{S}|}{z}}|\mathcal{V}_{\mathcal{S}_z[i]}^{\mathcal{S}\backslash\mathcal{S}_z[i]}|.\label{sumsumb}
   \end{IEEEeqnarray}
   Specifically, for each node $k\in\mathcal{S}$, since each subset $\mathcal S_z[i]\subseteq\mathcal{S}$ with $k\in\mathcal{S}_z[i]$ contains $z$ nodes, each IV in $\mathcal {V}_{\mathcal{S}_z[i]}^{\mathcal {S} \backslash \mathcal{S}_z[i]}$ is stored by $z$ nodes in $S_z[i]$. By summing all nodes in $\mathcal{S}$, we obtain the right side of \eqref{sumsum}. Following a similar way, we obtain the right side of \eqref{sumsumb}.
  
   Considering all clusters $\mathcal{S}\subset [K]$ and all rounds $z$ ($r_{\min}\leq z\leq |\mathcal{S}|-1$), we obtain the overall communication load (normalized by $QNV$) of FSCT as 
   \begin{align}
   &L_{\textnormal{fsct}} = \frac{1}{QNV}\sum_{\mathcal{S}\subseteq[K]}\sum_{z=r_{\min}}^{|\mathcal{S}|-1}L({z,\mathcal{S}})\notag\\ 
   &\overset{(a)}{=} \frac{1}{QN}\sum_{\mathcal{S}\subseteq[K]}\sum_{z=r_{\min}}^{|\mathcal{S}|-1}\frac{|\mathcal{S}|-z}{|\mathcal{S}|-1}\sum_{i=1}^{\binom{|\mathcal{S}|}{z}}|\mathcal{V}_{\mathcal{S}_z[i]}^{\mathcal{S}\backslash\mathcal{S}_z[i]}|\notag\\
   &\!+\!\frac{1}{QN}\sum_{\mathcal{S}\subseteq[K]}\sum_{z=r_{\min}}^{|\mathcal{S}|-1}\frac{1}{|\mathcal{S}|-1}\bigg(\!\!\!\!\!\!\!\!\!\!\sum_{\substack{k\in\mathcal{S}:\\\mathds{1}(\mathscr{E}_1(k,z,\mathcal{S}))=0}}\!\!\!\!\!\!\!\!\!\!\! |n_{k,z}^{\mathcal{S}}|-\!\!\!\!\sum_{\substack{k\in\mathcal{S}:\\\mathds{1}(\mathscr{E}_1(k,z,\mathcal{S}))=1\\\mathds{1}(\mathscr{E}_2(k,z,\mathcal{S}))=0}}\!\!\!\!\!\!\!\!\!\!\! n_{k,z}^{\mathcal{S}}\bigg)\notag\\
   &\!+\!\frac{1}{QN}\sum_{\mathcal{S}\subseteq[K]}\sum_{z=r_{\min}}^{|\mathcal{S}|-1}\frac{1}{z}\sum_{\substack{k\in\mathcal{S}:\\\mathds{1}(\mathscr{E}_2(k,z,\mathcal{S}))=0}} \max_{j\in \mathcal{S}}\sum_{\substack{\mathcal{S}_1\subseteq \mathcal{S}:|S_1|=z\\j\in S_1,k\not\in S_1}}|\mathcal {V}_{\mathcal {S}_1}^{\mathcal {S} \backslash \mathcal {S}_1}|\label{followprove}\\
   &\overset{(b)}{=} \frac{1}{QN}\sum_{t=1}^{K}\sum_{d=1}^{K-s}a_{t,d}\frac{d}{t+d-1}\notag\\
   &+\frac{1}{QN}\sum_{S\subseteq[K]}\sum_{z=r_{\min}}^{|\mathcal S|-1}\frac{1}{|\mathcal{S}|-1}\bigg(\!\!\!\!\!\!\!\!\!\!\sum_{\substack{k\in\mathcal{S}:\\\mathds{1}(\mathscr{E}_1(k,z,\mathcal{S}))=0}}\!\!\!\!\!\!\!\!\!\!\! |n_{k,z}^{\mathcal{S}}|-\!\!\!\!\sum_{\substack{k\in\mathcal{S}:\\\mathds{1}(\mathscr{E}_1(k,z,\mathcal{S}))=1\\\mathds{1}(\mathscr{E}_2(k,z,\mathcal{S}))=0}}\!\!\!\!\!\!\!\!\!\!\! n_{k,z}^{\mathcal{S}}\bigg)\notag\\
   &+\frac{1}{QN}\sum_{S\subseteq[K]}\sum_{z=r_{\min}}^{|\mathcal S|-1}\frac{1}{z}\sum_{\substack{i\in\mathcal{S}:\\\mathds{1}(\mathscr{E}_2(i,z,\mathcal{S}))=0}}\max_{j\in \mathcal{S}}\sum_{\substack{\mathcal{S}_1\subseteq \mathcal{S}:|S_1|=z\\j\in S_1,i\not\in S_1}}|\mathcal {V}_{\mathcal {S}_1}^{\mathcal {S} \backslash \mathcal {S}_1}|,\label{loadFSCTcompu}
   \end{align} 
  where (a) holds by combining \eqref{lbits} and \eqref{fLzS}, and (b) follows from the definition of $a_{t,d}$ in Lemma 1 and by letting $|\mathcal{S}|=\ell=t+d$ and $z=t$. Specifically, by the definition of $a_{t,d}$,
   \begin{IEEEeqnarray}{rCl}
  &&\sum\limits_{\substack{\mathcal{S}\subseteq [K], |\mathcal{S}|=\ell}}\sum_{i=1}^{\binom{\ell}{t}}|\mathcal{V}_{\mathcal{S}_t[i]}^{\mathcal{S}\backslash\mathcal{S}_t[i]}| = a_{t,\ell-t},~~\forall t\in[ r_{\min},\ell-1],\nonumber\\
  && \sum\limits_{\substack{\mathcal{S}\subseteq [K], |\mathcal{S}|=\ell}}\sum_{i=1}^{\binom{\ell}{t}}|\mathcal{V}_{\mathcal{S}_t[i]}^{\mathcal{S}\backslash\mathcal{S}_t[i]}|=0,~~ \forall t\in [r_{\min}-1]$ or $t=\ell, \nonumber
  \end{IEEEeqnarray}
   so the first term in \eqref{followprove} can be computed as
  \begin{align}
   &\frac{1}{QN}\sum_{\mathcal{S}\subseteq[K]}\sum_{z=r_{\min}}^{|\mathcal{S}|-1}\frac{|\mathcal{S}|-z}{|\mathcal{S}|-1}\sum_{i=1}^{\binom{|\mathcal{S}|}{z}}|\mathcal{V}_{\mathcal{S}_z[i]}^{\mathcal{S}\backslash\mathcal{S}_z[i]}|\notag \\
   &=\frac{1}{QN}\sum_{\ell=1}^{K}\sum_{t=1}^{\ell}\sum_{\substack{\mathcal{S}\subseteq [K], |\mathcal{S}|=\ell}}\frac{\ell-t}{\ell-1}\sum_{i=1}^{\binom{\ell}{t}}|\mathcal{V}_{\mathcal{S}_t[i]}^{\mathcal{S}\backslash\mathcal{S}_t[i]}|\notag \\
   &=\frac{1}{QN}\sum_{\ell=1}^{K}\sum_{t=1}^{\ell}\frac{\ell-z}{\ell-1}a_{t,\ell-t}\nonumber\\
   &=\frac{1}{QN}\sum_{t=1}^{K}\sum_{d=1}^{K-s}\frac{d}{t+d-1}a_{t,d},
   \label{subload}
  \end{align}
  where the last equality holds by exchanging the order of summation and $l=t+d$.

  \subsection{Complexity of FSCT}\label{ComplexityFSCT}
  For the encoding process of FSCT, Node $k$ will encode $(|\mathcal S|-1)\sum\limits_{\!i\in \left[\binom{|\mathcal{S}|}{z}\right], k\in\mathcal{S}_{z}[i]}\big|\mathcal{V}_{\mathcal{S}_{z}[i]}^{\mathcal {S}\backslash \mathcal{S}_{z}[i]}\big|$ segments into $\bar{n}_{k,z}^{\mathcal{S}}$ numbers of LCs in round $z$ of cluster $\mathcal S$. The time complexity is 
 $O\Big(\bar{n}_{k,z}^{\mathcal{S}}(|\mathcal S|-1)\sum\limits_{\!i\in \left[\binom{|\mathcal{S}|}{z}\right], k\in\mathcal{S}_{z}[i]}\big|\mathcal{V}_{\mathcal{S}_{z}[i]}^{\mathcal {S}\backslash \mathcal{S}_{z}[i]}\big|\Big)$.
  By summing the time complexity in every round and every cluster, we have the total time complexity of the encoding process as
  \begin{align}
  O\bigg(\sum_{\mathcal{S}\subseteq [K]}\sum_{z=r_{\min}}^{|\mathcal S|-1}\sum_{k\in\mathcal{S}}\bar{n}_{k,z}^{\mathcal{S}}(|\mathcal S|-1)\!\!\!\!\!\!\sum\limits_{\!i\in \left[\binom{|\mathcal{S}|}{z}\right], k\in\mathcal{S}_{z}[i]}\!\!\!\!\!\!\big|\mathcal{V}_{\mathcal{S}_{z}[i]}^{\mathcal {S}\backslash \mathcal{S}_{z}[i]}\big|\bigg).
  \end{align}
  
  Also, we can obtain that the decoding complexity for each node $k\in \mathcal{S}$ is $O\Big((|\mathcal{S}|-1)^3\big(\sum\limits_{\!i\in \left[\binom{|\mathcal{S}|}{z}\right], k\notin\mathcal{S}_{z}[i]}\big|\mathcal{V}_{\mathcal{S}_{z}[i]}^{\mathcal {S}\backslash \mathcal{S}_{z}[i]}\big|\big)^3\Big)$ since Node $k$ solves a linear system with $(|\mathcal{S}|-1)\!\!\!\sum\limits_{\!i\in \left[\binom{|\mathcal{S}|}{z}\right], k\notin\mathcal{S}_{z}[i]}\!\!\!\!\!\!\big|\mathcal{V}_{\mathcal{S}_{z}[i]}^{\mathcal {S}\backslash \mathcal{S}_{z}[i]}\big|$ elements in round $z$ of cluster $\mathcal{S}$. Totally, the decoding complexity is
 
  \begin{align}
  O\bigg(\sum_{\mathcal{S}\subseteq [K]}\sum_{z=r_{\min}}^{|\mathcal S|-1}\sum_{k\in\mathcal{S}}\Big((|\mathcal{S}|-1)^3\big(\!\!\!\!\sum_{\!i\in \left[\binom{|\mathcal{S}|}{z}\right], k\in\mathcal{S}_{z}[i]}\!\!\!\!\!\!\!\!\big|\mathcal{V}_{\mathcal{S}_{z}[i]}^{\mathcal {S}\backslash \mathcal{S}_{z}[i]}\big|\big)^3\Big)\bigg),
  \end{align} 
  for the entire decoding process.

 \subsection{Example 2: Illustration of FSCT and Its Optimality}\label{ExFSCT}
 Here, we provide another example to illustrate the detailed transmission strategy of FSCT and show that it could potentially outperform OSCT on communication load.
 Consider the file and Reduce function allocations shown in Fig. \ref{ex2sys} with four nodes, seven files, and 6 Reduce functions. 
 \begin{figure}
   \centering
   \includegraphics[width=8.6cm]{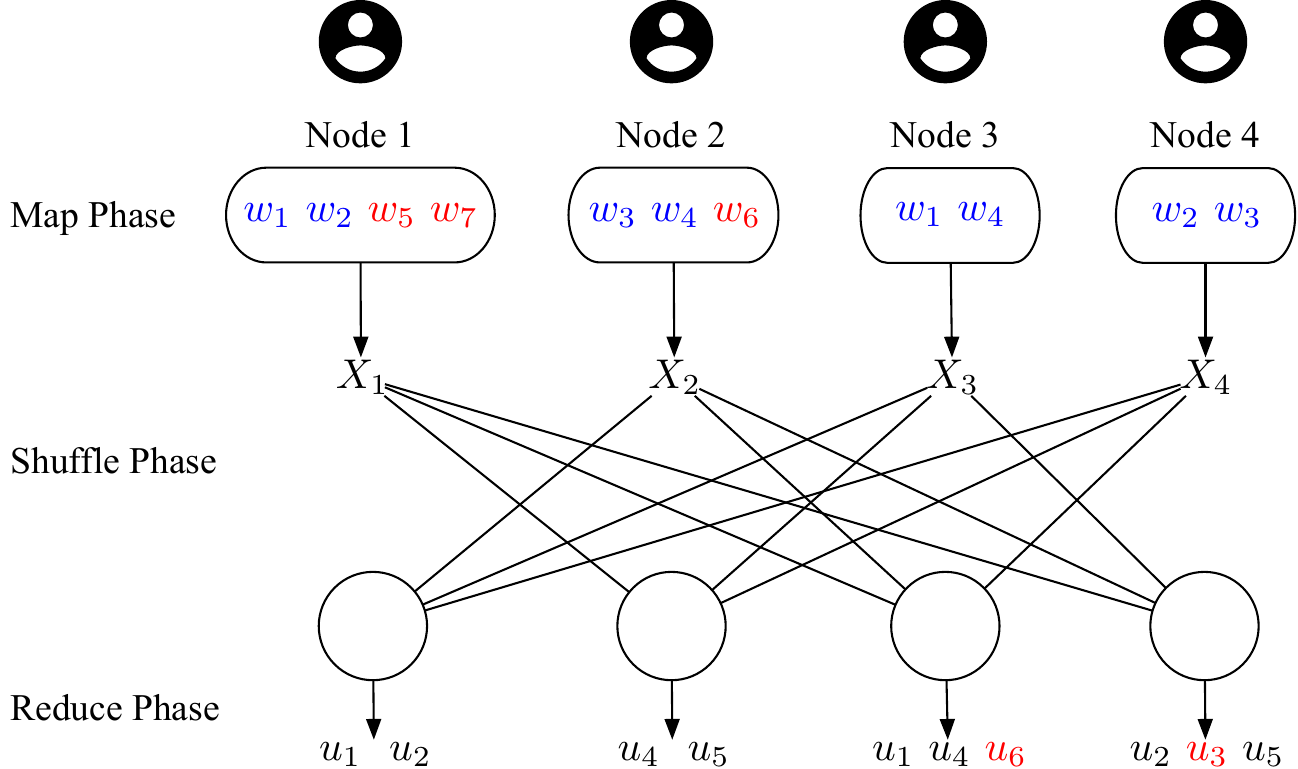}
   \caption{Data allocation and Reduce function assignment of Example 2.}\label{ex2sys}
   \end{figure}
 
 \emph{Method illustration:} 
 First, we regroup the nodes into clusters $\mathcal{S}$ and divide the sending process in each cluster into multiple rounds $r_{\min}\leq z\leq|\mathcal{S}|-1$. The IVs in $(z,\mathcal{S})$-mapped set $\mathcal{V}_z^{\mathcal{S}}$ will be sent in round $z$ of cluster $\mathcal{S}$.
 
 Then, in each round, we check whether the deficit condition and feasible condition defined in \eqref{defiCond} and \eqref{feasible} are satisfied for each node. Taking round $z=2$ of cluster $\mathcal S=\{1,2,3,4\}$ as an example, the IVs categories are shown in Table \ref{ex2t}. According to \eqref{nAndBeta}, we obtain that $n_{1,2}^{\{1,2,3,4\}}=n_{2,2}^{\{1,2,3,4\}}=n_{3,2}^{\{1,2,3,4\}}=n_{4,2}^{\{1,2,3,4\}}=2>0$, which satisfies the deficit conditions for all nodes by checking \eqref{deficitN}. One can verify that all nodes also satisfy the feasible conditions. For example, $\beta_{2,\{2,3\}}=\beta_{3,\{2,3\}}=\beta_{2,\{2,4\}}=\beta_{4,\{2,4\}}=1, \beta_{3,\{3,4\}}=\beta_{4,\{3,4\}}=0$ is a non-negative solution of \eqref{feasible} for Node 1. 
 
 Each IV sent in round $z$ of cluster $S$ will be split into $|\mathcal{S}|-1$ pieces, and each node that satisfies both the deficit condition and feasible condition encodes all locally mapped IVs in this round into a message with $n_{i,z}^{\mathcal{S}}$ LCs, where $n_{i,z}^{\mathcal{S}}$ is the parameter introduced in \eqref{nAndBeta}. For example, each IV in round 2 of $\{1,2,3,4\}$ will be splitted into $|\mathcal{S}|-1=3$ pieces, (e.g., $V_{5,1}$ is splitted into $v_{5,1}^{(1)},v_{5,1}^{(2)}$ and $v_{5,1}^{(3)}$). Node 1 encodes all IV segments in $\mathcal{V}_{ \{1,2\} }^{ \{3,4\} }$, $\mathcal{V}_{ \{1,3\} }^{ \{2,4\} }$ and $\mathcal{V}_{ \{1,4\} }^{ \{2,3\} }$, i.e., $\{v_{5,1}^{(1)}$, $v_{4,2}^{(1)}$, $v_{5,1}^{(2)}, v_{4,2}^{(2)}, v_{5,1}^{(3)}, v_{4,2}^{(3)}\}$. into $n_{1,2}^{\{1,2,3,4\}}=2$ LCs as follows:
 \begin{align}
  &a_1v_{5,1}^{(1)}+a_2v_{4,2}^{(1)}+a_3v_{5,1}^{(2)}+a_4v_{4,2}^{(2)}+a_5v_{5,1}^{(3)}+a_6v_{4,2}^{(3)},\notag\\
  &a_1^2v_{5,1}^{(1)}+a_2^2v_{4,2}^{(1)}+a_3^2v_{5,1}^{(2)}+a_4^2v_{4,2}^{(2)}+a_5^2v_{5,1}^{(3)}+a_6^2v_{4,2}^{(3)},\notag
 \end{align}
 where $a_i\in \mathbb{F}_{2^{\frac{V}{|\mathcal{S}|-1}}}$ for $i\in[6]$ is the randomly generated coefficient.
 
 \begin{table}
  \caption{IVs categories in cluster $\{1,2,3,4\}$ for example 2}
  \centering \label{ex2t}
  \begin{tabular}{|c|c|c|c|c|c|} \hline
  $\mathcal{V}_{ \{1,2\} }^{ \{3,4\} }$ & $\mathcal{V}_{ \{1,3\} }^{ \{2,4\} }$ & $\mathcal{V}_{ \{1,4\} }^{ \{2,3\} }$ & $\mathcal{V}_{ \{2,3\} }^{ \{1,4\} }$ & $\mathcal{V}_{ \{2,4\} }^{ \{1,3\} }$ & $\mathcal{V}_{ \{3,4\} }^{ \{1,2\} }$\\\hline
  $\emptyset$ & $v_{5,1}$ & $v_{4,2}$ & $v_{2,4}$ & $v_{1,3}$ & $\emptyset$ \\\hline
  \end{tabular}
 \end{table}
 Coded shuffle strategies for  Node  2, 3, and 4 are shown below:
 \begin{itemize}
  \item Node 2 multicasts 2 LCs of $v_{2,4}^{(1)}$, $v_{1,3}^{(1)}$, $v_{2,4}^{(2)}$, $v_{1,3}^{(2)}$, $v_{2,4}^{(3)}$, $v_{1,3}^{(3)}$.
  \item Node 3 multicasts 2 LCs of $v_{5,1}^{(1)}$, $v_{2,4}^{(1)}$, $v_{5,1}^{(2)}$, $v_{2,4}^{(2)}$, $v_{5,1}^{(3)}$, $v_{2,4}^{(3)}$.
  \item Node 4 multicasts 2 LCs of $v_{4,2}^{(1)}$, $v_{1,3}^{(1)}$, $v_{4,2}^{(2)}$, $v_{1,3}^{(2)}$, $v_{4,2}^{(3)}$, $v_{1,3}^{(3)}$.
 \end{itemize}
 
 Although each node can not decode the desired IVs directly from a \emph{single} message block, each node can jointly solve its needed IVs segments after receiving all the LCs in round $z$ of cluster $\mathcal{S}$. For example, Node 1 receives six LCs of 12 segments from the other three nodes in round 2 of cluster $\{1,2,3,4\}$. After removing its known segments $\{v_{5,1}^{(1)}$, $v_{4,2}^{(1)}$, $v_{5,1}^{(2)}, v_{4,2}^{(2)}, v_{5,1}^{(3)}, v_{4,2}^{(3)}\}$, Node 1 will solve a linear system with six independent linear combinations of its six desired segments $v_{2,4}^{(1)}$, $v_{1,3}^{(1)}$, $v_{2,4}^{(2)}$, $v_{1,3}^{(2)}$, $v_{2,4}^{(3)}$, $v_{1,3}^{(3)}$.
 
 \emph{Comparison with previous schemes and OSCT:} 
 For our FSCT, as the deficit condition and feasible condition are satisfied by all the nodes in every cluster, the communication load is $L_{\textnormal{fsct}}=0.635$ according to Theorem \ref{ThmFewshot}. According to Lemma 1, the theoretical lower bound of the communication load is $L^{*}=0.635=L_{\textnormal{fsct}}$, which means that FSCT is optimal. 
 
 In this example, FSCT dominates OSCT by exploiting more multicasting gain. According to Theorem \ref{ThmOneshot}, the achievable communication load given by OSCT is $L_{\textnormal{OSCT}}=\frac{28}{42}=0.667$, 
 larger than $0.635$ (i.e., not optimal). 
 Specifically, in round 2 of cluster $\{1,2,3,4\}$, due to $\mathcal {V}_{\{ 1,2\}}^{\{3,4\}}=\mathcal {V}_{\{ 3,4\}}^{\{1,2\}}=\emptyset$, we have $\alpha_{1,2}^{\{1,2,3,4\}}=\alpha_{2,2}^{\{1,2,3,4\}}=\alpha_{3,2}^{\{1,2,3,4\}}=\alpha_{4,2}^{\{1,2,3,4\}}=0$. Thus, the IVs will be unicasted and OSCT fails to exploit the multicast gain in this specific round, causing higher communication load. This example shows that FSCT outperforms OSCT by jointly decoding multiple message blocks.
 
 For the scheme proposed in \cite{Ji}, the data placement and Reduce function assignment in this example are fixed and do not follow their combinatorial design, in which each file and each Reduce function are required to be mapped or computed exactly $r$ times. Thus, their Shuffle phase design is not applicable in this example. For the scheme in \cite{Tao}, the data and Reduce function assignment do not follow their design either. If we only borrow the transmission strategy and solve the optimization problem in \cite[Theorem 1]{Tao} with the Mapping load $(m_1,\ldots,m_4)=\left(\frac{4}{7},\frac{3}{7},\frac{2}{7},\frac{2}{7}\right)$ and Reducing load $(\omega_1,\ldots,\omega_4)=\left(\frac{1}{3},\frac{1}{3},\frac{1}{2},\frac{1}{2}\right)$, we can get the total communication load as $\frac{32}{42}=0.762$, larger than both OSCT (with load 0.667) and FSCT (with load 0.635).

 \section{Numerical Results and Discussions}\label{Subsec_Numerical}
  \subsection{Numerical Results}
 In this section, we compare our communication loads in Theorem \ref{ThmOneshot} and Theorem \ref{ThmFewshot} with the uncoded scheme and lower bound in Lemma \ref{LemmaLower}, under pre-set data placement and Reduce function assignment. We consider a 4-node system, with heterogeneous mapping loads $\mathbf{m}=(\frac{1}{2}-d, \frac{1}{2}-d, \frac{1}{2}+d, \frac{1}{2}+d)$ and reducing load $(\frac{1}{2}+d, \frac{1}{2}+d, \frac{1}{2}-d, \frac{1}{2}-d)$, with $d=0,\frac{1}{64},\cdots,\frac{31}{64}$. When load bias $d$ grows larger, nodes with higher mapping load will compute more Reduce functions, which means the system becomes more  biased. Since the coding strategy of OSCT and FSCT are based on pre-set data placement and Reduce function assignment, we randomly generate \emph{50 samples} of different data placements and Reduce function assignments for each $d$, and compute the corresponding communication load under each assignment, then we take the \emph{average} of 50 communication loads. 
 
 Note that our schemes allow pre-set data placement and Reduce function assignment, while previous works in \cite{ThreeWorker}, \cite{Ji} and \cite{Tao} relied on free data placement design, we can not directly compare the communication loads achieved by the aforementioned works here. As we discussed in the example of \ref{SchFewShot}, our OSCT and FSCT can achieve a lower communication load than \cite{Tao} when only applying the coding strategy of their scheme given concrete file and Reduce function assignment as Fig. \ref{ex2sys}, where schremes in \cite{ThreeWorker,Ji} are not applicable.
 
 \begin{figure}[!h]
   \centering
   \includegraphics[width=8.5cm]{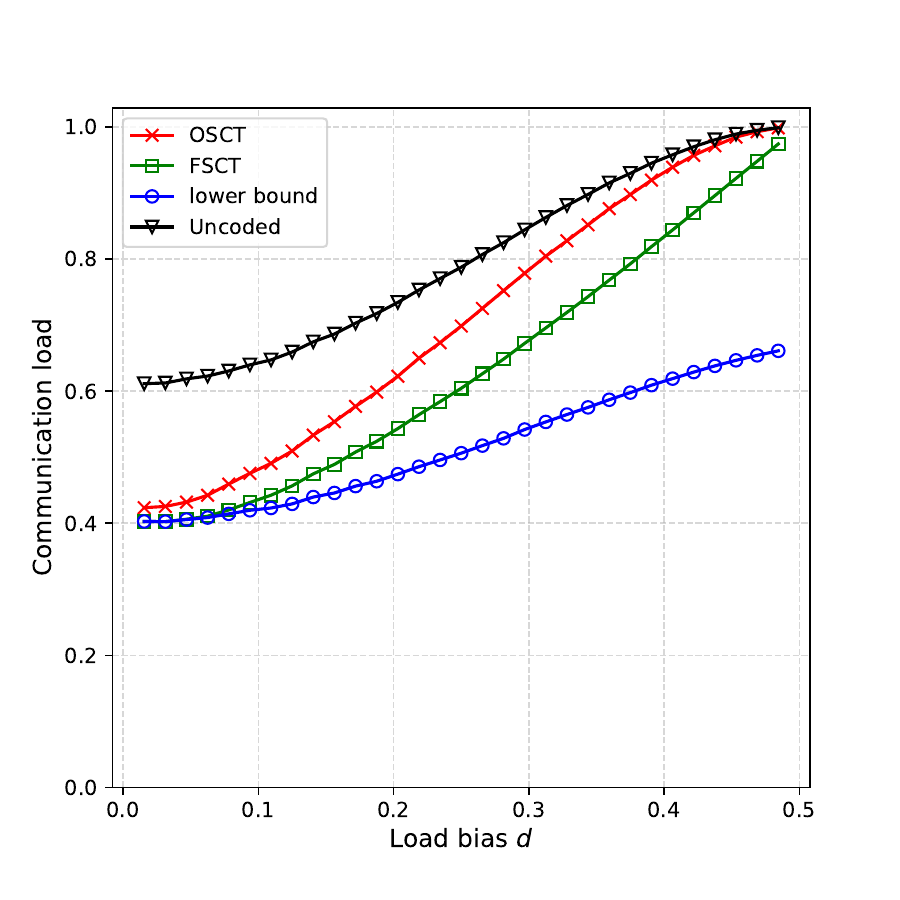}
   \vspace{-3mm}
   \caption{Comparison of communication load achieved by OSCT and FSCT with lower bound in systems with mapping load $\mathbf{m}=(\frac{1}{2}-d, \frac{1}{2}-d, \frac{1}{2}+d, \frac{1}{2}+d)$ and reducing load $(\frac{1}{2}+d, \frac{1}{2}+d, \frac{1}{2}-d, \frac{1}{2}-d)$}\label{load_comparison}
   \vspace{-3mm}
 \end{figure}
 
 In Fig. \ref{load_comparison}, we compare the average communication load achieved by OSCT, FSCT, and uncoded scheme, as well as the lower bound with different load bias $d$. It shows that both OSCT and FSCT achieve lower communication loads than the uncoded scheme given any specific file and Reduce function assignments for all $d$. Specifically, with relatively small $d$, both OSCT and FSCT outperform the uncoded scheme with a 33\% reduction in communication load, and their upper bounds are close to the lower bound. FSCT achieves better performance compared to OSCT due to the Few-shot transmission, which is consistent with Remark \ref{CompareTwoschemes}. Besides, both OSCT and FSCT converge to the uncoded scheme when $d$ approaches $\frac{1}{2}$. This is because the nodes with mapping load $\frac{1}{2}-d$ are the bottleneck of coded multicast gains. More specifically,  when $d$ approaches $\frac{1}{2}$, to ensure that the nodes with mapping load $\frac{1}{2}-d$ can decode their desired IVs,  most IVs should be unicasted to the intended nodes, leading to communication load close to that of the uncoded scheme.
 \begin{figure}[!htbp]
  \vspace{-8mm}
   \centering
   \includegraphics[width=8.5cm]{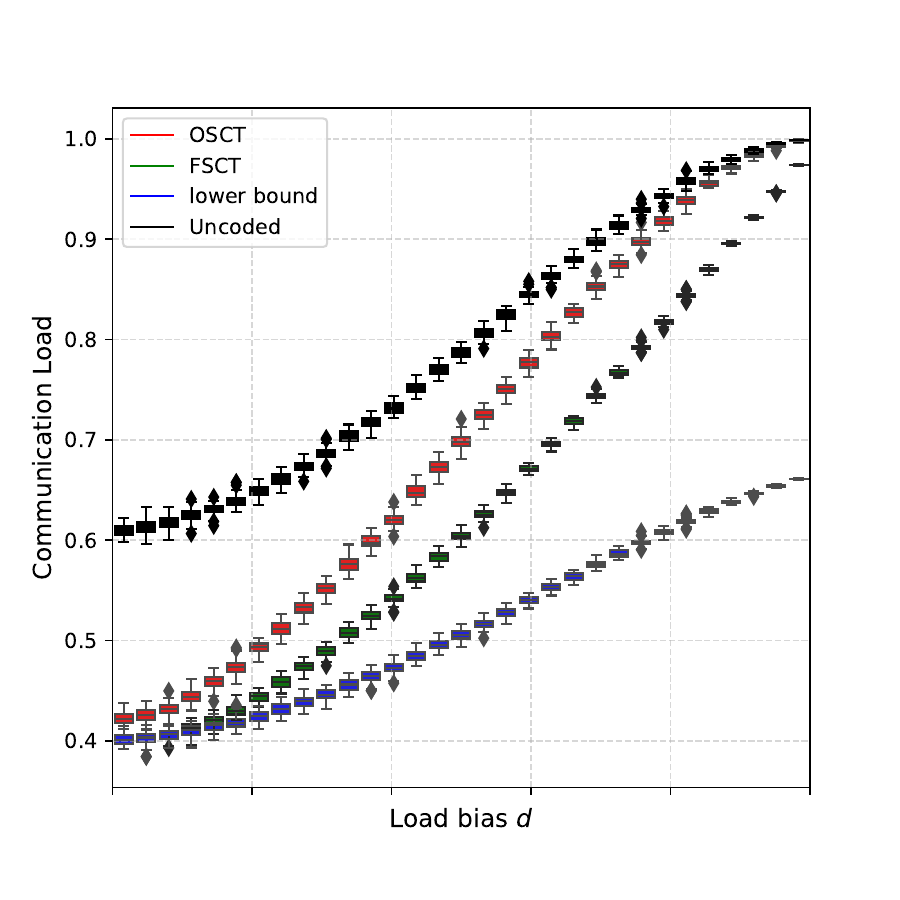}\
   \vspace{-3mm}
   \caption{Boxplot of communication loads corresponding to OSCT (red color), FSCT (yellow color), and lower bound (blue color) among 50 samples for each $d$.}\label{FSCT_box}
 \end{figure}
 
 The impact brought by specific data placement and Reduce function assignment given the same mapping load and Reducing load is illustrated in the boxplot of Fig. \ref{FSCT_box}, which characterizes the degree of variation among different file and Reduce function assignments for each $d$. We observe some consistent behaviors of variance among OSCT, FSCT, uncoded scheme, and lower bound, i.e., the dispersion is large when $d$ is small and gradually vanishes as $d$ increases. The reason for this phenomenon is that when the degree of heterogeneity rises, it becomes harder for schemes to exploit the multicast gain due to the users with mapping load $\frac{1}{2}-d$, and the influence of different file and function assignments decreases, resulting in all communication loads approaching 1 when $d$ is close to $\frac{1}{2}$. Besides, compared to OSCT, FSCT achieves a relatively smaller variance for different file and function assignments, especially when $d$ is greater than $0.4$, and has a similar degree of dispersion with the lower bound, indicating that FSCT is more robust to the variation of $d$ than OSCT.

  \subsection{Discussions and Future Works}
 \subsubsection{Complexity} 
 In Section \ref{ComplexityOSCT} and \ref{ComplexityFSCT}, we have analyzed the computation complexities of  OSCT and FSCT, and provided the equations characterizing the encoding and decoding complexities respectively. 
Additionally, OSCT needs an additional complexity incurred by solving the optimization problem $\mathcal{P}_{\textnormal{OSCT}}(\mathcal{S},z)$ for each cluster $\mathcal{S}$ and round $z$, which requires an extra computation complexity   no more than $O\big(\sum_{|\mathcal{S}|\subseteq [K]}\sum_{z=r_{\min}}^{|\mathcal{S}|}|\mathcal{S}|^3=\sum_{\ell=1}^K\ell^4\binom{K}{\ell}\big)$ according to \cite{CVX}.
  
  Note that in each round $z$ of cluster $\mathcal{S}$, the encoding and decoding complexity of OSCT is similar to that of the original CDC scheme in \cite{CDC}, whose encoding complexity is $O\left(\sum_{\mathcal{S}\subseteq [K]}|\mathcal{S}|\binom{|\mathcal{S}|-1}{r-1}\log^2\binom{|\mathcal S|-1}{r-1}\right)$ with the computation load $r$. As we divide the sending process into multiple rounds, there is, at most, an extra linear multiplier $z\leq K$ in our OSCT complexity compared with the CDC scheme in \cite{CDC}. Briefly, the more heterogeneous a system is, the more rounds it needs to consider and the higher the complexity. Meanwhile, it is possible to reduce the complexity by sacrificing the optimality of the communication load. For example, by transmitting $(z,\mathcal{S})$-mapped IVs in round $z^{'}$ ($z^{'}< z$) of cluster $\mathcal{S}$ together with the $(z^{'},\mathcal{S})$-mapped IVs, we can eliminate round $z$ in the transmission and reduce the complexity. However, such a scheme is sub-optimal since the multicast gain is not fully exploited for the $(z,\mathcal{S})$-mapped IVs. 
  
  There exist several works aiming to reduce the number of files to be split to decrease the complexity\cite{Resolve, FLCD}. However, their designs rely on homogeneous or specific combinatorial file allocation design, which are not applicable to systems with arbitrary data and Reduced function assignments. Balancing complexity and communication load remains an open challenge for general HetDC systems. 

  \subsubsection{Overhead and privacy issue during the initialization} The transmission strategy designs of OSCT and FSCT need the information of $\mathcal {V}_{\mathcal {S}_{1}}^{\mathcal {S} \backslash \mathcal {S}_{1}}$ defined in \eqref{szmap}, so each node should be aware of the file allocations and Reduce function assignments of the system.  To inform each node of the distribution of files, $KN$ bits overhead need to be considered. This is a common assumption through the CDC literature\cite{CDC,Ji,Tao}, which also holds in this paper.
  
  In practical scenarios such as distributed learning, sharing file allocations may lead to privacy issues\cite{CDCSurvey}, especially when sensitive data is involved. 
 There exist several works focusing on preserving security and privacy\cite{secure1,secure2,secure3} while tackling privacy issues in systems with arbitrary data and Reduced function assignments remains a topic for future research.

\section{Conclusion}\label{Conclusion}
In this paper, we consider the general MapReduce-type system where the data placement and Reduce function assignment are arbitrary and pre-set among all computing nodes. We propose two coded distributed computing approaches for the general HetDC system: One-shot Coded Transmission (OSCT) and Few-shot Coded Transmission (FSCT). Both approaches encode the intermediate values into message blocks with different sizes to exploit the multicasting gain, while the former allows each message block to be successfully decoded by the intended nodes, and the latter has each node jointly decode multiple message blocks, and thus can further reduce the communication load. With a theoretical lower bound on the optimal communication load, we characterize the sufficient conditions for the optimality of OSCT and FSCT respectively, and prove that OSCT and FSCT are optimal for several classes of MapReduce systems.


%

\appendices
\section{Proof of Theorem \ref{t2} and Theorem \ref{ThmOptFSCT}}\label{TwoTheorems}

\subsection{Proof of Theorem \ref{t2}}\label{ConverseOSCTOPt}

As shown in \eqref{roundcost}, the communication cost in round $z$ of clusters $\mathcal S$ can be computed as
\begin{align}
L_{z,\mathcal{S}}=\sum_{j\in\mathcal{S}}\tbinom{|\mathcal S|-2}{z-1}\alpha_{j,z}^{\mathcal{S}}V+\sum_{i=1}^{\binom{|\mathcal{S}|}{z}}(\tau_{i,z}^{\mathcal{S}})^+V.\label{Lo}
\end{align}
When $\sum_{i=1}^{\binom{|\mathcal S|}{z}}\big(\sum_{j\in \mathcal \mathcal S_z[i]}\alpha_{j,z}^S-\big|\mathcal{V}_{\mathcal{S}_z[i]}^{\mathcal{S}\backslash\mathcal{S}_z[i]}\big|\big)^2=0$, we have $$\tau_{i,z}^{\mathcal S} = \left|\mathcal{V}_{\mathcal {S}_z[i]}^{\mathcal {S} \backslash \mathcal {S}_z[i]}\right|-\sum_{j\in \mathcal S_z[i]}\alpha_{j,z}^{\mathcal S}=0,\forall i \in [\tbinom{|\mathcal{S}|}{z}].$$ 
By taking summation over all the subsets $\mathcal \mathcal S_z[i]$ for $i\in\big[\binom{|\mathcal S|}{z}\big]$ on both sides of $\sum_{j\in \mathcal \mathcal S_z[i]}\alpha_{j,z}^{\mathcal{S}}=\big|\mathcal{V}_{\mathcal{S}_z[i]}^{\mathcal{S}\backslash\mathcal{S}_z[i]}\big|$, we have
\begin{align}
\sum_{i=1}^{\binom{|\mathcal S|}{z}}\sum_{j\in \mathcal \mathcal S_z[i]}\alpha_{j,z}^{\mathcal{S}}\overset{(a)}{=}\tbinom{|\mathcal S|-1}{z-1}\cdot\sum_{j\in\mathcal{S}}\alpha_{j,z}^{\mathcal{S}} = \sum_{i=1}^{\binom{|\mathcal S|}{z}}\big|\mathcal{V}_{\mathcal{S}_z[i]}^{\mathcal{S}\backslash\mathcal{S}_z[i]}\big|\label{sd}
\end{align}
where the (a) follows from the fact that there are $\binom{|\mathcal{S}|-1}{z-1}$ subsets with size $z$ who contain Node $j$ for each $j\in \mathcal{S}$, which means that each $\alpha_{j,z}^S$ will be added $\binom{|\mathcal{S}|-1}{z-1}$ times. 
Plugging \eqref{sd} into (\ref{Lo}), we have
\begin{align}
L_{z,\mathcal{S}}=\sum_{i=1}^{\binom{|\mathcal S|}{z}}\big|\mathcal{V}_{\mathcal{S}_z[i]}^{\mathcal{S}\backslash\mathcal{S}_z[i]}\big|\frac{\binom{|\mathcal S|-2}{z-1}}{\binom{|\mathcal S|-1}{z-1}}V.
\end{align}
By summing all $\mathcal{S}\subseteq [K]$ with $|\mathcal{S}|=\ell$, we have
\begin{align}
L_{z,\ell} &=\sum_{\substack{\mathcal{S}\subseteq[K],\\|\mathcal{S}|=\ell}}\sum_{i=1}^{\binom{|\mathcal S|}{z}}\big|\mathcal{V}_{\mathcal{S}_z[i]}^{\mathcal{S}\backslash\mathcal{S}_z[i]}\big|\frac{\binom{|\mathcal S|-2}{z-1}}{\binom{|\mathcal S|-1}{z-1}}V\overset{(a)}{=}a_{z,\ell-z} \frac{\binom{|\mathcal S|-2}{z-1}}{\binom{|\mathcal S|-1}{z-1}} V\overset{(b)}{=}a_{t,d}\frac{d}{t+d-1}V,
\end{align}
where the (a) holds by the definition of $a_{t,d}$ in Lemma 1, and (b) holds by rewriting $z$ as $t$ and $\ell$ and $t+d$.
Since $|\mathcal S|=t+d$,
\begin{align*}
 L_{t,d}=a_{t,d}\frac{t+d-s}{t+d-1}V=a_{t,d}\frac{d}{t+d-1}V.
\end{align*}
By summing the communication loads $L_{t,d}$ in all rounds and all clusters $\mathcal{S}\subseteq [K]$, we obtain that the communication load (normalized by $QNV$) is
\begin{equation*} 
L =\frac{1}{QN}\sum_{s=1}^{K}\sum_{d=1}^{K-s}a_{t,d}\frac{d}{t+d-1} \notag,
\end{equation*}
which coincides with the lower bound in Lemma 1.

\subsection{Proof of Theorem \ref{ThmOptFSCT}} \label{ConverseFSCTOpt}
According to Theorem \ref{ThmFewshot}, the upper bound of communication load achieved by FSCT is 
\begin{align}
&L_{\textnormal{fsct}}\!=\!\frac{1}{QN}\sum_{s=1}^{K}\sum_{d=1}^{K-s}a_{t,d}\frac{d}{t+d-1}\notag\\
&+\frac{1}{QN}\sum_{S\subseteq[K]}\sum_{z=r_{\min}}^{|\mathcal S|-1}\frac{1}{|\mathcal{S}|-1}\bigg(\!\!\!\!\!\!\!\!\!\!\sum_{\substack{k\in\mathcal{S}:\\\mathds{1}(\mathscr{E}_1(k,z,\mathcal{S}))=0}}\!\!\!\!\!\!\!\!\!\!\! |n_{k,z}^{\mathcal{S}}|-\!\!\!\!\sum_{\substack{k\in\mathcal{S}:\\\mathds{1}(\mathscr{E}_1(k,z,\mathcal{S}))=1\\\mathds{1}(\mathscr{E}_2(k,z,\mathcal{S}))=0}}\!\!\!\!\!\!\!\!\!\!\! n_{k,z}^{\mathcal{S}}\bigg)\notag\\
&+\frac{1}{QN}\sum_{S\subseteq[K]}\sum_{z=r_{\min}}^{|\mathcal S|-1}\sum_{\substack{i\in\mathcal{S}:\\\mathds{1}(\mathscr{E}_2(i,z,\mathcal{S}))=0}}\!\!\max_{j\in \mathcal{S}}\!\!\!\sum_{\substack{\mathcal{S}_1\subseteq \mathcal{S}:|S_1|=z\\j\in S_1,i\not\in S_1}}\frac{|\mathcal {V}_{\mathcal {S}_1}^{\mathcal {S} \backslash \mathcal {S}_1}|}{z}.\label{load22}
\end{align} 
If the deficit conditions and feasible conditions are satisfied for all nodes in each cluster, i.e., $\mathds{1}(\mathscr{E}_1(i,z,\mathcal{S}))=1,\mathds{1}(\mathscr{E}_1(i,z,\mathcal{S}))=1$ for all $\mathcal{S}\subseteq [K], r_{\min}\leq z \leq |\mathcal{S}|-1, i\in\mathcal{S}$, then the second and third terms in \eqref{load22} will be 0. Hence, the communication load can be reduced to
\begin{align}
L_{\textnormal{fsct}}\!=\!\frac{1}{QN}\sum_{s=1}^{K}\sum_{d=1}^{K-s}a_{t,d}\frac{d}{t+d-1},\notag
\end{align}
matching the lower bound in Lemma 1, which means the FSCT is optimal.

\section{Proof of Corollary \ref{OSCTCorollary}}\label{AppOSCT}
In this section, we prove that our OSCT scheme is optimal in 3-node systems and semi-homogeneous systems.

\subsection{Proof of Optimality in 3-node System}\label{OSCT3wProve}

In \cite{ThreeWorker} the authors considered a 3-node system, and provided an optimal data placement and Shuffle phase design that minimize the communication load in the 3-node system. Now we prove that given any data placement $\mathcal{M}$, the communication load in Theorem \ref{ThmOneshot} is the same as \cite[Lemma 1]{ThreeWorker}, which is given below.

\begin{lemma}(\cite[Lemma 1]{ThreeWorker})\label{3wk}
Given data placement $\mathcal{M}\triangleq(\mathcal{M}_1 , \mathcal{M}_2 , \mathcal{M}_3)$, the communication load $\mathcal{L}_{\mathcal{M}}$ is achievable, where:
\begin{align*} 
 \mathcal{L}_\mathcal{M}=2(S_{1}+S_{2}+S_{3})+g(S_{12},\ S_{13},\ S_{23}),
\end{align*}
and 
\begin{align}
 g(x_{1},x_{2},x_{3})\!=\!\frac{1}{2}\Big(\Big| \max\limits_{k}x_{k}\!+\!\sum\limits_{k}\frac{x_{k}}{2}\Big| \!+\!\Big| \max\limits_{k}x_{k}\!-\!\sum\limits_{k}\frac{x_{k}}{2}\Big|\Big),\label{gFunc}
\end{align}
where
\begin{align*} 
 &\mathcal{S}_{123} \ {\buildrel\triangle\over =}\ \mathcal{M}_{1}\cap \mathcal{M}_{2}\cap \mathcal{M}_{3},\\ 
 &\ \mathcal{S}_{12} \ {\buildrel\triangle\over =}\ (\mathcal{M}_{1}\cap \mathcal{M}_{2})\ \backslash\ \mathcal{S}_{123},\quad \mathcal{S}_{1}\ {\buildrel\triangle\over =}\ \mathcal{M}_{1}\ \backslash\ (\mathcal{M}_{2}\cup \mathcal{M}_{3}),\\ 
 &\ \mathcal{S}_{13} \ {\buildrel\triangle\over =}\ (\mathcal{M}_{1}\cap \mathcal{M}_{3})\ \backslash\ \mathcal{S}_{123},\quad \mathcal{S}_{2}\ {\buildrel\triangle\over =}\ \mathcal{M}_{2}\ \backslash\ (\mathcal{M}_{1}\cup \mathcal{M}_{3}),\\ 
 &\ \mathcal{S}_{23} \ {\buildrel\triangle\over =}\ (\mathcal{M}_{2}\cap\mathcal{M}_{2})\ \backslash\ \mathcal{S}_{123},\quad \mathcal{S}_{3}\ {\buildrel\triangle\over =}\ \mathcal{M}_{3}\ \backslash\ (\mathcal{M}_{1}\cup \mathcal{M}_{2}). 
\end{align*}
\end{lemma}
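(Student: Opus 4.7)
The plan is to establish Lemma~\ref{3wk} by constructing an achievable scheme tailored to the 3-node file structure and then verifying that its communication cost matches the stated expression $2(S_1+S_2+S_3)+g(S_{12},S_{13},S_{23})$. I would begin by partitioning the files into the seven disjoint cells $\mathcal{S}_1,\mathcal{S}_2,\mathcal{S}_3,\mathcal{S}_{12},\mathcal{S}_{13},\mathcal{S}_{23},\mathcal{S}_{123}$ according to which subset of nodes stores them, since each cell invites a distinctly different shuffle tactic. Recall that under the 3-node assumption each Reduce function is owned by a unique node, so for every file $w_n$ and every Node $k\in[3]$ the IV $v_{q,n}$ with $q\in\mathcal{W}_k$ is needed at Node $k$ unless $n\in\mathcal{M}_k$.

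Next I would dispose of the easy cells. Files in $\mathcal{S}_{123}$ are locally available to everyone, so their IVs need no transmission and contribute $0$. Files in $\mathcal{S}_k$ are exclusive to Node $k$, whose mapped IVs for the two other nodes' Reduce functions must each be unicast once, giving a cost of $2|\mathcal{S}_k|$ per file (suitably normalized); summing yields the term $2(S_1+S_2+S_3)$. This part is essentially uncoded and leaves $g(S_{12},S_{13},S_{23})$ as the only non-trivial content of the lemma.

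The crux is the optimal coded multicast over the pairwise-shared cells. For a file $n\in\mathcal{S}_{ij}$, the only outside receiver is the third node $k$, and the IVs it wants are already known to Nodes $i$ and $j$ jointly. The natural coding opportunity is to XOR an IV drawn from $\mathcal{S}_{ij}$ with one drawn from $\mathcal{S}_{ik}$ (or $\mathcal{S}_{jk}$): both of Nodes $j$ and $k$ can subtract the appropriate summand, so a single broadcast from Node $i$ delivers two useful IVs. I would formalize this as a fractional matching/packing problem on three variable classes of sizes $S_{12},S_{13},S_{23}$, where each matched pair saves one unit. When one class dominates in the sense $\max_k x_k > \sum_{\ell\neq k}x_\ell/?$, only partial matching is possible and the remainder must be sent uncoded; when the three sizes are balanced, all IVs can be coded. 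A short case analysis on whether $\max_k x_k \ge \sum_k x_k/2$ shows that the minimum cost equals precisely the closed form \eqref{gFunc}, both when the max dominates (giving $\max_k x_k$) and when it does not (giving $\tfrac{1}{2}\sum_k x_k$).

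The main obstacle I anticipate is the clean derivation of $g$: handling the asymmetry when one $S_{ij}$ is very large requires carefully splitting the excess between two coded sub-streams and a residual unicast stream, and then arguing via a counting/matching lower bound that no further savings are possible. Once $g$ is established as both achievable and matching the appropriate converse bound on the pairwise-shared cells, combining with the straightforward costs from $\mathcal{S}_k$ and $\mathcal{S}_{123}$ yields $\mathcal{L}_\mathcal{M}$ as claimed. (In the larger context, one then instantiates OSCT in the 3-node setting and checks that its cluster/round decomposition reproduces the same decomposition into the seven cells and the same pairwise matching, which is where Remark~\ref{explicitAlpha} together with the deficit-ratio bound will do the bookkeeping.)
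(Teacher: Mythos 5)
Your outline is sound, but it is worth being clear that the paper itself never proves this lemma: it is quoted verbatim from \cite{ThreeWorker}, and even in Appendix \ref{OSCT3wProve} the paper only re-derives the two case expressions $2(S_1+S_2+S_3)+\tfrac12(S_{12}+S_{13}+S_{23})$ and $2(S_1+S_2+S_3)+S_{23}$ through the OSCT optimization $\mathcal{P}_{\textnormal{OSCT}}$ (explicit $\alpha$'s in Case 1, a KKT solution plus the residual $\tau$-terms in Case 2) and then defers to \cite{ThreeWorker} for the fact that these coincide with \eqref{gFunc}. What you propose is therefore a genuinely different, self-contained route: an explicit achievable scheme in which the pairwise cells are handled by XOR-pairing an IV from $\mathcal{S}_{ij}$ with one from $\mathcal{S}_{ik}$, sent by the common node $i$, and the cost is computed by a fractional matching with pair counts $y_{jk,ik}$; solving the matching gives full pairing (cost $\tfrac12\sum_k x_k$, requiring halving of IVs) exactly when the triangle conditions $\max_k x_k\le\sum_{\ell\ne k}x_\ell$ hold, and otherwise cost $\max_k x_k$ after unicasting the unmatched excess. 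Together with the identity $g(x_1,x_2,x_3)=\max\{\max_k x_k,\tfrac12\sum_k x_k\}$ (which you should state explicitly, since it is what ties your case analysis to \eqref{gFunc}), this yields the claimed load; your matching variables are in one-to-one correspondence with the paper's $\alpha_{j,2}^{\{1,2,3\}}$, so the two derivations agree term by term. Two small points: the lemma is an achievability statement only, so the converse/matching lower bound you anticipate is unnecessary (it belongs to the optimality discussion in Corollary \ref{OSCTCorollary}, not here), and the condition you wrote with a stray ``$?$'' should simply read $\max_k x_k\ge\sum_{\ell\ne k}x_\ell$, equivalently $\max_k x_k\ge\tfrac12\sum_k x_k$. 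What your approach buys is a proof of the cited result from first principles; what the paper's route buys is a verification that its general OSCT machinery specializes to the same load without reconstructing the scheme of \cite{ThreeWorker}.
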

For simplicity, we denote the set cardinality by $S \triangleq |\mathcal S|$.
Without loss of generality, assume $S_{12}\leq S_{13}\leq S_{23}$. 

Since the subset of files $\mathcal{S}_{123}$ is available at every node, we do not need to consider the communication cost incurred by files in $\mathcal{S}_{123}$. As the subsets of files $\mathcal{S}_1$, $\mathcal{S}_2$ and $\mathcal{S}_3$ are available at only one node, each node $k\in[3]$ need to unicast $v_{j,n}$ to Node $j$ for $j\neq k,n\in\mathcal{S}_k$, which causes a total of $2(S_1 + S_2 + S_3)$ transmissions. Now we need to calculate the communication cost in the round 2 of cluster $S=\{1,2,3\}$ in which the file sets $\mathcal{S}_{12},\mathcal{S}_{13}$ and $\mathcal{S}_{23}$ are considered. Here we rewrite $\alpha_{k,z}^{\mathcal{S}}$ as $\alpha_k$ for simplicity.

\subsubsection{Case1} If $S_{12}+S_{13}-S_{23}\geq 0$, we can obtain $\alpha_1,\alpha_2$ and $\alpha_3$ below by soloving the $\mathcal{P}_{\textnormal{OSCT}}$ in Theorem \ref{ThmOneshot}, 
\begin{align}
\alpha_1 = \frac{1}{2}(S_{12} + S_{13} - S_{23})\notag\\
\alpha_2 = \frac{1}{2}(S_{12} + S_{23} - S_{13})\notag\\
\alpha_3 = \frac{1}{2}(S_{13} + S_{23} - S_{12}).\label{sol}
\end{align}

As $\alpha_1,\alpha_2$ and $\alpha_3$ obtained in \eqref{sol} are non-negative, the communication load can be computed as
\begin{align}
L&=2\left(S_1+S_2+S_3\right)+(\alpha_1+\alpha_2+\alpha_3)\notag\\
&=2\left(S_1+S_2+S_3\right)+\frac{1}{2}(S_{12} + S_{13} + S_{23}). \label{case1}
\end{align}

\subsubsection{Case2} If $S_{12}+S_{13}-S_{23}<0$, from KKT condition, we can find that $\alpha_1=0,\alpha_2=\frac{1}{3}(2S_{12}+S_{23}-S_{13}),\alpha_3=\frac{1}{3}(2S_{13}+S_{23}-S_{12})$ is an optimal solution of $\mathcal{P}_\textnormal{osct}$. 
According to Theorem \ref{ThmOneshot}, we have $$\tau_1=\alpha_1+\alpha_2-S_{12}=\frac{1}{3}(S_{23}-S_{12}-S_{13}),$$ $$\tau_2=\alpha_1+\alpha_3-S_{13}=\frac{1}{3}(S_{23}-S_{12}-S_{13}),$$ $$\tau_3=\alpha_2+\alpha_3-S_{23}=\frac{1}{3}(S_{12}-S_{13}-S_{23}).$$

Hence, according to Theorem\ref{ThmOneshot}, the communication load is
\begin{align}
L&=2\left(S_1+S_2+S_3\right)+\alpha_2+\alpha_3+\sum_{i=1}^{3}(\tau_i)^+ = 2\left(S_1+S_2+S_3\right)+S_{23}.\label{case2}
\end{align}

In \cite{ThreeWorker}, the authors proved that the communication load \eqref{case1} and \eqref{case2} are coherent with \eqref{gFunc} in Lemma 
\ref{3wk}, i.e., our communication load in Theorem \ref{ThmOneshot} is tight.

\subsection{Proof of Optimality in Semi-Homogeneous System}\label{OSCTSemiProve}

First, we present the corresponding communication load in Theorem \ref{ThmOneshot} for the semi-homogenous system, and then prove that it is tight.

According to $\mathcal{P}_{\textnormal{OSCT}}$ in Theorem \ref{ThmOneshot}, we can get the solution:
\[
\alpha^{\mathcal S}_{i,z}=\frac{N}{r}\sum\limits_{s=|\mathcal S|-r}^{|\mathcal S|}\frac{\binom{r}{|\mathcal S|-s}}{\binom{K}{r}\binom{K}{s}}Q_s, \forall i\in\mathcal{S}, z=r,|\mathcal{S}|\geq r+1,
\] 
and $\alpha^{\mathcal S}_{i,z}=0$ for other values of $\mathcal{S}$, $z$ and $i$. Then from Theorem \ref{ThmOneshot}, the communication load achieved by OSCT is
\begin{align}
L_{\textnormal{OSCT}}&=\sum_{S\subseteq[K]}\sum_{z=1}^{|\mathcal S|-1}\sum_{i\in S}\binom{|\mathcal S|-2}{z-1}\sum\limits_{s=|\mathcal S|-r}^{|\mathcal S|}\frac{\binom{r}{|\mathcal S|-s}}{\binom{K}{r}\binom{K}{s}r}\frac{Q_s}{Q}=\sum_{S\subseteq[K]}\sum\limits_{s=|\mathcal S|-r}^{|\mathcal S|}\frac{|\mathcal{S}|\binom{|\mathcal S|-2}{r-1}\binom{r}{|\mathcal S|-s}}{r\binom{K}{r}\binom{K}{s}}\frac{Q_s}{Q}.
\end{align}
As there are $\binom{K}{\ell}$ subsets $\mathcal{S}\subseteq[K]$ with size $r+1\leq\ell \leq K$, by arranging the order of summation and rewrite $|\mathcal{S}|$ as $\ell$, the communication load is
\begin{align}
 L_{\textnormal{OSCT}}&=\sum_{s=1}^{K}\sum \limits _{\ell = \max \{r+1,s\}}^{\min \{r+s,K\}} \frac {\ell {\binom{K }{ \ell }} {\binom{\ell -2 }{ r -1}} {\binom{r }{ \ell -s}}}{r {\binom{K }{ r}}{\binom{K }{ s}}}\frac{Q_s}{Q}=\sum_{s=1}^K L_\textnormal{CDC}(s)\frac{Q_s}{Q},\label{upSemi}
\end{align} 
where $L_\textnormal{CDC}(s)=\sum \limits _{\ell = \max \{r+1,s\}}^{\min \{r+s,K\}} \frac {\ell {\binom{K }{ \ell }} {\binom{\ell -2 }{ r -1}} {\binom{r }{ \ell -s}}}{r {\binom{K }{ r}}{\binom{K }{ s}}}$ is the communication load in \cite[Theorem 2]{CDC} with each Reduce function computed by $s$ nodes.

Next, we prove that the communication load in \eqref{upSemi} is tight. For the semi-homogeneous system, according to the definition of $a_{t,d}$ in Lemma \ref{LemmaLower}, we obtain that for $t,d$ with $t+d\leq K,$
\begin{align}
a_{t,d}=\sum_{s=d}^{t+d}NQ_s\binom{t}{t+d-s}\frac{\binom{K}{t+d}\binom{t+d}{t}}{\binom{K}{t}\binom{K}{s}}. \label{asd}
\end{align} 
Combining \eqref{lowerbound} in Lemma \ref{LemmaLower} and \eqref{asd}, we obtain that the lower bound of the optimal communication load is
\begin{align}
L^*&\geq \sum_{d=1}^{K-r}\sum_{s=d}^{r+d}\frac{\binom{r}{r+d-s}\binom{K}{r+d}\binom{r+d}{r}}{\binom{K}{r}\binom{K}{s}}\frac{Q_s}{Q}\frac{d}{r+d-1}\overset{(a)}{=}\sum_{\ell=r+1}^{K}\sum_{s=d}^{\ell}\frac{\binom{r}{\ell-s}\binom{K}{\ell}\binom{\ell}{r}(\ell-r)}{\binom{K}{r}\binom{K}{s}(\ell-1)}\frac{Q_s}{Q}, \label{semi}
\end{align}
where (a) is obtained by rewriting $r+d$ as $\ell$. Arranging the order of summation in \eqref{semi}, we have:
\begin{align}
L^*\geq \sum_{s=1}^{K}\sum \limits _{\ell = \max \{r+1,s\}}^{\min \{r+s,K\}} \frac {\ell {\binom{K }{ \ell }} {\binom{\ell -2 }{ r -1}} {\binom{r }{ \ell -s}}}{r {\binom{K }{ r}}{\binom{K }{ s}}}\frac{Q_s}{Q},\label{semiLowBound}
\end{align}
which is the same as $ L_{\textnormal{OSCT}}$ in \eqref{upSemi}. Thus, OSCT is optimal for the semi-homogeneous system.

\section{A Solution $a_{i,z}^{\mathcal{S}}$ to $\mathcal{P}_{\textnormal{OSCT}}(\mathcal{S},z)$}\label{derivationAlpha}
 In this section, we derive an optimal solution of $\mathcal{P}_{\textnormal{OSCT}}(\mathcal{S},z)$ if the deficit ratio of each node $k\in\mathcal{S}$ is less than the threshold value $\frac{|\mathcal S|-z}{z-1}$ and $|\mathcal{V}_{\mathcal {S}_z[i]}^{\mathcal {S} \backslash \mathcal {S}_z[i]}|\neq 0 $ for all $i\in\big[\binom{|\mathcal{S}|}{z}\big]$.
 
  Recall that the objective function of $\mathcal{P}_{\textnormal{OSCT}}$ in Theorem \ref{t1} is 
  \begin{align}
  f &= \Big(\!\sum_{k\in \mathcal S_z[1]}\!\!\alpha_{k,z}^{\mathcal S}-|\mathcal{V}_{\mathcal {S}_z[1]}^{\mathcal {S} \backslash \mathcal {S}_z[1]}|\Big)^2\!\!+\!\Big(\!\sum_{k\in \mathcal S_z[2]}\!\!\alpha_{k,z}^{\mathcal S}-|\mathcal{V}_{\mathcal {S}_z[2]}^{\mathcal {S} \backslash \mathcal {S}_z[2]}|\Big)^2 +\cdots+\Big(\sum_{k\in S[\binom{|\mathcal S|}{z}]}\alpha_{k,z}^{\mathcal S}-|\mathcal{V}_{\mathcal {S}_z[\binom{|\mathcal S|}{z}]}^{\mathcal {S} \backslash \mathcal {S}_z[\binom{|\mathcal S|}{z}]}|\Big)^2. \label{fff}
  \end{align}
  
   By unfolding the square and merging the similar term in \eqref{fff}, $f$ can be expressed as:
  \begin{align}
  f&= \binom{|\mathcal S|-1}{z-1}\sum_{k\in\mathcal{S}}\left(\alpha_{k,z}^{\mathcal S}\right)^2 + 2 \binom{|\mathcal S|-2}{z-2}\sum_{i,j\in\mathcal{S},i<j} \alpha_{i,z}^{\mathcal S}\alpha_{j,z}^S -2\sum_{k\in\mathcal{S}}\Big(\sum_{\!i\in \left[\binom{|\mathcal{S}|}{z}\right], k\in\mathcal{S}_{z}[i]}\big|\mathcal{V}_{\mathcal{S}_{z}[i]}^{\mathcal {S}\backslash \mathcal{S}_{z}[i]}\big|\Big)\alpha_{k,z}^{\mathcal{S}}+\Delta,
  \end{align}
  where $\Delta$ is a term irrelevant to $\alpha_{k,z}^{\mathcal{S}}$ for all $k\in\mathcal S$.
  
  Take partial differential over each $\alpha_{k,z}^{\mathcal S}$:
  \begin{align*}
  \frac{\partial f}{\partial \alpha_{k,z}^{\mathcal S}} 
  = &2\tbinom{|\mathcal S|-1}{z-1}\alpha_{k,z}^{\mathcal S}\! +\! 2\tbinom{|\mathcal S|-2}{z-2}\sum_{j\neq k}\alpha_{j,z}^S - 2\!\!\!\!\sum_{\!i\in \left[\binom{|\mathcal{S}|}{z}\right], k\in\mathcal{S}_{z}[i]}\!\big|\mathcal{V}_{\mathcal{S}_{z}[i]}^{\mathcal {S}\backslash \mathcal{S}_{z}[i]}\big|, ~~ \forall k\in\mathcal S.
  \end{align*}
  
  For convenience, assume $\mathcal{S}=\{1,2,\cdots,|\mathcal{S}|\}$. By setting all partial differentials to be zero, we can get the LCs written in the following matrix form:
  \begin{align} 
  \underbrace{
    \begin{bmatrix}\! 
      \binom{|\mathcal S|-1}{z-1} & \binom{|\mathcal S|-2}{z-2} & \cdots & \binom{|\mathcal S|-2}{z-2} \\ 
      \binom{|\mathcal S|-2}{z-2} & \binom{|\mathcal S|-1}{z-1} & \cdots & \binom{|\mathcal S|-2}{z-2} \\ 
      \vdots & \vdots & \ddots & \vdots \\ 
      \binom{|\mathcal S|-2}{z-2} & \binom{|\mathcal S|-2}{z-2} & \cdots & \binom{|\mathcal S|-1}{z-1} \!\!\!
      \end{bmatrix}.
  }_{\mathbf A} \!
   \begin{bmatrix}\! 
    \alpha_{1,z}^S\\ 
    \alpha_{2,z}^S\\ 
    \vdots\\ 
    \alpha_{|\mathcal S|,z}^S \!
   \end{bmatrix}\! \!\! 
   \!=\!\!\! 
   \underbrace{\begin{bmatrix} 
   \sum\limits_{\!i\in \big[\binom{|\mathcal{S}|}{z}\big], 1\in\mathcal{S}_{z}[i]}\big|\mathcal{V}_{\mathcal{S}_{z}[i]}^{\mathcal {S}\backslash \mathcal{S}_{z}[i]}\big| \\ 
   \sum\limits_{\!i\in \big[\binom{|\mathcal{S}|}{z}\big], 2\in\mathcal{S}_{z}[i]}\big|\mathcal{V}_{\mathcal{S}_{z}[i]}^{\mathcal {S}\backslash \mathcal{S}_{z}[i]}\big|\\ 
   \sum\limits_{\!i\in \big[\binom{|\mathcal{S}|}{z}\big], |\mathcal{S}|\in\mathcal{S}_{z}[i]}\big|\mathcal{V}_{\mathcal{S}_{z}[i]}^{\mathcal {S}\backslash \mathcal{S}_{z}[i]}\big| 
  \end{bmatrix}}_{\mathbf{y}}. \label{LC}
  \end{align}
  
  To solve each $\alpha_{k,z}^{\mathcal S}$, we just need to find the inverse matrix of ${\mathbf A}$. From \cite{LA}, we know that $\left({\mathbf A}\right)^{-1}$ always exists for $|\mathcal S|> 1$, and can be written as
  \begin{align} 
  \left({\mathbf A}\right)^{-1} = 
   \begin{bmatrix}\! 
   c_z & p_z & \cdots & p_z \\ 
   p_z & c_z & \cdots & p_z \\ 
   \vdots & \vdots & \ddots & \vdots \\ 
   p_z & p_z & \cdots & c_z \!
   \end{bmatrix}, 
  \end{align} 
  where
  \begin{align}
  c_z &= \frac{1}{z\cdot\binom{|\mathcal S|-2}{z-1}}+(|\mathcal S|-2)\cdot\frac{z-1}{z\cdot(|\mathcal S|-z)\cdot\binom{|\mathcal S|-1}{z-1}}\label{cz},\\
  p_z &= -\frac{z-1}{z\cdot(|\mathcal S|-z)\cdot\binom{|\mathcal S|-1}{z-1}}\label{pz}.
  \end{align}

 Together with (\ref{LC}), we can solve each $\alpha_{k,z}^{\mathcal S}$:
  \begin{align}
  \alpha_{k,z}^{\mathcal S} \overset {(a)}{=} &~
  c_z\!\!\!\sum\limits_{\substack{\mathcal S_1\in\mathcal{S}:\\|\mathcal S_1|=z,k\in \mathcal S}_1}\!\!\!\big|\mathcal{V}_{\mathcal{S}_1}^{\mathcal{S}\backslash\mathcal{S}_1}\big| +p_z\sum_{j\in\mathcal{S}\backslash\{ k\}}\!\!\!\sum\limits_{\substack{\mathcal S_1\in\mathcal{S}:\\|\mathcal S_1|=z,j\in \mathcal S}_1}\!\!\!\big|\mathcal{V}_{\mathcal{S}_1}^{\mathcal{S}\backslash\mathcal{S}_1}\big|\notag\\
   \overset {(b)}{=}&~c_z\!\!\sum\limits_{\substack{\mathcal S_1\in\mathcal{S}:\\|\mathcal S_1|=z,k\in \mathcal S}_1}\!\!\!\big|\mathcal{V}_{\mathcal{S}_1}^{\mathcal{S}\backslash\mathcal{S}_1}\big|
   \!+\!p_z\bigg(z\!\!\!\!\!\sum\limits_{\substack{\mathcal S_1\in\mathcal{S}:\\|\mathcal S_1|=z,k\notin \mathcal S}_1}\!\!\!\big|\mathcal{V}_{\mathcal{S}_1}^{\mathcal{S}\backslash\mathcal{S}_1}\big| +(z-1)\sum\limits_{\substack{\mathcal S_1\in\mathcal{S}:\\|\mathcal S_1|=z,k\in \mathcal S_1}}\!\!\!\big|\mathcal{V}_{\mathcal{S}_1}^{\mathcal{S}\backslash\mathcal{S}_1}\big|\bigg)\notag\\
   =&\!\!\!\sum\limits_{\mathcal{\substack{\mathcal S_1\in\mathcal{S}:\\|\mathcal S_1|=z,k\in \mathcal S_1}}}\!\!\!\![c_z\!+\!(z\!-\!1) p_z]\big|\mathcal{V}_{\mathcal{S}_1}^{\mathcal{S}\backslash\mathcal{S}_1}\big| \!+ \!\!\! \sum\limits_{\substack{\mathcal S_1\in\mathcal{S}:\\|\mathcal S_1|=z,k\notin \mathcal S_1}}\!\!\!\!\!z p_z\big|\mathcal{V}_{\mathcal{S}_1}^{\mathcal{S}\backslash\mathcal{S}_1}\big|\notag\\
  \overset {(c)}{=}&\!\!\!\!\!\!\sum\limits_{\substack{\mathcal{S}_1\subseteq \mathcal{S}:\\|\mathcal{S}_1|=z,k\in \mathcal{S}_1}}\!\!\!\!\!\frac{|\mathcal {V}_{\mathcal{S}_1}^{\mathcal {S} \backslash \mathcal{S}_1}|}{{\tbinom{|\mathcal S|-1}{z-1}}}
  -\!\!\!\!\!\!\sum\limits_{\substack{\mathcal S_1\subseteq \mathcal{S}:\\|\mathcal S_1|=z,k\not\in \mathcal S_1}}\!\!\!\!\!\! \frac{(z-1)| \mathcal {V}_{\mathcal S_1}^{\mathcal {S} \backslash \mathcal S_1}|}{(|\mathcal S|-z) \binom{|\mathcal S|-1}{z-1}},\label{aizs}
  \end{align}
  where (a) follows by   taking the $k$-th row of $\mathbf{(A)}^{-1}\mathbf{y}$ and rewriting the notation $\mathcal{S}[i]$ as $\mathcal{S}_1\in\mathcal{S},|\mathcal{S}_1|=z$;  (c) holds by \eqref{cz} and \eqref{pz}; (b) follows from that: Considering a subset $\mathcal S_1\subseteq \mathcal{S}$ with $|\mathcal S_1|=z$, says $\mathcal S_1=\{t_1,\ldots,t_z\}$, if $k\notin \mathcal S_1$, then we have $t_1,\ldots,t_z\neq k$.
  This indicates that in the summation over $j\in\mathcal{S}\backslash\{k\}$, the subset $\mathcal S_1=\{t_1,\ldots,t_z\}$ will appear $z$ times (i.e., $j=t_1,\ldots,t_z$), leading to 
  \begin{IEEEeqnarray}{rCl}
  \sum_{j\in\mathcal{S}\backslash\{ k\}}\sum\limits_{\!\substack{\mathcal S_1\in\mathcal{S}:|\mathcal S_1|=z,\\j\in\mathcal S_1,k\notin \mathcal S_1}}\big|\mathcal{V}_{\mathcal{S}_{z}[i]}^{\mathcal {S}\backslash \mathcal{S}_{z}[i]}\big|=\sum\limits_{\substack{\mathcal S_1\in\mathcal{S}:\\|\mathcal S_1|=z,k\notin \mathcal S_1}}\!\!\!\!\!z|\mathcal{V}_{\mathcal{S}_1}^{\mathcal{S}\backslash\mathcal{S}_1}|\label{notinK}
 \end{IEEEeqnarray}
  If $k\in \mathcal S_1$, then among all $j\in \mathcal{S}\backslash\{k\}$, the subset $\mathcal S_1=\{t_1,\ldots,t_z\}$ contains $z-1$ of them, leading to 
   \begin{IEEEeqnarray}{rCl}
   \!\!\sum_{j\in\mathcal{S}\backslash\{ k\}}\ \sum\limits_{\!\substack{\mathcal S_1\in\mathcal{S}:|\mathcal S_1|=z,\\j\in\mathcal S_1, k\in \mathcal S_1}}\!\!\!\!\!\big|\mathcal{V}_{\mathcal{S}_{z}[i]}^{\mathcal {S}\backslash \mathcal{S}_{z}[i]}\big|=\!\!\!\!\!\!\!\sum\limits_{\substack{\mathcal S_1\in\mathcal{S}:\\|\mathcal S_1|=z,k\in \mathcal S_1}}\!\!\!\!\!(z-1)\big|\mathcal{V}_{\mathcal{S}_1}^{\mathcal{S}\backslash\mathcal{S}_1}\big|\label{inK}.
 \end{IEEEeqnarray}

 Note that if $|\mathcal{V}_{\mathcal {S}_z[i]}^{\mathcal {S} \backslash \mathcal {S}_z[i]}|\neq 0$, for all $i\in\big[\binom{|\mathcal{S}|}{z}\big]$, the constraint (7c) is satisfied. As the deficit ratio of node $k$ is less than the threshold value $\frac{|\mathcal S|-z}{z-1}$, the constant (7b) is also satisfied.

 \section{Proof of Corollary\ref{FSCTcoro}}\label{AppFSCTCoro}
 In this section, we prove that our FSCT scheme is optimal for the homogeneous, semi-homogeneous and 3-node systems.

 \subsection{Homogeneous System}\label{Ca}
For the homogeneous system, we prove that both the feasible and deficit conditions are satisfied in the homogeneous system.

First, consider the deficit condition in \eqref{conditionOptThm4}. For a homogeneous system where each Reduce function is computed $s$ times, in cluster $\mathcal{S}$, we have $\big|\mathcal{V}_{\mathcal {S}_z[i]}^{\mathcal {S} \backslash \mathcal {S}_z[i]}\big|
 =\frac{NQ}{\binom{K}{r}\binom{K}{s}}\binom{r}{|\mathcal{S}|-s}$ for all $i\in[\binom{|\mathcal{S}|}{z}],z=r$, and $\big|\mathcal{V}_{\mathcal {S}_z[i]}^{\mathcal {S} \backslash \mathcal {S}_z[i]}\big|
 =0$ for $z\neq r$. Then according to \eqref{nAndBeta},
\begin{align}
n_{k,z}^{\mathcal{S}}
&=(|\mathcal{S}|-z)\!\!\!\!\!\!\!\!\!\sum\limits_{\!i\in \left[\binom{|\mathcal{S}|}{z}\right], k\in\mathcal{S}_{z}[i]}\!\!\!\!\!\!\!\!\!\big|\mathcal{V}_{\mathcal{S}_{z}[i]}^{\mathcal {S}\backslash \mathcal{S}_{z}[i]}\big|
\!-\! (z\!-\!1)\!\!\!\!\!\!\!\!\!\sum\limits_{\!i\in \left[\binom{|\mathcal{S}|}{z}\right], k\notin\mathcal{S}_{z}[i]}\!\!\!\!\!\!\!\!\!\big|\mathcal{V}_{\mathcal{S}_{z}[i]}^{\mathcal {S}\backslash \mathcal{S}_{z}[i]}\big|=\tbinom{|\mathcal{S}|-1}{r}\frac{NQ}{\binom{K}{r}\binom{K}{s}}\tbinom{r}{|\mathcal{S}|-s}>0, ~\forall k\in\mathcal{S}, z=r,
\end{align}
and $n_{k,z}^{\mathcal{S}}=0$ for $z\neq r$, which means the deficit condition is satisfied according to \eqref{deficitN}.

Then, consider the feasible condition. In the round $z=r$ of cluster $\mathcal{S}$, we can find a solution of \eqref{feasible} for Node $i$ as
\begin{align}
 &\beta_{j,\mathcal S_1}=\frac{(|\mathcal{S}|\!-\!1)}{z}|\mathcal{V}_{\mathcal {S}_1}^{\mathcal {S} \backslash \mathcal {S}_1}|=\frac{(|\mathcal{S}|\!-\!1)}{r}\frac{NQ}{\binom{K}{r}\tbinom{K}{s}}\binom{r}{|\mathcal{S}|-s},\notag\\
 &\forall j\in \mathcal S_1, i\not\in\mathcal S_1, \mathcal S_1\subseteq \mathcal{S},|\mathcal S_1|=z.
\end{align}
The $\beta_{j,\mathcal S_1}$ above is a solution for Node $i\in\mathcal{S}$ since 
\begin{align*}
&\sum_{\substack{j:j\in S_1}}\beta_{j,\mathcal S_1} = (|\mathcal{S}|\!-\!1) |\mathcal {V}_{\mathcal {S}_1}^{\mathcal {S} \backslash \mathcal {S}_1}|,\ \forall \mathcal S_1\subseteq \mathcal{S}\!:\!|\mathcal S_1|=z,i\not\in\mathcal{S}_1,\\
 &\sum_{S_1:j\in \mathcal S_1,i\not\in \mathcal S_1}\beta_{j,\mathcal S_1} = \tbinom{|\mathcal{S}|-2}{z-1}\frac{(|\mathcal{S}|-1)}{z}|\mathcal{V}_{\mathcal {S}_1}^{\mathcal {S} \backslash \mathcal {S}_1}| = \tbinom{|\mathcal{S}|-1}{z}|\mathcal{V}_{\mathcal {S}_1}^{\mathcal {S} \backslash \mathcal {S}_1}|= n_{j,z}^{\mathcal{S}},\forall j\in \mathcal{S},j\neq i,\notag
\end{align*}
which satisfies the feasible condition in \eqref{feasible}.

As both feasible and deficit conditions are satisfied, our FSCT is optimal according to Theorem \ref{ThmFewshot}.

\subsection{3-node System}
Similar to Appendix \ref{OSCT3wProve}, we only calculate the communication load in the round 2 of cluster $S=\{1,2,3\}$ in which we consider file sets $\mathcal{S}_{12},\mathcal{S}_{13}$ and $\mathcal{S}_{23}$. Rewrite $n_{k,z}^{\mathcal{S}}$ as $n_k$ for simplicity.

 \subsubsection{Case1} If $S_{12}+S_{13}-S_{23}\geq 0$, according to \eqref{nAndBeta}, we can get all $n_1,n_2,n_3$ below
 \begin{align}
  n_1 = S_{12} + S_{13} - S_{23}>0\notag\\
  n_2 = S_{12} + S_{23} - S_{13}>0\notag\\
  n_3 = S_{13} + S_{23} - S_{12}>0.\label{beta3w}
 \end{align}
 
 By \eqref{loadFSCTcompu}, Node $i$ will multicast $n_i$ LCs with size of $\frac{1}{2}$ IV, so the communication load can be computed as:
 \begin{align}
 L&=2\left(S_1+S_2+S_3\right)+\frac{1}{2}(S_{12} + S_{13} + S_{23}).\label{case1F}
 \end{align}
 
 \subsubsection{Case2} If $S_{12}+S_{13}-S_{23}<0$, from \eqref{beta3w}, we can get 
 \begin{align}
 &(n_1)^+ = 0\notag\\
 &(n_2)^+ = S_{12} + S_{23} - S_{13} \notag\\
 &(n_3)^+ = S_{13} + S_{23} - S_{12}.
 \end{align}

 Now we show that the feasible condition is satisfied. Take Node 3 as an example, the feasible condition of Node 3 in \eqref{feasible} will be
 \begin{align}
 &\beta_{1,\{1,2\}} \leq 0\notag\\
 &\beta_{1,\{1,2\}} + \beta_{2,\{1,2\}} \geq 2S_{12}\notag\\
 &\beta_{2,\{1,2\}} \leq S_{12} + S_{23} - S_{13}.\label{newsol}
 \end{align}
 We can find that $\beta_{1,\{1,2\}}=0$ and $2S_{12}\leq\beta_{2,\{1,2\}}\leq S_{12} + S_{23} - S_{13}$ is a solution to \eqref{newsol}. Since $S_{12} + S_{23} - S_{13}>2S_{12}$, the feasible condition is satisfied.
 Thus, by \eqref{loadFSCTcompu}, the communication load can be computed as
 \begin{align}
 L&=2\left(S_1+S_2+S_3\right)+(n_{1})^++(n_{2})^++(n_3)^+\notag\\
 &= 2\left(S_1+S_2+S_3\right)+S_{23}.\label{case2F}
 \end{align}
 
 In \cite{ThreeWorker}, the authors proved that the communication load \eqref{case1F} and \eqref{case2F} are coherent with \eqref{gFunc} in Lemma 2, i.e., our communication load in Theorem 3 is tight.

\subsection{Semi-Homogeneous System}
For a semi-homogeneous system, according to the definition of $\mathcal{V}_{\mathcal {S}_z[i]}^{\mathcal {S} \backslash \mathcal {S}_z[i]}$ \eqref{zIVs},
\begin{align}
 |\mathcal{V}_{\mathcal {S}_z[i]}^{\mathcal {S} \backslash \mathcal {S}_z[i]}|
 =\sum\limits_{s=|\mathcal S|-r}^{|\mathcal S|}\frac{\binom{r}{|\mathcal S|-s}}{\binom{K}{r}\binom{K}{s}}Q_sN,~\forall i\in[\tbinom{|\mathcal{S}|}{z}],z=r,
\end{align}
 and $|\mathcal{V}_{\mathcal {S}_z[i]}^{\mathcal {S} \backslash \mathcal {S}_z[i]}|=0$ for $z\neq r$. Then according to the definition of $\bar{n}_{k,z}^{\mathcal{S}}$ \eqref{nAndBeta}, we obtain that for all $ k\in\mathcal{S}$ and $z=r$, 
\begin{align}
 n_{k,z}^{\mathcal{S}}
 &=(|\mathcal{S}|-z)\!\!\!\!\!\!\!\!\!\sum\limits_{\!i\in \left[\binom{|\mathcal{S}|}{z}\right], k\in\mathcal{S}_{z}[i]}\!\!\!\!\!\!\!\!\!\Big|\mathcal{V}_{\mathcal{S}_{z}[i]}^{\mathcal {S}\backslash \mathcal{S}_{z}[i]}\Big|
 \!-\! (z\!-\!1)\!\!\!\!\!\!\!\!\!\sum\limits_{\!i\in \left[\binom{|\mathcal{S}|}{z}\right], k\notin\mathcal{S}_{z}[i]}\!\!\!\!\!\!\!\!\!\Big|\mathcal{V}_{\mathcal{S}_{z}[i]}^{\mathcal {S}\backslash \mathcal{S}_{z}[i]}\Big|=\tbinom{|\mathcal{S}|-1}{r}\!\!\sum\limits_{s=|\mathcal S|-r}^{|\mathcal S|}\frac{\binom{r}{|\mathcal S|-s}}{\binom{K}{r}\binom{K}{s}}Q_sN>0, \notag
 \end{align} 
which means the deficit condition is satisfied.

Then, consider the feasible condition $\mathscr{E}_2(i,\mathcal{S},z)$. For each Node $i\in \mathcal{S}$ in round $z=r$, similar to Appendix \ref{Ca}, we can find a solution of \eqref{feasible} for Node $i$ below
\begin{align}
 \beta_{j,\mathcal S_1}&=\frac{(|\mathcal{S}|\!-\!1)}{z}|\mathcal{V}_{\mathcal {S}_1}^{\mathcal {S} \backslash \mathcal {S}_1}|=\frac{N(|\mathcal{S}|-1)}{r}\sum\limits_{s=|\mathcal S|-r}^{|\mathcal S|}\frac{\binom{r}{|\mathcal S|-s}}{\binom{K}{r}\binom{K}{s}}Q_s, \notag\\
 &\forall j\in \mathcal S_1, i\not\in \mathcal{S}_1, \mathcal S_1\subseteq \mathcal{S},|\mathcal S_1|=z.
\end{align}
 The aforementioned $\beta_{j,\mathcal{S}_1}$ is a solution since
 \begin{align*}
 &\sum_{\substack{j:j\in S_1}}\beta_{j,\mathcal S_1} = (|\mathcal{S}|\!-\!1) |\mathcal {V}_{\mathcal {S}_1}^{\mathcal {S} \backslash \mathcal {S}_1}|,\ \forall S_1\subseteq \mathcal{S}\!:\!|S_1|=z,i\not\in\mathcal{S}_1\notag\\
 &\sum_{\mathcal S_1:j\in \mathcal S_1,i\not\in \mathcal S_1}\beta_{j,\mathcal S_1} = \tbinom{|\mathcal{S}|-2}{z-1}\frac{(|\mathcal{S}|-1)}{z}|\mathcal{V}_{\mathcal {S}_1}^{\mathcal {S} \backslash \mathcal {S}_1}|= \tbinom{|\mathcal{S}|-1}{z}|\mathcal{V}_{\mathcal {S}_1}^{\mathcal {S} \backslash \mathcal {S}_1}|= n_{j,z}^{\mathcal{S}},\ \ \ \forall j\in \mathcal{S},j\neq i.
 \end{align*}
As both feasible and deficit conditions are satisfied, our FSCT is optimal according to Theorem \ref{ThmFewshot}.

\section{Correctness of FSCT}\label{CorrectFSCT}

\begin{figure*}
  \centering
  \includegraphics[scale = 0.58]{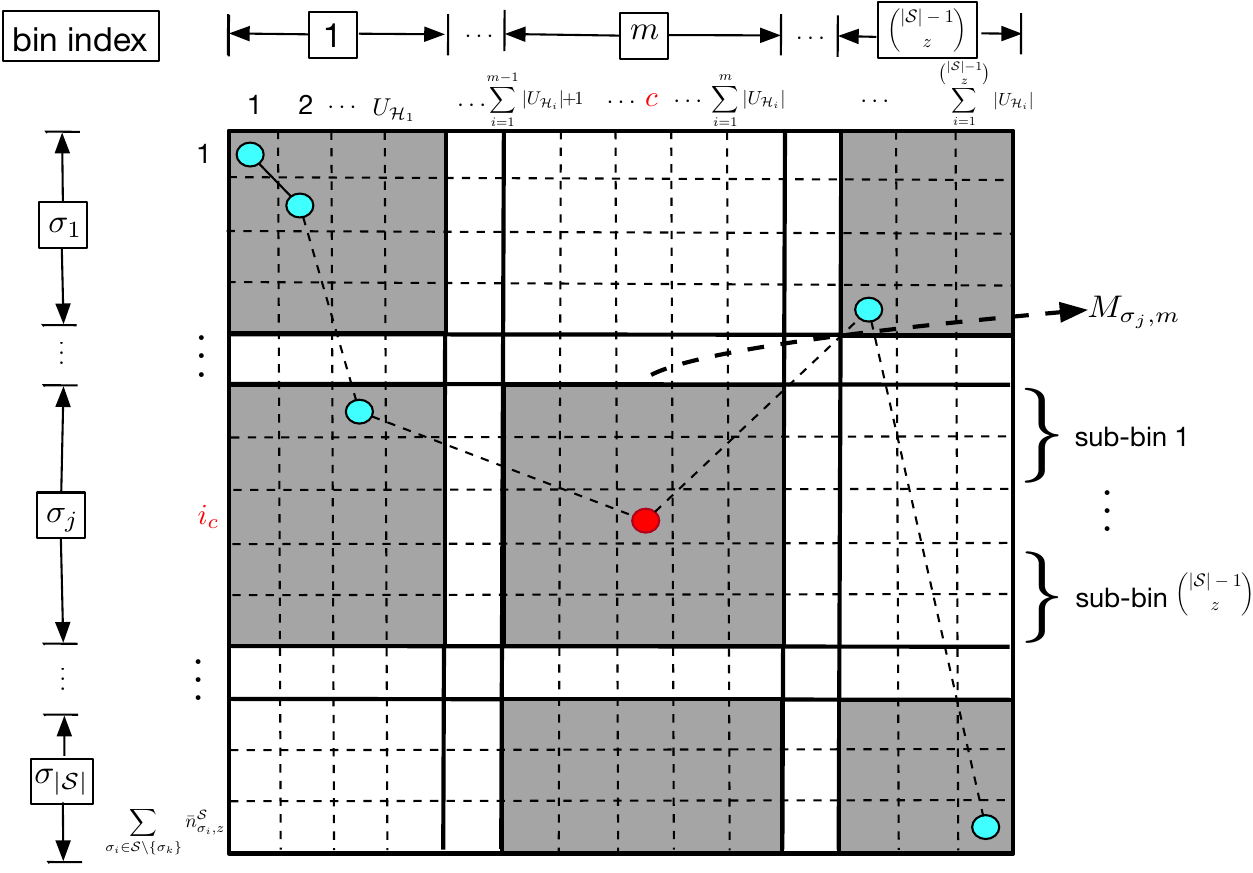}
  \center \caption{
   Structure of decoding matrix $\mathbf{M}$, which contains $R$ rows and $C$ columns. Each dashed box represents an element of the matrix $\mathbf{M}$. $\mathbf{M}$ is divided into submatrices $M_{\sigma_i,m}$ with $\sigma_i\in\mathcal{S}\backslash\{\sigma_k\}$ and $m\in[\tbinom{|\mathcal{S}|-1}{z}]$. The large grey rectangles are submatrices $M_{\sigma_i,c}$ with $\sigma_i\in\mathcal{H}_m$ whose elements are all randomly generated and the large white rectangles are submatrices with $\sigma_i\not\in\mathcal{H}_m$ whose elements are all zero. The numbers in the boxes are row and column bin indices. The numbers along with the row and column indicate the row index and column index of the matrix respectively. For example, the submatrix $M_{\sigma_j,m}$ has $\bar{n}_{\sigma_j,z}^{\mathcal{S}}$ rows and $|\mathcal{V}_{\mathcal{H}_m}^{\mathcal{S}\backslash \mathcal{H}_m}|$ columns. The right-side numbers are the indices of sub-bins partitioned based on the solution of the feasible conditions: \{$\beta_{\sigma_i,\mathcal T}\in \mathbb{Q}^+:\sigma_i\in \mathcal T, \sigma_k\not\in \mathcal{T}, \mathcal T\subseteq \mathcal{S},|\mathcal T|=z$\} and $\beta_{\sigma_i,\mathcal{T}}=0$ if $\sigma_i\not\in \mathcal T$. For example, the $\sigma_j$-th row bin of the matrix is partitioned into $\binom{|\mathcal{S}|-1}{z}$ sub-bins where the $l$-th part for $l\in[\binom{|\mathcal{S}|-1}{z}]$ contains $\beta_{\sigma_j,\mathcal{H}_l}$ rows. The cyan points (and the red point) in the figure are the elements we choose for the non-zero path $(m_{i_1,1},m_{i_2,2},\cdots,m_{i_C,C}):m_{i_c,c}\neq 0$ for $c\in[C]$. For example, we can choose the element in the $p$-th row of the $m$-th sub-bin in the $\sigma_j$-th row bin for the $c$-th column, where $m$ is the index that $\sum_{i=1}^{m-1}{U}_{\mathcal{H}_i}+1\leq c\leq \sum_{i=1}^{m}{U}_{\mathcal{H}_i}$, $\sigma_j=k_t\in\mathcal{H}=\{k_1,\cdots,k_z\}$, which satisfy
   $\!\!\!\!\!\!\!\sum\limits_{\sigma_i\in{\{k_1,\cdots,k_{t-1}\}}}\!\!\!\!\!\!\!\!\!\!\!\beta_{\sigma_i,\mathcal{H}_m}\!+\!1\leq c\!-\!\sum_{i=1}^{m-1}\!\!{U}_{\mathcal{H}_i}\leq \!\!\!\!\!\!\!\!\!\sum\limits_{\sigma_i\in{\{k_1,\cdots,k_{t}\}}}\!\!\!\!\!\!\!\!\beta_{\sigma_i,\mathcal{H}_m}$, and $p=c-\sum_{i=1}^{m-1}|\mathcal{U}_{\mathcal{H}_j}^{\mathcal{S}\backslash\mathcal{H}_j}|-\!\!\!\!\!\!\sum\limits_{\sigma_i\in{\{k_1,\cdots,k_{t-1}\}}}\!\!\!\!\!\!\!\!\!\!\!\beta_{\sigma_i,\mathcal{H}_m}$, which is the red point in the figure.
  }\label{matrix}
  \end{figure*}

 For cluster $\mathcal{S}=\{\sigma_1,\sigma_2,\cdots, \sigma_{|\mathcal{S}|}\}$, we prove that each receiver $\sigma_k$ in the round $z$ can successfully decode its desired IV segments by jointly solving the LCs sent from other nodes in $\mathcal{S}\backslash \{\sigma_k\}$.
 
 We index the $\binom{|\mathcal{S}|-1}{z}$ subsets of $\mathcal{S}$ of size $z$ which do not contain Node $\sigma_k$ (i.e., $\{\mathcal{T}:\mathcal{T}\subseteq\mathcal{S},|\mathcal{T}|=z,\sigma_k\not\in \mathcal{T}\}$) as $\{\mathcal{H}_1,\cdots,\mathcal{H}_{\binom{|\mathcal{S}|-1}{z}}\}$. Note that FSCT cuts each IV into $|\mathcal{S}|-1$ segments in cluster $\mathcal{S}$. For convenience, we use ${U}_{\mathcal{H}_i}$ to denote the number of \emph{segments} generated from IVs in $\mathcal{V}_{\mathcal{H}_i}^{\mathcal{S}\backslash\mathcal{H}_i}$. Recall that after receiving the message blocks from other nodes in $\mathcal{S}$, i.e., $\{X_{k,z}^{\mathcal{ S}}[{i}]:k\in\mathcal{S}\backslash \{\sigma_k\}, i\in[\bar{n}_{k,z}^{\mathcal{S}}]\}$, receiver $\sigma_k\in\mathcal{S}$ first removes the local segments $\big\{{U}_{\mathcal{T}}^{\mathcal {S} \backslash \mathcal{T}}:\mathcal{T}\subseteq\mathcal{S},|\mathcal{T}|=z,\sigma_k\in\mathcal{T} \big\}$ from the messages to generate $R\triangleq\sum_{k\in\mathcal{S}\backslash \{\sigma_k\}}\bar{n}_{k,z}^{\mathcal{S}}$ numbers of LCs consisting of $C\triangleq \sum_{i=1}^{\binom{|\mathcal{S}|-1}{z}}{U}_{\mathcal{H}_i}$ unknown variables. 
  Let matrix $\mathbf{M}\in \mathbb{R}_{R\times C}$ denote the coefficient matrix of received LCs after removing local segments, which is shown in Fig. \ref{matrix}. 
 
 The $R$ rows of the $\mathbf{M}$ will be divided into $|\mathcal{S}|-1$ \emph{row bin}, with the {row bin} indices $\sigma_1,\ldots,\sigma_{k-1},\sigma_{k+1}\ldots, \sigma_{|\mathcal{S}|}$. Each row bin $\sigma_i$ contains $n_{\sigma_i,z}^{\mathcal{S}}$ rows, which represent the number of LCs sent by Node $\sigma_i$ in round $z$ of cluster $\mathcal{S}$. Each row bin $\sigma_i$, for $i\in\mathcal{S}\backslash\{k\}$, is further partitioned into $\binom{|\mathcal{S}|-1}{z}$ sub-bins, indexed as $\{1,2,\cdots, \binom{|\mathcal{S}|-1}{z}\}$, with the $l$-th sub-bin ($l\in[\tbinom{|\mathcal{S}|-1}{z}]$) containing $\beta_{\sigma_i,\mathcal H_l}$ rows, where $\beta_{\sigma_i,\mathcal H_l}$ is the non-negative solution of \eqref{feasible}: \{$\beta_{\sigma_i,\mathcal T}\in \mathbb{Q}^+:\sigma_i\in \mathcal T, \sigma_k\not\in \mathcal{T}, \mathcal T\subseteq \mathcal{S},|\mathcal T|=z$\}, and let $\beta_{\sigma_i,\mathcal T}=0$ if $\sigma_i\not\in \mathcal{T}$. 
  Note that from the second constraint $\sum_{\mathcal T:\sigma_i\in \mathcal T,\sigma_k\not\in \mathcal T}\beta_{\sigma_i,\mathcal T} \leq \bar{n}_{\sigma_i,z}^{\mathcal{S}}$ in \eqref{feasible}, we can guarantee that there are enough rows to do partition, and the residual rows after the partition can be brought into the last sub-bin.
 
  The $C$ columns of the $\mathbf{M}$ can be divided into $\binom{|\mathcal{S}|-1}{z}$ \emph{column bin}, and the $m$-th {column bin} contains $|U_{\mathcal{H}_m}|$ columns which represent the number of segments in $\mathcal{V}_{\mathcal{H}_m}^{\mathcal{S}\backslash\mathcal{H}_m}$.
 
  Now we split the matrix $\mathbf{M}$ into $\binom{|\mathcal{S}|-1}{z}\times (|\mathcal{S}|-1)$ submatrices $\{M_{\sigma_i,m}:\sigma_i\in\mathcal{S}\backslash\{\sigma_k\},m\in[\binom{|\mathcal{S}|-1}{z}]\}$ based on the row and column bins as shown in Fig. \ref{matrix}. Submatrix $M_{\sigma_i,m}$ has $\bar{n}_{\sigma_i,z}^{\mathcal{S}}$ rows and $|U_{\mathcal{H}_m}|$ columns. If $\sigma_i\in\mathcal{H}_m$, then the elements of submatrix $M_{\sigma_i,m}$ are the coefficients of the LCs for sending the segments in $\mathcal{V}_{\mathcal{H}_m}^{\mathcal{S}\backslash\mathcal{H}_m}$ by Node $\sigma_i$ (highlight as the grey rectangles in Fig. \ref{matrix}). If $\sigma_i\not\in\mathcal{H}_m$, since Node $\sigma_i$ did not encode these segments, the elements of $M_{\sigma_i,m}$ are all zero (highlight as the white rectangles in Fig. \ref{matrix}). 
 Note that the transmission strategy of FSCT guarantees that the feasible condition is satisfied in each round $z$ of cluster $\mathcal{S}$ as stated in Section \ref{AlwaysSatisfy}. 
  
 Now we need to prove that the linear system is solvable. The main idea is that we first find a non-zero path $(m_{i_1,1},m_{i_2,2},\cdots,$ $m_{i_C,C})$ where $m_{i_c,c}\neq 0$ for all $c\in[C]$ and $m_{i_c,c}$ is the element in row $i_c$ and column $c$ of $\mathbf{M}$ with all $i_c$ different, then select $C$ rows from $\mathbf{M}$ to form a square matrix $(\mathbf{M})^{'}$ with non-zero diagonal elements, and finally we view the determinant of $(\mathbf{M})^{'}$ as a non-zero polynomial of $(m_{i_1,1},m_{i_2,2},\cdots, m_{i_C,C})$ \cite{poly} so that $(\mathbf{M})^{'}$ is invertible with high probability by Schwartz-Zippel Lemma \cite{SZL}.
  
 Suppose the column $c\in[C]$ is in the $m$-th column bin. Then, it must satisfy
 \begin{IEEEeqnarray}{rCl}
  \sum_{i=1}^{m-1}{U}_{\mathcal{H}_i}+1\leq c\leq \sum_{i=1}^{m}{U}_{\mathcal{H}_i}.\nonumber
  \end{IEEEeqnarray}
  For this $m$, we can find $\mathcal{H}_m$ and represent its indices as $\mathcal{H}_m=\{k_1,\cdots,k_z\}$. Now we choose the row bin 
  \begin{subequations}
  \begin{IEEEeqnarray}{rCl}
  \sigma_j=k_t\in\mathcal{H}_m
 \end{IEEEeqnarray}
 such that 
 \begin{IEEEeqnarray}{rCl}
 \sum_{\sigma_i\in{\{k_1,\cdots,k_{t-1}\}}}\!\!\!\!\!\!\!\!\!\!\!\beta_{\sigma_i,\mathcal{H}_m}\!+\!1\leq c\!-\!\sum_{i=1}^{m-1}\!\!{U}_{\mathcal{H}_i}\leq \!\!\!\!\!\!\!\!\!\sum_{\sigma_i\in{\{k_1,\cdots,k_{t}\}}}\!\!\!\!\!\!\!\!\beta_{\sigma_i,\mathcal{H}_m}.~~\label{Findt}
 \end{IEEEeqnarray}
  \end{subequations}
 
 Note that by the first constraint in \eqref{feasible}, there always exists such $k_t\in\{k_1,\cdots,k_z\}$. Then, we choose the $m$-th sub-bin of the row bin $\sigma_j$, which contains $\beta_{\sigma_j,\mathcal{H}_m}$ rows. Finally, we choose the $p$-th row of the $m$-th sub-bin, where 
 \begin{align}
  p=c-\sum_{i=1}^{m-1}|\mathcal{U}_{\mathcal{H}_j}^{\mathcal{S}\backslash\mathcal{H}_j}|-\!\!\!\!\!\!\sum_{\sigma_i\in{\{k_1,\cdots,k_{t-1}\}}}\!\!\!\!\!\!\!\!\!\!\!\beta_{\sigma_i,\mathcal{H}_m}\label{Findp}
 \end{align}
 Thus, for each element $m_{i_c,c}, c\in[C]$ in the path, we choose $i_c$ as the $p$-th row of the $m$-th sub-bin of the row bin $\sigma_j$, i.e.,
 \begin{align}
  i_c=\sum\limits_{\sigma_{i}\in\{\sigma_1,\cdots,\sigma_j\}}\bar{n}_{\sigma_i,z}^{\mathcal{S}}+\sum\limits_{i=1}^{m-1}\beta_{\sigma_j,\mathcal{H}_i} + p,\label{59}
 \end{align}
 where $m,\sigma_j,p$ are defined above. 
 
 The choice above will lead to a unique row $i_c$ for each column $c$, i.e., $i_a\neq i_b$ if $a\neq b$ for all $a,b\in[C]$. If $i_a= i_b$, $m,t,p$ in \eqref{59} for $i_a,i_b$ will be the same. As $a=b=p+\sum_{i=1}^{m-1}{U}_{\mathcal{H}_i}+\!\!\!\!\!\!\!\!\sum\limits_{\sigma_i\in{\{k_1,\cdots,k_{t-1}\}}}\!\!\!\!\!\!\!\!\!\!\!\beta_{\sigma_i,\mathcal{H}_m}$, we can imply that $i_a\neq i_b$ for $a\neq b$.
 
 Each element $m_{i_c,c}$ is valid and randomly generated. From \eqref{Findt} and \eqref{Findp}, we have $1\leq p\leq \beta_{\sigma_j,\mathcal{H}_m}$. Hence, 
 \begin{align}
  i_c\geq\sum\limits_{\sigma_{i}\in\{\sigma_1,\cdots,\sigma_j\}}\bar{n}_{\sigma_i,z}^{\mathcal{S}}+\sum\limits_{i=1}^m\beta_{\sigma_j,\mathcal{H}_i}+1,
 \end{align}
 and
 \begin{align}
  i_c\leq\sum\limits_{\sigma_{i}\in\{\sigma_1,\cdots,\sigma_j\}}\bar{n}_{\sigma_i,z}^{\mathcal{S}}+\sum\limits_{i=1}^m\beta_{\sigma_j,\mathcal{H}_i}.
 \end{align}
 Thus, $m_{i_c,c}$ is in the submatrix $M_{\sigma_j,m}$ with $\sigma_j\in \mathcal \mathcal{H}_m$, which implies $m_{i_c,c}$ is randomly generated for each $c$.
 
 After we find the non-zero path $(m_{i_1,1},m_{i_2,2},\cdots, m_{i_C,C}):m_{i_c,c}\neq 0$, we select $C$ rows from $\mathbf{M}$ to form a square matrix $(\mathbf{M})^{'}$ with no non-zero elements along the diagonal:
 \begin{itemize}
  \item The $c$-th row of $(\mathbf{M})^{'}$ is the $(i_{c})$-th row of $\mathbf{M}$.
  \item The diagonal elements of $(\mathbf{M})^{'}$: $m_{c,c}^{'}=m_{i_{c},c}$ is randomly generated.
 \end{itemize}
 
 View the determinant of $(\mathbf{M})^{'}$ as a non-zero functions of $(m^{'}_{1,1},m^{'}_{2,2},\cdots, m^{'}_{C,C})$. By the Schwartz-Zippel Lemma, the probability of $(\mathbf{M})^{'}$ being invertible is high as the field size is sufficiently large.

 \begin{table*}
  \caption{List of notations in Appendix \ref{CorrectFSCT}}
  \centering\label{notationCorrFSCT}
  \begin{tabular}{|c||c|} \hline
   \textbf{Notation} & \textbf{Description} \\\hline
   $\{\mathcal{H}_1,\cdots,\mathcal{H}_{\binom{|\mathcal{S}|-1}{z}}\}$ & Subsets of $\mathcal{S}$ of size $z$ not containing $\sigma_k$, i.e., $\{\mathcal{T}:\mathcal{T}\subseteq\mathcal{S},|\mathcal{T}|=z,\sigma_k\not\in \mathcal{T}\}$ \\\hline
   ${U}_{\mathcal{H}_i}$ & Number of \emph{segments} generated from IVs in $\mathcal{V}_{\mathcal{H}_i}^{\mathcal{S}\backslash\mathcal{H}_i}$, i.e., ${U}_{\mathcal{H}_i}=(|\mathcal{S}|-1)\big|\mathcal{V}_{\mathcal{H}_i}^{\mathcal{S}\backslash\mathcal{H}_i}\big|$\\\hline
   $R$ & Total number of LCs of $(z,\mathcal{S})$-mapped IVs received by Node $\sigma_k$ , i.e., $R=\sum\limits_{k\in\mathcal{S}\backslash \{\sigma_k\}}\bar{n}_{k,z}^{\mathcal{S}}$ \\\hline
   $C$ & Number of unknown segments related to $(z,\mathcal{S})$-mapped IVs, i.e., $C=\sum\limits_{i=1}^{\binom{|\mathcal{S}|-1}{z}}{U}_{\mathcal{H}_i}$ \\\hline
   \tabincell{c}{$\beta_{\sigma_i,\mathcal H_l}$\\ for $\sigma_i\in\mathcal{S}\backslash \sigma_k$, $l\!\in\![\tbinom{|\mathcal{S}|-1}{z}]$} & Non-negative solution of feasible condition \eqref{feasible}\\ \hline
   \tabincell{c}{Row bin $\sigma_i$\\ for $\sigma_i\in\mathcal{S}\backslash \sigma_k$} & \tabincell{c}{Row $\sum\limits_{\sigma_j\in\mathcal{S}\backslash \sigma_k ,j< i}\bar{n}_{j,z}^{\mathcal{S}}+1$ to Row $\sum\limits_{\sigma_j\in\mathcal{S}\backslash \sigma_k,j\leq i}\bar{n}_{j,z}^{\mathcal{S}}$} \\\hline
   \tabincell{c}{Sub-bin $l$ of row bin $\sigma_i$\\ for $i\in[|\mathcal{S}|]\backslash k$, $l\in[\binom{|\mathcal{S}|-1}{z}]$} & \tabincell{c}{Row $\sum\limits_{\sigma_j\in\mathcal{S}\backslash \sigma_k ,j< i}\!\!\!\!\bar{n}_{j,z}^{\mathcal{S}}\!\!+\!\sum\limits_{n=1}^{l}\beta_{\sigma_i,\mathcal{H}_n}\!\!+\!1$ to Row $\sum\limits_{\sigma_j\in\mathcal{S}\backslash \sigma_k ,j< i}\!\!\!\!\bar{n}_{j,z}^{\mathcal{S}}\!\!+\!\sum\limits_{n=1}^{l+1}\beta_{\sigma_i,\mathcal{H}_n}\!\!$} \\\hline
   \tabincell{c}{Column bin $m$\\ for $m\in[\binom{|\mathcal{S}|-1}{z}]$} & \tabincell{c}{Column $\sum\limits_{i=1}^{m-1}{U}_{\mathcal{H}_i}+1$ to column $\sum\limits_{i=1}^{m}{U}_{\mathcal{H}_i}$} \\\hline
  \end{tabular}
 \end{table*}

\section*{Acknowledgment}
The authors would like to thank Prof. Manolis Tsakiris and Dr. Zhenchao Ge for insightful discussions.

\ifCLASSOPTIONcaptionsoff
  \newpage
\fi



\bibliography{refer}
\bibliographystyle{IEEEtran}
\end{document}